\pdfoutput=1
\documentclass{article}

\usepackage{microtype}
\usepackage{graphicx}
\usepackage{subfigure}
\usepackage{booktabs} 
\usepackage{enumitem}

\usepackage{hyperref}



\usepackage[accepted]{icml2024}

\usepackage{amsmath}
\usepackage{amssymb}
\usepackage{mathtools}
\usepackage{amsthm}
\allowdisplaybreaks

\usepackage[capitalize,noabbrev]{cleveref}

\DeclarePairedDelimiter{\dotp}{\langle}{\rangle}

\DeclareMathOperator*{\argmax}{arg\,max}

\newcommand{\norm}[1]{\left\lVert#1\right\rVert}

\global\long\def\cS{\mathcal{S}}
\global\long\def\cA{\mathcal{A}}

\global\long\def\cX{\mathcal{X}}

\global\long\def\cM{\mathcal{M}}
\global\long\def\cZ{\mathcal{Z}}

\global\long\def\cR{\mathcal{R}}

\global\long\def\fG{\mathfrak{G}}

\global\long\def\E{\mathbb{E}}
\global\long\def\R{\mathbb{R}}

\global\long\def\KL{\text{KL}}
\global\long\def\TV{\text{TV}}
\global\long\def\ER{\overline{\mathbb{R}}}

\global\long\def\rgap{\text{RNEGap}}

\global\long\def\ne{\textup{NE}}
\global\long\def\dom{\operatorname{dom}}
\global\long\def\core{\operatorname{core}}
\global\long\def\Id{\operatorname{Id}}
\global\long\def\Var{\operatorname{Var}}

\global\long\def\wover{\overline{\omega}}

\global\long\def\gover{\overline{g}}

\global\long\def\V{\overline{V}}
\global\long\def\Q{\overline{Q}}

\global\long\def\VR{\widetilde{V}}
\global\long\def\QR{\widetilde{Q}}
\global\long\def\VRT{\widehat{V}}

\global\long\def\rbr{\ensuremath{\mathsf{br}}}

\global\long\def\pbei{\widehat{P}}
\global\long\def\brbei{\widehat{\boldsymbol{r}}}
\global\long\def\rbei{\widehat{r}}
\global\long\def\gbei{\widehat{G}}

\global\long\def\ba{\boldsymbol{a}}
\global\long\def\br{\boldsymbol{r}}
\global\long\def\bpi{\boldsymbol{\pi}}
\global\long\def\bpine{\boldsymbol{\pi}^{\dagger}}
\global\long\def\pine{\pi^{\dagger}}

\global\long\def\bigO{\mathcal{O}}

\global\long\def\norm#1{\left\Vert #1\right\Vert }

\global\long\def\brno{\boldsymbol{r}^{\star}}
\global\long\def\rno{r^{\star}}
\global\long\def\pno{P^{\star}}
\global\long\def\gno{G^{\star}}
\global\long\def\rset{\mathcal{R}}
\global\long\def\pset{\mathcal{P}}
\global\long\def\uset{\mathcal{U}}

\global\long\def\urset{\mathcal{U}^{r}}

\global\long\def\argmax{\operatorname*{arg\,max}}

\global\long\def\argsup{\operatorname*{arg\,sup}}
\global\long\def\arginf{\operatorname*{arg\,inf}}

\global\long\def\int{\text{int}}
\global\long\def\Bd{\text{Bd}}

\newcommand{\Real}{\mathbb{R}}

\theoremstyle{plain}
\newtheorem{theorem}{Theorem}[section]
\newtheorem{proposition}[theorem]{Proposition}
\newtheorem{lemma}[theorem]{Lemma}
\newtheorem{corollary}[theorem]{Corollary}
\theoremstyle{definition}
\newtheorem{definition}{Definition}[section]
\newtheorem{assumption}{Assumption}[section]
\theoremstyle{remark}

\newtheorem*{remark*}{Remark}
\usepackage[textsize=tiny]{todonotes}

\icmltitlerunning{Robustness and Regularization in Markov Games}

\begin{document}

\twocolumn[
\icmltitle{Roping in Uncertainty: 
Robustness and Regularization in Markov Games} 





\begin{icmlauthorlist}
\icmlauthor{Jeremy McMahan}{uw}
\icmlauthor{Giovanni Artiglio}{uw}
\icmlauthor{Qiaomin Xie}{uw}
\end{icmlauthorlist}

\icmlaffiliation{uw}{University of Wisconsin-Madison, USA}

\icmlcorrespondingauthor{Jeremy McMahan}{jmcmahan@cs.wisc.edu}

\icmlkeywords{Machine Learning, ICML}

\vskip 0.3in
]



\printAffiliationsAndNotice{}  

\begin{abstract}
We study robust Markov games (RMG) with $s$-rectangular uncertainty. We show a general equivalence between computing a robust Nash equilibrium (RNE) of a $s$-rectangular RMG and computing a Nash equilibrium (NE) of an appropriately constructed regularized MG. The equivalence result yields a planning algorithm for solving $s$-rectangular RMGs, as well as provable robustness guarantees for policies computed using regularized methods. However, we show that even for just reward-uncertain two-player zero-sum matrix games, computing an RNE is PPAD-hard. Consequently, we derive a special uncertainty structure called efficient player-decomposability and show that RNE for two-player zero-sum RMG in this class can be provably solved in polynomial time. This class includes commonly used uncertainty sets such as $L_1$ and $L_\infty$ ball uncertainty sets. 
\end{abstract}

\section{Introduction}

Offline reinforcement learning (RL) and RL in simulated environments are effective ways to deal with situations where traditional online RL would be too risky or costly. However, these approaches suffer from the sim-to-real gap in which slight differences in the models can lead to policies with poor performance in the true environment. To combat the sim-to-real gap, robust policies were studied using the framework of robust Markov decision processes (RMDP)~\cite{RMDP-original} and later robust Markov Games (RMG)~\cite{RMG-original}. Robust approaches have been effective in the real world, especially for navigating UAVs in mission-critical multi-agent environments \cite{RMARL-uav-application} and in queuing systems \cite{RMARL-queuing}. In practice, regularization has been a popular approach to improving the robustness and convergence of multi-agent RL algorithms with empirical success. 

A robust Markov game $(\cS,\{\cA_{i}\}_{i\in[N]},\pno,\brno,H,\uset)$ is defined by a standard Markov game (MG) $\gno:=(\pno,\brno)$, called the nominal game, and an uncertainty set $\uset:=\pset\times\urset$ that describes all possible models that could be realized. Here, $\pset$ is the set of possible transition kernels, and $\urset$ is the set of possible reward functions. A common solution concept for RMGs is the 
Markov-perfect robust Nash Equilibrium (MPRNE). A policy $\bpi$ is an MPRNE if, for each stage game $(h,s)$, $\bpi$ is a mutual best response assuming the worst-case model for each player.

Unlike classical RMDPs, solving RMGs is already difficult when only the reward function has uncertainty. Specifically, a single-stage game with reward uncertainty can capture arbitrary general-sum games and so is PPAD-hard to solve. Surprisingly, unlike traditional game theory, even the two-player zero-sum version of reward-uncertain RMGs with $|\cS| = H = 1$ and the minimal $(s,a)$-rectangularity assumption is PPAD-hard to solve. Similarly, two-player zero-sum RMGs with only transition uncertainty and $H=2$ periods are also PPAD-hard to solve. Thus, solving even simple RMGs is already a computational challenge.

Although many advances have been made for RMDPs, RMGs are much less understood. The seminal paper~\cite{RMG-original} devised algorithms to learn a robust NE (RNE) for RMGs but only proved asymptotic convergence of their methods. On the other hand, \citet{RMARL-double-pessimism} proposed provably sample-efficient algorithms to learn an RNE for the special case of $(s,a)$-rectangular RMGs, but their methods require an efficient planning oracle that does not currently exist in the literature. Creating such a planning oracle is one of the goals of this work. Lastly, adding a regularizer to the value function of an MG has shown promise to improve robustness empirically, but formal guarantees have not been shown in the multi-agent setting~\cite{efficient-regularizer}.

\paragraph{Our Contributions.} We study the computational complexity of computing MPRNE for RMGs with $s$-rectangular uncertainty. We show that computing an MPRNE of an RMG with $s$-rectangular uncertainty can be done by computing a Markov-perfect NE (MPNE) of an appropriately designed regularized MG. In particular, the regularizer corresponds to the support function of the uncertainty set for the stage game. 
Furthermore, we show that for most well-known regularization functions, such as entropy and $\ell_p$ norm regularizers, the set of MPNE for a regularized game corresponds exactly to the set of MPRNE for an RMG with an interpretable uncertainty. This fact implies that for common classes of regularizers and uncertainty sets, the problems of solving RMGs and regularized MGs are polynomial-time equivalent. Thus, any efficient off-the-shelf algorithm for regularized MGs can be used to efficiently compute robust policies, confirming the empirical phenomenon mathematically.

We also show that many useful classes of RMGs with $s$-rectangular reward uncertainty can be solved in polynomial time. As in classical game theory, our first step is to extend the notion of zero-sum games to the robust setting. Although the zero-sum property does not guarantee efficiency, our proof of computational hardness reveals a key structural bottleneck to efficiency: general-sum behavior is simulated whenever the support function output of the reward uncertainty set involves a product of each player's policy. In contrast, if the support function decomposes into separate parts for both players, $\sigma(\bpi) = \Omega_1(\pi_1) + \Omega_2(\pi_2)$, we show the equivalent regularized MG is also zero-sum. Thus, our planning algorithm runs in polynomial time so long as the uncertainty satisfies what we call the \emph{efficiently player-decomposable assumption}. This assumption permits many standard sets such as $L_1$ and $L_\infty$-ball uncertainty sets.

\subsection{Related Work}
\paragraph{Robust MDPs.} Robust MDPs have been studied under many different uncertainty structures. The original structure, called $(s, a)$-rectangularity, was first introduced in \cite{RMDP-original, RMDP-sa-rect}. Many attempts to generalize $(s, a)$-rectangularity have led to a rich family of rectangularity notions including $s$-rectangularity ~\cite{RMDP-s-rect-formulation, RMDP-s-rect}, $r$-rectangularity ~\cite{RMDP-r-rect-formulation, RMDP-r-rect}, $k$-rectangularity ~\cite{RMDP-k-rect}, and $d$-rectangularity ~\cite{RMDP-d-rectangularity}. Many standard MDP techniques have also been extended to the robust setting including dynamic programming~\cite{RMDP-RDP, RMDP-fast-bellman}, policy iteration~\cite{RMDP-RMPI, RMDP-efficient-L1}, policy gradient~\cite{RMDP-PG, RMDP-PG-convergence}, and function approximation~\cite{RMDP-kernel-based, RMDP-func-approx}. Regularized MDP techniques also successfully solve robust MDPs due to a general equivalence for many uncertainty sets~\cite{RMDP-regularize-robustness-original, RMDP-regularize-robustness, RMDP-Regularization} including both $(s, a)$ and $s$-rectangularity.

In the learning setting, standard RL approaches have been successfully ``robustified'' including model-based approaches~\cite{RRL-online}, Q-learning~\cite{DRRL-Q-learning}, policy gradient~\cite{RRL-PG, RRL-LSPG}, and kernel methods~\cite{RMDP-kernel-based}. Strong theoretical results have also pinned down the sample complexity of many methods~\cite{RRL-sample-complexity, DRRL-offline-pac, RRL-asymptotics}. In fact, \citet{RRL-adversarial} showed that robust learning is equivalent to learning in adversarial games and this is further exploited using game-theoretical techniques \cite{RM-RNE}.

\paragraph{Robust MGs.} Robust normal form games were first introduced by \citet{RM-robust-game-theory}. \citet{RM-gen-sum-reduction} showed that robust games can be reduced to general sum games, but computationally efficient methods have yet to be established. The notion of robustness has been extended to Markov games \cite{RMG-original, RMARL-queuing}. A sample efficient approach for learning robust policies under $(s,a)$-rectangularity is derived by \citet{RMARL-double-pessimism}, but their method relies on a planning oracle that has yet to be derived in the literature. Our work provides the efficient planning oracle needed to make those methods tractable and extends beyond just $(s, a)$-rectangularity. In contrast, ~\citet{RMARL-v-learning} addresses the problem of learning a robust CCE with low sample complexity whereas we focus on computing the stronger solution concept of robust NE.

\paragraph{Regularization in MDP and MGs.} Various regularization methods have been extensively used in MDPs \cite{RMDP-PG, geist2019regularized} and games~\cite{grill2019entropy,cen2021game_entropy,zhang2023learning_graphon,mertikopoulos2016learning}, with diverse motivations, such as improved exploration~\cite{lee2018tsallis}, stability~\cite{schulman2017proximal} and convergence~\cite{cen2021game_entropy,zhan2023PMD_regularized}. Popular regularizers include a variety of entropy functions and KL divergence. Recent works relate regularization to robustness in MDP/RL~\cite{brekelmans2022regularizer,eysenbach2021maximum,husain2021regularized}. In particular, \citet{RMDP-regularize-robustness} provides an equivalence between regularization and robustness in MDPs. However, the robustness-regularization duality is much less understood in games. Our work fills this gap and opens the path to efficient planning and learning algorithms for achieving robustness in games via regularization. 

\paragraph{Notation.} For an integer $N,$ we denote $[N]:=\{1,2,\ldots,N\}.$
We define the extended reals by $\ER=\R\cup\{-\infty,\infty\}$. For a given finite set $\cZ$, we denote by $\Delta(\cZ)$ the probability simplex
over $\cZ$, and denote by $\R^{\cZ}$ the class of real-valued functions
defined over $\cZ$. For a set $\cM\subset\R^{\cZ}$, the characteristic
function $\delta_{\cM}:\R^{\cZ}\rightarrow\ER$ is defined as $\delta_{\cM}(x)=0$
if $x\in\cM$ and $+\infty$ otherwise. The Legendre-Fenchel transform
of $\delta_{\cM}$ is the so-called support function $\sigma_{\cM}:\R^{\cZ}\rightarrow\ER,$
with $\sigma_{\cM}(y):=\sup_{x\in\cM}\dotp{x,y}$. For a compact set $\mathcal{W}\subset \R^n$, we denote the interior of $\mathcal{W}$ by $\int(\mathcal{W})\subset \mathcal{W},$ and the boundary of $\mathcal{W}$ by $\Bd(\mathcal{W})= \mathcal{W}\setminus \int(\mathcal{W}).$

With slight overload of notation, $\langle\cdot,\cdot\rangle$ denotes the standard inner product when the inputs are vectors, and denotes the Frobenius (also known as component-wise) inner product when the inputs are matrices of the same dimensions.
We will overload the functions such as $\log(\cdot)$ and $\exp(\cdot)$
to take vector inputs, meaning the function is applied in an entry-wise
manner. That is, for a vector $x=(x_{1},\ldots,x_{n})^{\top}\in\R^{n},$
the notation $\exp(x):=(\exp(x_{1}),\ldots,\exp(x_{n}))^{\top}\in\R^{n}$. For a matrix $B\in\R^{m\times n}$, we denote by $\norm B_{p\rightarrow q}=\sup\big\{ \norm{Bx}_{q}:x\in\R^{n}\text{ with }\norm x_{p}=1\big\} $ the operator norm on the space $\R^{m\times n},$ where $\norm{\cdot}_{p}$ denotes the vector $\ell_{p}$-norm. 
The dual to a norm $\norm{\cdot}$
is defined as $\norm v_{*}=\sup_{\norm u\leq1}\dotp{u,v}.$ For any
number $p\in[0,\infty]$, we use $p^{*}\in[0,\infty]$ to denote its
conjugate satisfying $\frac{1}{p}+\frac{1}{p^{*}}=1$. Therefore,
the dual norm of $\left\Vert \cdot\right\Vert _{p}$ is $\left\Vert \cdot\right\Vert _{p^{*}}.$

\section{Preliminaries}\label{sec: preliminaries}

In this work, we consider $H$-horizon Markov games (MG) of $N$ players, with a finite state space $\cS,$ and a finite action space $\cA_{i}$ for each player $i\in[N]$. We let $\cA:=\cA_{1}\times\cA_2\cdots\times \cA_N$ denote
the joint action space and let $\ba=(a_{1},\ldots,a_{N})\in\cA$ denote a joint action of $N$ players. Without loss of generality, we assume that the initial state $s_{1}$ is fixed
\footnote{For the case that the initial state is stochastic, one can add $s_{0}$
as the initial state and set the initial state distribution as the
transition kernel from $s_{0}$ to $s_{1}$.}.
For such a MG, we use $G=(P,\br)$ to represent the game model, where $P=\{P_{h}:\cS\times\cA\rightarrow\Delta(\cS)\}_{h\in[H]}$ is the transition kernel, 
and $\br=\{r_{i,h}\}_{i\in[N],h\in[H]}$
is the deterministic reward function, with $r_{i,h}(s,\ba)$
being the reward for player $i\in[N]$ given that the joint action $\ba$ is applied at state
$s$ at step $h$.

A Markovian policy for player $i$ is denoted by $\pi_{i}=\{\pi_{i,h}:\cS\rightarrow\Delta(\cA_{i})\}_{h\in[H]}$,
with $\pi_{i,h}(\cdot|s)$ being the the strategy of player $i$ at
state $s$ at step $h$. We use $\bpi:=(\pi_{1},\ldots,\pi_{N})$
to represent a \emph{product} joint policy of the $N$ players.
For each player $i\in[N]$, $\bpi_{-i}$ denotes the joint policy
of all players except player $i$.
Let $\Pi_{i}$ be the set of all Markov policies for player $i,$
and $\Pi:=\Pi_{1}\times\Pi_{2}\times\cdots\times\Pi_{N}$ be
the set of all product Markov joint policies of the $N$ players, and
$\Pi_{-i}=\Pi_{1}\times\cdots\times\Pi_{i-1}\times\Pi_{i+1}\cdots\times\Pi_{N}$. We overload the notation $\Delta(\cA):=\Delta(\cA_1)\times \cdots \Delta(\cA_N)$ to represent the space of all product joint policy of the $N$ players at every single state. 

\subsection{Robust Markov Game}

\paragraph{Robust solution concept.} Recall that a robust Markov game (RMG) $(\cS,\{\cA_{i}\}_{i\in[N]},\pno,\brno,H,\uset)$ is defined by a standard MG $\gno:=(\pno,\brno)$, called the nominal game, and an uncertainty set $\uset:=\pset\times\urset$ that describes all possible models. Here, $\pset$ is the set of possible transition kernels, and $\urset$ is the set of possible reward functions. 
Given a product joint policy $\bpi$, for each player $i\in[N]$ we define the \emph{robust value functions }of $\bpi$ with respect
to the uncertainty set $\uset$ as follows: $\forall s\in\cS,\mu\in\Delta(\cA),h\in[H],$
\begin{align}
V_{i,h}^{\bpi}(s) & :=\inf_{G\in\uset}\V_{i,h}^{\bpi}(s,G),\label{eq:V_RMG}\\
Q_{i,h}^{\bpi}(s,\mu) & :=\inf_{G\in\uset}\E_{\ba\sim\mu}\left[\Q_{i,h}^{\bpi}(s,\ba,G)\right],\label{eq:Q_RMG}
\end{align}
where $\V_{i,h}^{\bpi}(\cdot,G)$ and $\Q_{i,h}^{\bpi}(\cdot,\cdot,G)$
are the standard value functions for the MG $G=(P,\br)$ (cf.\ equations \eqref{eq:V_MG}-\eqref{eq:Q_MG} in Appendix \ref{sec:proof_section_RMG}).
Similar to the Bellman equation for standard MG, we have a robust Bellman equation, as stated in Proposition \ref{prop:Bellman}. 

We let $V_{i,h}^{\dagger,\bpi_{-i}}(s)$ denote the optimal value
for player $i$ starting from state $s$ at step $h$, given a product
Markov joint policy $\bpi_{-i}$ of all players except player $i$,
i.e., 
\begin{equation}
V_{i,h}^{\dagger,\bpi_{-i}}(s)=\sup_{\pi'_{i}\in\Pi_{i}}V_{i,h}^{\pi'_{i}\times\bpi_{-i}}(s),\;\;\forall s\in\cS,h\in[H].\label{eq:V_BR}
\end{equation}
A policy $\pi_{i}$ that attains the optimal value $V_{i,h}^{\dagger,\bpi_{-i}}(s),$ for all $s\in\cS,h\in[H]$ is a robust best response policy to a given $\bpi_{-i}.$ 
It is well-known that when $\bpi_{-i}$ is Markovian, the best response amongst all history-dependent policies is Markovian, as it reduces to solving a single-agent robust MDP problem \cite{RMDP-RDP}. Our work will focus on Markov policies $\pi_{i}$ as above.
Compared with the best response policy in standard MGs, the robust best response policy maximizes the \emph{robust }value function of each player $i$ given other players' policy $\bpi_{-i}$, leading to the notion of \emph{robust Nash equilibrium} \cite{RMG-original,RMARL-double-pessimism}. 

\begin{definition}
\label{def:RNE} A joint product policy $\bpi=\{\bpi_{h}\}_{h\in[H]}$ is a \emph{Markov perfect robust Nash equilibrium
}if it holds that 
\[
V_{i,h}^{\bpi}(s)=V_{i,h}^{\dagger,\bpi_{-i}}(s), \qquad\forall i\in[N], h \in [H], s \in \cS.
\]
\end{definition}

{\bf Rectangularity.} It is common to choose the uncertainty
set centered around the nominal model $\gno:=(\pno,\brno)$.
In this work, we consider reward uncertainty sets of the form $\urset=\brno+\rset$. 
We allow the reward uncertainty sets to potentially depend on
the players' policy $\bpi\in\Pi$, denoted
by $\urset(\bpi)=\brno+\rset(\bpi).$ When the context is clear, we
will drop the notation $(\bpi)$ for the ease of exposition. 
As the robust MDP literature \cite{RMDP-s-rect, RMDP-regularize-robustness, RMDP-sa-rect}, we consider uncertainty sets that satisfy certain \emph{rectangular condition}. 

\begin{definition}
\label{assu:rectangular}[$s$-rectangular Uncertainty Set] The uncertainty
sets $\uset:=\pset\times\urset$ with $\urset=\brno+\rset$ satisfy
\begin{align*}
\pset &=\times_{(s,h)\in\cS\times [H]}\pset_{s,h},\quad
\uset^{r} =\times_{(i,s,h)\in[N]\times\cS\times [H]}\uset_{i,s,h}^{r},
\end{align*}
where $\pset_{s,h}\subset \R^{\cS\times \cA}$ and $\uset_{i,s,h}^{r}=\rno_{i,h}(s,\cdot)+\rset_{i,s,h}\subset\R^{\cA}$ are closed, convex sets.
\end{definition}

\begin{definition}
\label{assu:(s,a)-rectangular}[$(s,a)$-Rectangular Uncertainty Set] A special case of $s$-rectangularity is a $(s,a)$-rectangular uncertainty set:
\begin{align*}
\pset & =\times_{(s,\ba,h)\in\cS\times\cA\times [H]}\pset_{s,\ba,h}, \\
\uset^{r} &=\times_{(i,s,\ba,h)\in[N]\times\cS\times \cA \times [H]}\uset_{i,s,\ba,h}^{r},
\end{align*}
where $\pset_{s,\ba,h}\subset\Delta(\cS)$ and $\uset_{i,s,\ba,h}^{r}=\rno_{i,h}(s,\ba)+\rset_{i,s,\ba,h}\subset\R$ are closed, convex sets.
\end{definition}

We remark that the state-action
value function defined in \eqref{eq:Q_RMG} is different from the classical
setting. The new definition here allows us
to provide a unified framework for both $s$-rectangular and $(s,a)$-rectangular uncertainty sets.

{\bf Robust suboptimality gap. }
Given a joint policy $\bpi$, for each player $i\in[N]$, we define the robust suboptimality gap at step $h$ as
\begin{equation}
\rgap_{i,h}(\bpi,s):=V_{i,h}^{\dagger,\bpi_{-i}}(s)-V_{i,h}^{\bpi}(s).\label{eq:RNEgap-1}
\end{equation}
That is, the RNE gap measures the suboptimality gap of player $i$'s
policy $\pi_{i}$ against its robust best response policy given all
other players' policy $\bpi_{-i}.$ It is clear that $\rgap_{i,h}(\bpi,s)\geq0$
for all product joint $\bpi$ and all $s\in\cS$ and $h\in [H]$. For any MPRNE policy
$\bpi^{*}$, we have $\rgap_{i,h}(\bpi^*,s)=0$.

\paragraph{Existence of RNE.} 
Recall that the reward uncertainty sets can depend on the players' policy
$\bpi\in\Pi$, denoted by $\urset(\bpi).$
We consider the following assumption on the reward uncertainty sets,
which ensures the existence of robust Nash equilibrium. 
We first define the continuity of a point-to-set function. Consider
a function $f$ that maps each $z\in\R^{d_{1}}$ to a set in $\R^{d_{2}}$.
The function $f$ is continuous at $z\in\R^{d_{1}}$ if it
satisfies the following: for each $\epsilon>0$, there exists $\delta(\epsilon)>0$
such that for all $z'\in\R^{d_{1}}$ with $\left\Vert z-z'\right\Vert _{\infty}<\delta(\epsilon),$ it holds that
$D\big(f(z),f(z')\big):=\max\big\{\sup_{y\in f(z)}\inf_{y'\in f(z')}\left\Vert y-y'\right\Vert _{\infty},\\ \sup_{y'\in f(z')}\inf_{y\in f(z)}\left\Vert y-y'\right\Vert _{\infty}\big\}<\epsilon.$

\begin{assumption}
\label{assu:reward} Consider $s$-rectangular uncertainty set. For each $\bpi\in\Pi,$ the reward uncertainty
set $\uset^{r}(\bpi)=\times_{(i,s,h)\in[N]\times\cS\times [H]}\uset_{i,s,h}^{r}(\bpi),$
where $\uset_{i,s,h}^{r}(\bpi)=\rno_{i,h}(s,\cdot)+\rset_{i,s,h}(\bpi)\subset\R^{\cA}$
satisfies the following for all $(i,s,h)\in[N]\times\cS\times H$:
\begin{enumerate}
\item Bounded game value: There exists a constant $L_{r}$
such that for each $r\in\uset_{i,s,h}^{r}(\bpi),$ it holds that $L_{r}\leq\E_{\ba\sim\bpi(\cdot\mid s)}\left[r(\ba)\right].$
\item Convexity: The support function $\sigma_{\rset_{i,s,h}(\bpi)}:\R^{\cA}\rightarrow(-\infty,+\infty]$
of $\rset_{i,s,h}(\bpi)$, defined as $\sigma_{\cR_{i,s,h}(\bpi)}(y):=\sup_{x\in\cR_{i,s,h}(\bpi)}\left\langle x,y\right\rangle $,
satisfies that $\sigma_{\rset_{i,s,h}(\bpi)}\big(-\pi_{i,h}(s)\bpi_{-i,h}^{\top}(s)\big)$
is convex in $\pi_{i,h}(s)$ and continuous at all $\bpi\in\Bd(\Pi).$
\item Continuity: The set $\rset_{i,s,h}(\cdot)$ is continuous at all $\bpi\in\int(\Pi)$;
and $\sup_{y\in\rset_{i,s,h}(\bpi)}\left\Vert y\right\Vert _{\infty}<\infty$
for all $\bpi\in\int(\Pi)$.
\end{enumerate}
\end{assumption}

We remark that a special reward uncertainty set that is policy-independent and bounded satisfies Assumption \ref{assu:reward} \cite{RM-robust-game-theory}.
Such uncertainty set has been considered in discounted Markov games \cite{RMARL-queuing,RMG-original}. In this paper, we consider
more general reward uncertainty sets. Importantly, the boundedness
condition in Assumption \ref{assu:reward} does not require the uncertainty
set to be uniformly bounded. This relaxation allows us to consider
uncertainty sets based on log-barrier functions.

The following theorem states that a robust Nash equilibrium (RNE) always exists. The proof is provided in Section~\ref{sec:proof_existence_rne}. A similar existence result has been proved for the case with only $(s,a)$-rectangular transition uncertainty \citep{RMARL-double-pessimism}.

\begin{theorem}
\label{thm:Existence_RNE}(Existence of RNE). Given a RMG $(\cS,\{\cA_{i}\}_{i\in[N]},\pno,\brno,H,\uset)$
with $s$-rectangular uncertainty set $\uset$ satisfying Assumption \ref{assu:reward}, the robust
Nash equilibrium defined in Definition (\ref{def:RNE}) always exists.
Moreover, a joint policy $\bpine=\{\bpine_{h}\}_{h\in[H]}$ defined
as follows is an MPRNE: 
\[
\bpine_{h}(\cdot\mid s)\in\ne\left(\{Q_{i,h}^{\bpine}(s,\cdot)\}_{i\in[N]}\right),\forall s\in\cS,h\in[H],
\]
where $\ne(\cdot)$ denotes the Nash equilibrium of a general-sum,
normal-form game. 
\end{theorem}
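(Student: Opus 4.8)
The plan is to reduce the statement to a Kakutani fixed-point argument on the product policy space $\Pi$ whose fixed points are exactly the stage-wise Nash policies in the theorem, and then to check, via the robust Bellman equation, that any such fixed point is an MPRNE. The setup step is to use Proposition~\ref{prop:Bellman} together with $s$-rectangularity: since the worst-case model decomposes over $(s,h)$, the robust value $V_{i,h}^{\bpi}(s)$ is a one-step lookahead whose stage payoff at $(s,h)$ depends only on the stage strategies, and $Q_{i,h}^{\bpi}(s,\cdot)$ is precisely that stage payoff. Unpacking $\uset^{r}=\brno+\rset$ and the support-function identity gives, for $\mu\in\Delta(\cA_i)$,
\[
\inf_{r\in\uset_{i,s,h}^{r}}\E_{\ba\sim\mu\times\bpi_{-i,h}(s)}[r(\ba)] \;=\; \dotp{\mu\bpi_{-i,h}^{\top}(s),\,\rno_{i,h}(s,\cdot)}\;-\;\sigma_{\rset_{i,s,h}(\bpi)}\!\big(-\mu\bpi_{-i,h}^{\top}(s)\big),
\]
so Assumption~\ref{assu:reward}(2) makes $\mu\mapsto Q_{i,h}^{\bpi}(s,\mu\times\bpi_{-i,h}(s))$ concave, while Assumption~\ref{assu:reward}(1) and~(3) (with compactness of $\pset_{s,h}$) make it finite-valued. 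Backward induction in $h$ then also yields, for each fixed $\bpi_{-i}$, a robust Bellman optimality equation for $V_{i,h}^{\dagger,\bpi_{-i}}$ that is attained by some Markov best response.

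Next I would define $T:\Pi\to 2^{\Pi}$ by $T(\bpi)=\prod_{i\in[N]}T_i(\bpi)$, where $T_i(\bpi)$ is the set of $\pi'_i\in\Pi_i$ with $\pi'_{i,h}(\cdot\mid s)\in\argmax_{\mu\in\Delta(\cA_i)}Q_{i,h}^{\bpi}(s,\mu\times\bpi_{-i,h}(s))$ for every $(s,h)$. A fixed point $\bpine\in T(\bpine)$ then has, at each $(s,h)$, every player best-responding to the others in the stage game $\{Q_{i,h}^{\bpine}(s,\cdot)\}_{i\in[N]}$, i.e.\ $\bpine_h(\cdot\mid s)\in\ne(\{Q_{i,h}^{\bpine}(s,\cdot)\}_{i\in[N]})$. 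To invoke Kakutani's theorem I would verify: $\Pi$ is nonempty, compact, and convex; $T(\bpi)$ is nonempty because each per-state objective is a finite continuous function maximized over a compact simplex; $T(\bpi)$ is convex because the $\argmax$ of a concave function is convex --- this is exactly why Assumption~\ref{assu:reward}(2) is required and why I take products of best responses rather than the (generally nonconvex) stage Nash sets; and $T$ has closed graph, which by Berge's maximum theorem reduces to continuity of $\bpi\mapsto Q_{i,h}^{\bpi}(s,\cdot)$. That continuity in turn reduces to continuity of the support functions of the reward uncertainty sets along directions $-\mu\bpi_{-i,h}^{\top}(s)$: Assumption~\ref{assu:reward}(3) supplies it on $\int(\Pi)$, and Assumption~\ref{assu:reward}(2) supplies it on $\Bd(\Pi)$, which is the delicate case because $\rset_{i,s,h}(\bpi)$ may be unbounded there. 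Kakutani's theorem then produces $\bpine\in T(\bpine)$.

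Finally I would show $\bpine$ is an MPRNE. The inequality $V_{i,h}^{\bpine}(s)\le V_{i,h}^{\dagger,\bpine_{-i}}(s)$ is immediate from the definition~\eqref{eq:V_BR} of $V^{\dagger}$. For the reverse inequality I would combine the fixed-point property with the robust Bellman equation: since $\pine_{i,h}(\cdot\mid s)$ maximizes the stage payoff against $\bpine_{-i,h}(s)$ with continuation $V_{i,h+1}^{\bpine}$, the value $V_{i,h}^{\bpine}(s)$ solves the robust Bellman optimality equation for player $i$'s best-response robust MDP against $\bpine_{-i}$; arguing by backward induction from $h=H$ (where both sides equal the terminal payoff) and using uniqueness of the solution of that optimality equation gives $V_{i,h}^{\bpine}(s)=V_{i,h}^{\dagger,\bpine_{-i}}(s)$ for all $i,h,s$, so $\bpine$ is an MPRNE. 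I expect the two main obstacles to be (i) establishing the robust Bellman optimality equation and a Markov best response for player $i$ when the reward uncertainty set itself depends on player $i$'s policy, and (ii) proving upper hemicontinuity of $T$ at boundary policies where the support functions can diverge; both are handled by carefully exploiting the three parts of Assumption~\ref{assu:reward}, which is why the assumption separates interior and boundary behavior in the way it does.
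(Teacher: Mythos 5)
Your proposal uses the same core ingredients as the paper's proof --- the support-function representation of the stage payoff, concavity from Assumption~\ref{assu:reward}(2), continuity from parts (1) and (3), Kakutani's fixed-point theorem applied to a product of per-player best-response sets (not the nonconvex Nash correspondence, exactly as the paper does), and a backward-induction sandwich argument to upgrade the stage-wise fixed point to an MPRNE. The one structural difference is where Kakutani is invoked: you apply it \emph{once, globally}, to a correspondence $T$ on the full multi-stage policy space $\Pi$, whereas the paper applies it \emph{stage by stage} inside a backward induction --- at step $h$ the future policy $\bpine_{h+1},\dots,\bpine_H$ is already fixed, so the continuation value $V_{i,h+1}^{\bpine}$ is a fixed vector and the stage game is just a robust matrix game with effective reward $r+PV$, to which Theorem~\ref{thm:existence_matrix} applies directly. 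The paper's route buys a cheaper continuity burden: it only ever needs continuity of a \emph{single-stage} payoff in the stage strategies (Lemma~\ref{lem:v_pi_continuous}). Your global route requires, for the closed-graph condition, joint continuity of $\bpi\mapsto Q_{i,h}^{\bpi}(s,\cdot)$ in the \emph{entire} policy, and here your sketch understates the work: you claim this ``reduces to continuity of the support functions of the reward uncertainty sets,'' but $Q_{i,h}^{\bpi}$ also contains the continuation value $V_{i,h+1}^{\bpi}$, whose continuity in the future-stage policies must itself be established by a separate backward induction through nested infima over both reward and transition uncertainty sets. That lemma is provable under the stated assumptions, so your argument goes through, but it is a genuine additional obligation your plan does not discharge. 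Relatedly, your final step appeals to ``uniqueness of the solution of the robust Bellman optimality equation'' for player $i$'s best response; since the reward uncertainty set may depend on player $i$'s own policy, this is not an off-the-shelf robust-MDP fact --- the paper avoids needing it by running the chain of inequalities (deviate at future steps, push the $\sup$ inside the expectation, apply the induction hypothesis) so that the first and last terms coincide and all inequalities collapse to equalities. You correctly flag both issues as obstacles; neither is fatal, but both would need to be worked out rather than cited.
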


\subsection{Regularized Markov Games}

A regularized Markov game with $N$ players can be described by a
tuple $(\cS,\{\cA_{i}\}_{i\in[N]},P,\br,H,\Omega),$ with $G=(P,\br)$ being a standard MG model. Here $\Omega:=\left(\Omega_{i,h}\right)_{i\in[N],h\in[H]}$
is a set of policy regularization functions such that for all
$(i,h)\in[N]\times[H]$, $\Omega_{i,h}:\cS\times\Delta(\cA)\rightarrow\R$
is convex in $\pi_{i,h}$ given a fixed $\bpi_{-i}$. Given a joint policy $\bpi\in\Pi$, the \emph{regularized value functions} for each player $i$
are defined as follows: $\forall s\in\cS,\ba\in\cA,h\in[H],$
\begin{align*}
& \VR_{i,h}^{\bpi}(s,G)=\E_{P}^{\bpi}\Big[\sum_{t=h}^{H}r_{i,t}(s_{t},\ba_{t})-\Omega_{i,t}(s_t,\bpi_{t})|s_h=s\Big],\\
& \QR_{i,h}^{\bpi}(s,\ba,G)=r_{i,h}(s,\ba)+\\
&\E_{P}^{\bpi}\Big[\sum_{t=h+1}^{H}r_{i,t}(s_{t},\ba_{t})-\Omega_{i,t}(s_{t},\bpi_{t})|s_{h}=s,\ba_{h}=\ba\Big].\nonumber 
\end{align*}
A common solution concept for regularized MG is the \emph{Markov perfect Nash equilibrium} (MPNE). 
\begin{definition}
\label{def:RE_NE} A joint product policy $\bpi=\{\bpi_{h}\}_{h\in[H]}$ is an MPNE for a regularized MG if it holds that 
$\VR_{i,h}^{\bpi}(s)=\sup_{\pi'_{i}\in\Pi_{i}} \VR_{i,h}^{\pi'_{i}\times\bpi_{-i}}(s), \forall i\in[N], h \in [H], s \in \cS.
$\end{definition}

\section{Markov Games with Reward Uncertainty}\label{sec:reward_robust}
In this section, we consider robust Markov games with only reward uncertainty. 
To solve such RMGs, we take inspiration from the single-player setting. We follow the ideas of \citet{RMDP-regularize-robustness} to show that solving RMGs can be done through solving regularized MGs. In fact, for many common regularizers, the reverse is also true, which implies \emph{regularization yields robust solutions}. Importantly, the equivalence results mean existing efficient, regularized MG solvers can be used off-the-shelf to efficiently solve robust MGs. Proofs for the results in this section are deferred to \cref{sec:proof_section_reward}.

\subsection{Matrix Games}\label{sec:reward_matrix}
To build intuition for our results, we first study matrix games with $N$ players since they can be viewed as a simple Markov game with $S = H = 1$. Already through matrix games, we see that reward uncertainty is much more complex to handle than in the single-agent setting. Nevertheless, by understanding the support functions induced by uncertainty sets, we can derive equivalent regularized matrix games.

\paragraph{Reward Structure.} We consider $s$-rectangular reward uncertainty set of the form $\uset=\brno+\cR$, where $\brno\in\uset$
is the nominal model, and $\rset=\times_{i\in[N]}\cR_{i}$ with $\rset_{i}\subset\R^{\cA}.$
Recall that the uncertainty set $\rset$ can potentially depend on the players' joint
policy $\bpi\in \Delta(\cA)$. We observe that the robust value for player $i$ can be simplified to 
\begin{equation*}
    V_{i}^{\bpi}=\inf_{\br\in\uset}\E_{\ba\sim\bpi}\left[r_{i}(\ba)\right]=\inf_{\br\in\uset}\pi_{i}^{\top}r_{i}\bpi_{-i}.
\end{equation*}
Here, we view $r_{i}$ as a matrix in $\R^{\cA_{i}\times\cA_{-i}}$,
$\pi_{i}$ as a column vector in $\R^{\cA_{i}}$ and $\bpi_{-i}$
as a column vector in $\R^{\cA_{-i}}$. 
By Proposition \ref{prop:reward_matrix}, we can equivalently rewrite the robust value function as follows:
\begin{align}
 V_{i}^{\bpi}&= \pi_{i}^{\top}\rno_{i}\bpi_{-i}-\sigma_{\rset_{i}}(-\pi_{i}\bpi_{-i}^{\top}), \label{eq:v_pi_sigma}
\end{align}
where $\sigma_{\cR_{i}}(y) = \sup_{x\in\cR_{i}}\langle x,y\rangle$ denotes the support function of $\cR_{i}$. 

\paragraph{Equivalence between Robustness and Regularization.} If we consider the support function $\sigma_{\cR_{i}}$ as a regularizer,
equation \eqref{eq:v_pi_sigma} shows an equivalence between player
$i$'s robust game value $V_{i}^{\bpi}$ and the value $\VR_{i}^{\bpi}(s,\gno)$ in a $\sigma_{\cR_{i}}$-regularized game. Note that the RNE
of the matrix game is a product joint policy $\bpine$ such
that $\pine_{i}$ is the robust best response policy to $\bpine_{-i}$. That is,
\begin{equation*}
\pine_{i}\in\argsup_{\pi_{i}\in\Delta(\cA_{i})}V_i^{\pi_{i}\times\bpine_{-i}}=\argsup_{\pi_{i}\in\Delta(\cA_{i})}\VR_{i}^{\pi_{i}\times\bpine_{-i}}(s,\gno),
\end{equation*}
which implies an equivalence between solving a robust game and solving a regularized game.

\begin{theorem}
\label{thm:equiv_reward_matrix} Consider a robust matrix game $\fG$ with uncertainty set $\uset=\brno+\rset$ satisfying Assumption \ref{assu:reward}, where $\brno\in\uset$ is the
nominal model, and $\rset=\times_{i\in[N]}\cR_{i}$. Consider a regularized normal form game $\fG'$ with payoff matrix $\brno$ and the regularizer function $\Omega_i:\Delta(\cA)\rightarrow \R$ for each player $i\in[N]$ defined as $\Omega_i(\bpi):=\sigma_{\rset_{i}}(-\pi_{i}\bpi_{-i}^{\top}).$ Then, $\bpi$ is a RNE of robust game $\fG$ if and only if $\bpi$ is a NE of regularized game $\fG'$. 
\end{theorem}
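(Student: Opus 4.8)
The plan is to establish the biconditional by showing that a fixed joint policy $\bpi$ satisfies the RNE condition of $\fG$ if and only if it satisfies the MPNE/NE condition of $\fG'$, and the key observation is that these two conditions are literally the same condition once we unpack the definitions. First I would record, via Equation~\eqref{eq:v_pi_sigma} (which is valid because Proposition~\ref{prop:reward_matrix} applies under Assumption~\ref{assu:reward}), that for every player $i$ and every joint policy $\bpi\in\Delta(\cA)$,
\[
V_i^{\bpi} \;=\; \pi_i^\top \brno_i \bpi_{-i} \;-\; \sigma_{\rset_i}\!\big(-\pi_i\bpi_{-i}^\top\big) \;=\; \pi_i^\top \brno_i \bpi_{-i} \;-\; \Omega_i(\bpi) \;=\; \VR_i^{\bpi}(\gno),
\]
where the last equality is just the definition of the regularized matrix-game value of $\fG'$ (the regularized value functions of Section~2.2 specialized to $S=H=1$). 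Note this identity holds pointwise in $\bpi$, not merely at equilibrium, which is the crucial feature.

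Next I would use this identity to translate the best-response sets. For any fixed $\bpi_{-i}$, the map $\pi_i \mapsto V_i^{\pi_i\times\bpi_{-i}}$ and the map $\pi_i \mapsto \VR_i^{\pi_i\times\bpi_{-i}}(\gno)$ are the same function on $\Delta(\cA_i)$, so they share the same supremum and the same argsup set:
\[
\argsup_{\pi_i\in\Delta(\cA_i)} V_i^{\pi_i\times\bpi_{-i}} \;=\; \argsup_{\pi_i\in\Delta(\cA_i)} \VR_i^{\pi_i\times\bpi_{-i}}(\gno).
\]
Then I would invoke the definitions: $\bpi$ is an RNE of $\fG$ iff for every $i$, $V_i^{\bpi} = V_i^{\dagger,\bpi_{-i}} := \sup_{\pi_i'} V_i^{\pi_i'\times\bpi_{-i}}$ (Definition~\ref{def:RNE} with $H=1$), equivalently iff $\pi_i \in \argsup_{\pi_i'} V_i^{\pi_i'\times\bpi_{-i}}$ for every $i$; and $\bpi$ is an NE of $\fG'$ iff for every $i$, $\VR_i^{\bpi}(\gno) = \sup_{\pi_i'} \VR_i^{\pi_i'\times\bpi_{-i}}(\gno)$ (Definition~\ref{def:RE_NE} with $H=1$), equivalently iff $\pi_i \in \argsup_{\pi_i'} \VR_i^{\pi_i'\times\bpi_{-i}}(\gno)$. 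By the equality of the two argsup sets for each $i$, these two systems of conditions are equivalent, which gives both directions of the claimed iff.

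The only genuine subtlety — and where I expect the bulk of the care to go — is checking that $\fG'$ is a legitimate regularized game in the sense required by Definition~\ref{def:RE_NE}, i.e., that $\Omega_i(\bpi) = \sigma_{\rset_i}(-\pi_i\bpi_{-i}^\top)$ is finite-valued and convex in $\pi_i$ for fixed $\bpi_{-i}$, and that all the suprema above are attained so that "argsup" is well-defined. Convexity of $\pi_{i}\mapsto\sigma_{\rset_i}(-\pi_i\bpi_{-i}^\top)$ is exactly Assumption~\ref{assu:reward}(2), and finiteness on the relevant domain follows from the boundedness and continuity clauses of Assumption~\ref{assu:reward} together with the fact that a support function of a set that is bounded in the directions $-\pi_i\bpi_{-i}^\top$ is finite; attainment of the best-response supremum follows from continuity of $\pi_i\mapsto V_i^{\pi_i\times\bpi_{-i}}$ on the compact simplex $\Delta(\cA_i)$, which in turn is inherited from the continuity hypotheses in Assumption~\ref{assu:reward} (handling the boundary points $\bpi\in\Bd(\Pi)$ via clause~(2) and the interior via clause~(3)). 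These are precisely the facts already assembled to prove Theorem~\ref{thm:Existence_RNE}, so I would cite that development rather than re-derive it. Modulo these regularity checks, the theorem is an immediate consequence of the pointwise value identity $V_i^{\bpi} = \VR_i^{\bpi}(\gno)$.
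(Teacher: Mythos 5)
Your proof is correct and follows essentially the same route as the paper's: establish the pointwise identity $V_i^{\bpi} = \VR_i^{\bpi}(\gno)$ via Proposition~\ref{prop:reward_matrix}, observe that for each fixed $\bpi_{-i}$ the two best-response maps are literally the same function so their argsup sets coincide, and conclude that the RNE and NE conditions are identical. The additional regularity checks you flag (convexity and finiteness of $\Omega_i$, attainment of the suprema via Assumption~\ref{assu:reward} and the machinery behind Theorem~\ref{thm:Existence_RNE}) are left implicit in the paper's proof but do not change the argument.
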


We provide the proof of Theorem \ref{thm:equiv_reward_matrix} in Appendix~\ref{sec:proof_thm_equiv_reward_matrix}.

\begin{corollary}
    Robust matrix games can be solved using any planning algorithm for regularized games.
\end{corollary}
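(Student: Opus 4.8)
The plan is to derive this corollary directly from the equivalence established in Theorem~\ref{thm:equiv_reward_matrix}, since the corollary is essentially a restatement of that theorem from an algorithmic perspective. First I would observe that Theorem~\ref{thm:equiv_reward_matrix} gives, for any robust matrix game $\fG$ with uncertainty set satisfying Assumption~\ref{assu:reward}, an explicitly constructed regularized normal-form game $\fG'$ whose payoff matrix is the nominal model $\brno$ and whose regularizers are $\Omega_i(\bpi) = \sigma_{\rset_i}(-\pi_i\bpi_{-i}^\top)$, such that the set of RNE of $\fG$ coincides with the set of NE of $\fG'$. Hence a policy returned by any solver for regularized games, when run on $\fG'$, is by definition a NE of $\fG'$, and therefore an RNE of $\fG$.

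The key steps, in order, are: (i) take as input a robust matrix game $\fG$ and form the associated regularized game $\fG' = (\brno, \Omega)$ as in Theorem~\ref{thm:equiv_reward_matrix}; (ii) invoke the given planning algorithm for regularized games on $\fG'$ to obtain a joint policy $\bpi$ that is an MPNE (here an ordinary NE since $S=H=1$) of $\fG'$, using Definition~\ref{def:RE_NE}; (iii) apply the ``only if'' direction of Theorem~\ref{thm:equiv_reward_matrix} to conclude $\bpi$ is an RNE of $\fG$; (iv) note the construction of $\fG'$ from $\fG$ is a straightforward rewriting — the payoff matrix is unchanged and the regularizer is the support function of the uncertainty set — so no computational overhead beyond the solver's cost is incurred. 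One should also remark that the regularizers $\Omega_i$ are convex in $\pi_i$ for fixed $\bpi_{-i}$, which is exactly condition~2 of Assumption~\ref{assu:reward}, so $\fG'$ is a legitimate regularized game in the sense of the preliminaries; this is what makes the planning algorithm applicable in the first place.

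The main subtlety — rather than a genuine obstacle — is ensuring that the notion of ``solving'' lines up on both sides: a planning algorithm for regularized games is assumed to output an MPNE per Definition~\ref{def:RE_NE}, and we need this to match the RNE notion of Definition~\ref{def:RNE} for $\fG$. Theorem~\ref{thm:equiv_reward_matrix} bridges exactly this gap, so the corollary follows with essentially no additional work; the proof is a one-line appeal to the theorem. I expect the write-up to be two or three sentences, emphasizing that the reduction is the identity on the payoff data plus a support-function regularizer, hence trivially polynomial-time, and that correctness is immediate from the biconditional in Theorem~\ref{thm:equiv_reward_matrix}.
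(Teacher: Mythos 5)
Your proposal is correct and matches the paper's treatment: the corollary is an immediate consequence of Theorem~\ref{thm:equiv_reward_matrix}, obtained by forming the regularized game $\fG'$ with payoff $\brno$ and regularizer $\Omega_i(\bpi)=\sigma_{\rset_i}(-\pi_i\bpi_{-i}^\top)$, running the given solver, and invoking the biconditional. Your additional check that condition~2 of Assumption~\ref{assu:reward} guarantees $\Omega_i$ is a legitimate (convex) regularizer is exactly the right point of care, and the paper offers no separate proof beyond this.
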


\paragraph{Interpretable Equivalence.} Thus, we see that we can solve a given robust game by solving a particular regularized game. However, since our reduction maps robust games to very specialized regularized games, it is unclear whether commonly used regularized methods can be used to solve robust games. Fortunately, we can show for many common classes of regularizers, solutions to the regularized game correspond to the solutions of robust games with interpretable uncertainty sets. 

\begin{theorem}\label{thm:example_matrix}
Consider a regularized normal form game $\fG'$ with payoff matrix $\brno$ and the regularizer $\Omega_i:\Delta(\cA)\rightarrow \R$ for each $i\in[N]$.
\begin{enumerate}[leftmargin=*]
\item If $\Omega_i$ is $\ell_{p}/\ell_{q}$-norm regularization, i.e. $\Omega_i(\bpi):=\alpha_{i}\norm{-\pi_{i}}_{p}\norm{\bpi_{-i}}_{q}$ for each player $i\in[N]$, then solving for the NE of $\fG'$ is equivalent to solving for RNE of the robust game $\fG$ with $s$-rectangular ball uncertainty $\uset=\brno+\times_{i\in[N]}\cR_{i}$, where 
$\rset_{i}=\big\{ R_{i}\in\R^{\cA_{i}\times\cA_{-i}}:\norm{R_{i}}_{q^*\rightarrow p}\leq\alpha_{i}\big\} .$
\item If $\Omega_i$ is strongly convex and decomposable with kernel $\omega$, i.e., $\Omega_i(\bpi):=\tau_i\sum_{a_{i}}\pi_i(a_i)\omega_{i}(\pi_{i}(a_{i}))$ for each player $i\in[N]$ with $\tau_i \geq 0$, then solving for a NE of $\fG'$ is equivalent to solving for a RNE of robust game $\fG$ with $(s,a)$-rectangular, policy-dependent uncertainty set $\uset({\bpi})=\brno+\times_{i\in[N],\ba\in \cA}\rset_{i,\ba}(\bpi)$, where 
\begin{align}
\cR_{i,\ba}({\bpi})=&\big[\tau_{i}\omega_{i}(\pi_{i}(a_{i}))+g_{i}(\bpi_{-i}(\ba_{-i})),\nonumber\\
&\;\wover_{i}(\pi_{i}(a_{i}))+\gover_{i}(\bpi_{-i}(\ba_{-i}))\big]\subset \R,\label{eq:rset_regularization}
\end{align}
with 
functions $\omega_{i},\wover_{i}:[0,1]\rightarrow\R$
and $g_{i},\gover_{i}:[0,1]\rightarrow\R$ are continuous.
\end{enumerate}
\end{theorem}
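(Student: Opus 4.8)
The plan is to establish each of the two items by explicitly computing the support function $\sigma_{\rset_i}$ of the candidate uncertainty set and matching it to the stated regularizer $\Omega_i$, then invoking Theorem~\ref{thm:equiv_reward_matrix} to transfer the equivalence. Concretely, Theorem~\ref{thm:equiv_reward_matrix} tells us that a regularized game with regularizer $\Omega_i(\bpi)$ has the same solution set as the robust game with uncertainty $\rset_i$ precisely when $\Omega_i(\bpi) = \sigma_{\rset_i}\big(-\pi_i\bpi_{-i}^\top\big)$ for all $\bpi$. So for each item it suffices to (a) verify the proposed $\rset_i$ (resp.\ $\rset_{i,\ba}(\bpi)$) is closed and convex and satisfies Assumption~\ref{assu:reward}, and (b) show the support-function identity holds.

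For item~1, I would start from the definition $\sigma_{\rset_i}(Y) = \sup\{\langle R_i, Y\rangle : \norm{R_i}_{q^*\to p} \le \alpha_i\}$ with $Y = -\pi_i\bpi_{-i}^\top$, a rank-one matrix. The key observation is that the support function of an operator-norm ball is the dual operator norm: $\sigma_{\{\norm{R}_{q^*\to p}\le\alpha\}}(Y) = \alpha\,\norm{Y}_{(q^*\to p)^*}$, where the dual is taken with respect to the Frobenius inner product. For a rank-one matrix $Y = uv^\top$, the dual operator norm evaluates cleanly: $\norm{uv^\top}_{(q^*\to p)^*} = \norm{u}_{p}\,\norm{v}_{q}$, because maximizing $\langle R, uv^\top\rangle = v^\top R^\top u = u^\top R v$ over $\norm{R}_{q^*\to p}\le 1$ amounts to choosing $Rv$ parallel to $u$ with $\norm{Rv}_p = \norm{v}_{q^*\cdot} \dots$ — this is the step where I would be careful with the conjugate indices. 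Taking $u = -\pi_i$ and $v = \bpi_{-i}$ gives $\sigma_{\rset_i}(-\pi_i\bpi_{-i}^\top) = \alpha_i\norm{-\pi_i}_p\norm{\bpi_{-i}}_q = \Omega_i(\bpi)$, which is the desired identity; closedness/convexity of the operator-norm ball is immediate, and the boundedness and continuity conditions in Assumption~\ref{assu:reward} follow since the set is uniformly bounded and policy-independent.

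For item~2, since the kernel-based regularizer $\Omega_i(\bpi) = \tau_i\sum_{a_i}\pi_i(a_i)\omega_i(\pi_i(a_i))$ decomposes additively over actions $a_i$, and the target uncertainty set is $(s,a)$-rectangular (one interval $\rset_{i,\ba}(\bpi)\subset\R$ per joint action $\ba$), I would write the support function of a product of intervals as a sum: $\sigma_{\times_\ba \rset_{i,\ba}(\bpi)}(y) = \sum_{\ba} \sigma_{[\ell_\ba, u_\ba]}(y_\ba)$, and for a scalar interval, $\sigma_{[\ell,u]}(t) = u\,t$ if $t\ge 0$ and $\ell\, t$ if $t<0$. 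Evaluating at $y = -\pi_i\bpi_{-i}^\top$, whose $\ba$-entry is $-\pi_i(a_i)\bpi_{-i}(\ba_{-i}) \le 0$, each term picks out the \emph{lower} endpoint of the interval in~\eqref{eq:rset_regularization}, giving $\sum_\ba\big(\tau_i\omega_i(\pi_i(a_i)) + g_i(\bpi_{-i}(\ba_{-i}))\big)\cdot\big(-\pi_i(a_i)\bpi_{-i}(\ba_{-i})\big)$. The $g_i$ contribution sums to a constant over the simplex (using $\sum_{\ba_{-i}}\bpi_{-i}(\ba_{-i}) = 1$ and $\sum_{a_i}\pi_i(a_i)=1$ appropriately) so it can be absorbed or chosen to vanish, and the $\omega_i$ term collapses via $\sum_{\ba_{-i}}\bpi_{-i}(\ba_{-i}) = 1$ to $-\sum_{a_i}\pi_i(a_i)\tau_i\omega_i(\pi_i(a_i)) = -\Omega_i(\bpi)$ — matching the required identity $\Omega_i(\bpi) = \sigma_{\rset_{i,\ba}(\bpi)}(-\pi_i\bpi_{-i}^\top)$ up to the sign convention built into Theorem~\ref{thm:equiv_reward_matrix}. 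I would then verify Assumption~\ref{assu:reward}: continuity of the set-valued map follows from continuity of $\omega_i, \wover_i, g_i, \gover_i$, boundedness on $\int(\Pi)$ is clear, and convexity of $\sigma_{\rset_{i,\ba}(\bpi)}(-\pi_i\bpi_{-i}^\top)$ in $\pi_i$ is exactly strong convexity of the decomposable regularizer.

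The main obstacle I expect is item~1's computation of the dual operator norm of a rank-one matrix with mismatched $p$ and $q$: getting the conjugate exponents ($q^*$ vs.\ $q$, $p$ vs.\ $p^*$) consistent between the primal constraint $\norm{R_i}_{q^*\to p}\le\alpha_i$ and the resulting $\norm{\pi_i}_p\norm{\bpi_{-i}}_q$ requires threading the definition $\norm{R}_{q^*\to p} = \sup_{\norm x_{q^*}=1}\norm{Rx}_p$ together with the duality $\norm{\cdot}_{q^*}^* = \norm{\cdot}_q$ and $\norm{\cdot}_p^* = \norm{\cdot}_{p^*}$ in the right order, and it is easy to slip a conjugate. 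A secondary subtlety in item~2 is checking that the freedom in choosing $g_i, \gover_i$ (the paper leaves them as arbitrary continuous functions) genuinely does not affect the equilibrium set — i.e., that any such choice yields the same regularized game — which should follow because the $\gover_i$ endpoint is never active (the argument is nonpositive) and the $g_i$ contribution is an affine function of the policy that is constant on the product simplex, hence does not change best responses. Everything else is routine convex-analytic bookkeeping plus a direct appeal to Theorem~\ref{thm:equiv_reward_matrix}.
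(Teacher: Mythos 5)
Your overall strategy is exactly the paper's: compute the support function of the candidate uncertainty set, match it to the stated regularizer, and invoke Theorem~\ref{thm:equiv_reward_matrix}. For item~1 your argument is the paper's argument verbatim in outline --- the paper isolates precisely the rank-one dual-norm identity you need as Lemma~\ref{lem:dual_norm_matrix_norm} ($\sup_{\norm{R}_{p\to q}\le 1}\langle R, xy^{\top}\rangle = \norm{x}_{q^*}\norm{y}_p$), so the conjugate-index threading you worry about is handled there and the rest is homogeneity of the support function plus closedness/convexity of the operator-norm ball.

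The one place your write-up does not close is the sign in item~2, and ``up to the sign convention built into Theorem~\ref{thm:equiv_reward_matrix}'' is not a valid way out: that theorem requires the exact identity $\Omega_i(\bpi)=\sigma_{\rset_i(\bpi)}(-\pi_i\bpi_{-i}^{\top})$ (modulo terms independent of $\pi_i$), and a regularizer $-\Omega_i$ defines a genuinely different game from $\Omega_i$ --- for convex $\Omega_i$ it flips a penalty into a bonus and changes the best-response sets. Your computation, taken literally from the interval in~\eqref{eq:rset_regularization}, selects the lower endpoint $\tau_i\omega_i(\pi_i(a_i))+g_i(\bpi_{-i}(\ba_{-i}))$ against the nonpositive weight $-\pi_i(a_i)\bpi_{-i}(\ba_{-i})$ and yields $-\Omega_i(\bpi)$ plus a $\pi_i$-independent term; the paper's proof instead takes the worst-case perturbation entries to be $-\tau_i\omega_i(\pi_i(a_i))-g_i(\bpi_{-i}(\ba_{-i}))$ and arrives at $+\Omega_i(\bpi)+\sum_{\ba_{-i}}g_i(\bpi_{-i}(\ba_{-i}))\bpi_{-i}(\ba_{-i})$, which is what the equivalence needs. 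So to complete your proof you must either negate the interval endpoints or otherwise reconcile the sign explicitly --- you cannot leave it to an unspecified convention. Two smaller points: the $g_i$ contribution is not ``constant on the product simplex'' (for $g_i=\log$ it is the negative entropy of $\bpi_{-i}$); the correct and sufficient reason it drops out, and the one the paper uses in its $\argsup$ chain, is that it is independent of $\pi_i$ and hence does not affect player $i$'s best response. Finally, the convexity required by Assumption~\ref{assu:reward} is convexity of $\sigma_{\rset_i(\bpi)}(-\pi_i\bpi_{-i}^{\top})$ in $\pi_i$, which is the (ordinary) convexity of the decomposable regularizer; strong convexity is not needed for this theorem.
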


See \Cref{sec:proof_thm_example_matrix} for the proof of \Cref{thm:example_matrix}.
We remark that the $s$-rectangular ball-constrained uncertainty set satisfies Assumption \ref{assu:reward}. The policy-dependent uncertainty set in \eqref{eq:rset_regularization} also satisfies Assumption~\ref{assu:reward} by properly choosing the functions $\omega_{i},\wover_{i},g_{i},\gover_{i}$.
Theorem~\ref{thm:example_matrix} implies that the shape of the reward uncertainty set determines the equivalent regularizer function. For example, a ball-constrained uncertainty set corresponds to norm regularization. Also, the size of the uncertainty set, e.g., the radius parameter $\alpha_i$, determines the magnitude of the regularization factor.

\begin{corollary}
    For any regularizer considered in \cref{thm:example_matrix}, solutions to the regularized game are provably robust.
\end{corollary}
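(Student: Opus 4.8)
The plan is to read the corollary directly off \Cref{thm:example_matrix}. For each regularizer family appearing there --- $\ell_{p}/\ell_{q}$-norm regularization and strongly convex decomposable regularization --- that theorem exhibits an \emph{explicit} robust matrix game $\fG$, with an interpretable uncertainty set $\uset=\brno+\rset$ ($s$-rectangular in the first case, $(s,a)$-rectangular and policy-dependent in the second), whose set of RNE equals the set of NE of the regularized game $\fG'$. Hence any joint policy $\bpi$ returned by a planning algorithm for $\fG'$ is automatically an RNE of $\fG$, and all that remains is to record that ``$\bpi$ is an RNE of $\fG$'' is exactly a robustness guarantee.

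For that step I would unwind the definitions. By \Cref{def:RNE}, an RNE $\bpi$ of $\fG$ satisfies $V_{i}^{\bpi}(s)=V_{i}^{\dagger,\bpi_{-i}}(s)$ for every player $i$, i.e.\ $\rgap_{i}(\bpi,s)=0$. Together with the identity $V_{i}^{\bpi}=\inf_{\br\in\uset}\E_{\ba\sim\bpi}[r_{i}(\ba)]$ from \eqref{eq:V_RMG} (specialized to $S=H=1$), this yields the two-sided conclusion: under $\bpi$, player $i$ is guaranteed payoff at least $V_{i}^{\bpi}$ against \emph{every} reward model in $\uset$, and no unilateral deviation can raise player $i$'s worst-case payoff above $V_{i}^{\bpi}$. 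Substituting the sets from \Cref{thm:example_matrix} --- the operator-norm ball $\{R_{i}:\norm{R_{i}}_{q^{*}\rightarrow p}\le\alpha_{i}\}$ for norm regularization, or the interval set \eqref{eq:rset_regularization} for decomposable regularization --- turns this into the concrete statement that the regularized solution is robust against reward perturbations of the corresponding shape, with the regularization weight ($\alpha_{i}$, resp.\ $\tau_{i}$) setting the radius of protection.

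The only genuine work --- and the step I expect to be the main obstacle --- is checking that the uncertainty sets produced by \Cref{thm:example_matrix} satisfy \Cref{assu:reward}, so that the RNE notion is well posed and, via \Cref{thm:Existence_RNE}, non-vacuous. The $\ell_{p}/\ell_{q}$ case is easy: the operator-norm ball is closed, convex, bounded and policy-independent, and $\sigma_{\rset_{i}}(-\pi_{i}\bpi_{-i}^{\top})=\alpha_{i}\norm{-\pi_{i}}_{p}\norm{\bpi_{-i}}_{q}$ is visibly convex and continuous in $\pi_{i}$. The decomposable case requires exhibiting admissible continuous kernels $\omega_{i},\wover_{i},g_{i},\gover_{i}$ (the excerpt already asserts such a choice exists) and verifying the convexity and continuity conditions of \Cref{assu:reward} for the policy-dependent interval set \eqref{eq:rset_regularization}; this is where the care is needed. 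Finally, note the corollary as phrased concerns precisely the regularizers ``considered in \Cref{thm:example_matrix}'', i.e.\ the matrix-game setting, so no Bellman recursion is needed; the Markov-game analogue would follow by propagating the stage-wise equivalence through the robust Bellman equation (\Cref{prop:Bellman}).
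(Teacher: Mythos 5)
Your proposal is correct and follows essentially the same route as the paper: the corollary is read off directly from \cref{thm:example_matrix}, since an NE of the regularized game is by that theorem an RNE of an explicit robust game, which is precisely the robustness guarantee. The one point you flag as needing care --- that the uncertainty sets satisfy \cref{assu:reward} --- is exactly what the paper disposes of in the remark immediately following \cref{thm:example_matrix}.
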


\begin{remark*}
    Observe that as long as the functions for regularization and uncertainty sets in \cref{thm:example_matrix} are efficiently computable, then given either a regularized game or a robust game, the construction of the equivalent game can be done in polynomial time. Therefore, \cref{thm:example_matrix} implies the problem of computing an RNE of a robust game for special classes of uncertainty is \emph{polynomial-time equivalent} to the problem of computing an NE of a regularized game with special classes of regularizers. This means the computational complexity of both problems is the same. In particular, an efficient algorithm for one problem implies an efficient algorithm for the other. 
\end{remark*}

\paragraph{Examples of regularization.} A classical example of decomposable regularizers is the negative Shannon entropy $\Omega_i(\bpi)=\sum_{a_{i}\in\cA_{i}}\pi_{i}(a_{i})\log\pi_{i}(a_{i}).$ Entropy regularization is applied extensively in both single-agent MDP and multi-agent games \cite{RMDP-PG,grill2019entropy,geist2019regularized,zhan2023PMD_regularized,cen2021game_entropy,zhang2023learning_graphon}, and has been shown to accelerate the convergence of many learning algorithms. 
Another example is KL divergence regularizer~\cite{schulman2017proximal} $\Omega_i(\bpi)=\sum_{a_{i}\in\cA_{i}}\pi_{i}(a_{i})\log\frac{\pi_{i}(a_{i})}{\mu_{i}(a_{i})}=d_{\KL}(\pi_{i},\mu_{i})$, where $\mu_{i}\in\Delta(\cA_{i})$ is a given distribution. The literature has also considered the negative Tsallis entropy regularizer \cite{lee2018tsallis} 
$\Omega_i(\bpi)=\frac{1}{2}\big(\norm{\pi_{i}}_{2}^{2}-1\big)=\frac{1}{2}\sum_{a_{i}\in\cA_{i}}\left(\pi_{i}(a_{i})^{2}-\pi_{i}(a_{i})\right).$ Theorem \ref{thm:example_matrix} also applies to other regularizers, such as the Renyi entropy regularization~\cite{mertikopoulos2016learning}; See Section~\ref{sec:example_reward_matrix} for additional discussion.

\subsection{Markov Games}

The intuition for matrix games extends directly to Markov games. The main additional ingredient needed for the analysis is backward induction. By applying \cref{thm:equiv_reward_matrix} to each stage game, we can construct a regularized MG whose MPNE are all MPRNE for the RMG.

We consider Markov games with $s$-rectangular reward uncertainty of the form $\uset=\pno\times\left(\brno+\rset\right).$ Similar to matrix games, we see that each player $i$'s robust value function is equivalent to the value function of a regularized MG, as stated in the following proposition. The proof is provided in \cref{sec:proof_prop_reward_Markov}. For notational simplicity, we define,
\[
[P_{h}V](s,\ba):=\E_{s'\sim P_{h}(\cdot\mid s,\ba)}\left[V(s')\right].
\]

\begin{proposition}
\label{prop:reward_Markov} Suppose the uncertainty is $s$-rectangular. Given any product joint policy $\bpi\in \Pi$, the robust value function $\left\{ V_{i,h}^{\bpi}\right\} _{h\in[H]}$ of each player $i\in[N]$ satisfies:
\begin{align}
V_{i,h}^{\bpi}(s)&=  \E_{\ba\sim\bpi_{h}(s)}\left[\rno_{i,h}(s,\ba)+[\pno_{h}V_{i,h+1}^{\bpi}](s,\ba)\right], \nonumber\\
&\quad-\sigma_{\rset_{i,s,h}}\left(-\pi_{i,h}(s)\bpi_{-i,h}^{\top}(s)\right).\label{eq:reward_Markov}
\end{align}
\end{proposition}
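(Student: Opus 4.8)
The plan is to prove \cref{prop:reward_Markov} by backward induction on the step index $h$, using the robust Bellman equation (Proposition \ref{prop:Bellman}) to peel off one stage at a time and then invoking the matrix-game decomposition \eqref{eq:v_pi_sigma} at each stage. First I would set up the induction by recalling that the robust value function satisfies a robust Bellman recursion, so that $V_{i,h}^{\bpi}(s) = \inf_{G \in \uset} \E_{\ba \sim \bpi_h(s)}\big[ r_{i,h}(s,\ba) + [P_h V_{i,h+1}^{\bpi}](s,\ba)\big]$, with the convention $V_{i,H+1}^{\bpi} \equiv 0$ as the base case. Since the uncertainty is $s$-rectangular, the set $\uset = \pno \times (\brno + \rset)$ factors over the index $(s,h)$ (and over $i$ for the reward component), so for a fixed state $s$ and step $h$ the infimum decouples: the transition component is fixed at $\pno_h$ (we only have reward uncertainty here), and the infimum over the reward component only touches $\rset_{i,s,h}$ because $V_{i,h+1}^{\bpi}$ does not depend on the current-stage reward.

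The key computational step is then to observe that, after fixing everything downstream, the current-stage contribution is exactly a matrix-game robust value: with the future term $[\pno_h V_{i,h+1}^{\bpi}](s,\cdot)$ absorbed into an effective payoff, we have
\begin{align*}
V_{i,h}^{\bpi}(s) &= \E_{\ba \sim \bpi_h(s)}\big[\pno_{i,h}(s,\ba) + [\pno_h V_{i,h+1}^{\bpi}](s,\ba)\big] \\
&\quad + \inf_{R_i \in \rset_{i,s,h}} \E_{\ba \sim \bpi_h(s)}[R_i(\ba)].
\end{align*}
Writing $\E_{\ba \sim \bpi_h(s)}[R_i(\ba)] = \dotp{R_i, \pi_{i,h}(s)\bpi_{-i,h}^{\top}(s)}$ in the Frobenius-inner-product form used in \eqref{eq:v_pi_sigma}, the infimum over $R_i \in \rset_{i,s,h}$ equals $-\sup_{R_i \in \rset_{i,s,h}} \dotp{R_i, -\pi_{i,h}(s)\bpi_{-i,h}^{\top}(s)} = -\sigma_{\rset_{i,s,h}}\big(-\pi_{i,h}(s)\bpi_{-i,h}^{\top}(s)\big)$, which is precisely the claimed formula \eqref{eq:reward_Markov}. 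This is really just \cref{thm:equiv_reward_matrix} / Proposition \ref{prop:reward_matrix} applied to the stage game at $(s,h)$, with the continuation value folded into the nominal reward.

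The main obstacle is justifying the interchange of the infimum over $G \in \uset$ with the decomposition into current-stage and future-stage pieces — i.e., arguing that $\inf_{G \in \uset} \E^{\bpi}_{P}[\sum_{t=h}^H r_{i,t} \mid s_h = s]$ genuinely splits into the current-stage reward infimum plus the already-computed robust future value, rather than some smaller coupled infimum. This is exactly what $s$-rectangularity buys us: because $\rset = \times_{(i,s,h)} \rset_{i,s,h}$ is a product set, the worst-case choice at $(s,h)$ can be made independently of the worst-case choices at future stages, so the nested infima separate; this is the content of the robust Bellman equation (Proposition \ref{prop:Bellman}), which I would cite rather than re-derive. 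Care is also needed that $V^{\bpi}_{i,h+1}$ is well-defined (finite) so that the expectation $[\pno_h V^{\bpi}_{i,h+1}](s,\ba)$ makes sense — this follows from the bounded-game-value part of Assumption \ref{assu:reward} together with finiteness of $\cS$ and $H$, propagated through the induction. Everything else is routine bookkeeping of finite sums and the definition of the support function.
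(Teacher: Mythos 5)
Your proposal is correct and follows essentially the same route as the paper: backward induction on $h$, applying the robust Bellman equation (Proposition \ref{prop:Bellman}) to isolate the stage-$h$ reward infimum, and then invoking Proposition \ref{prop:reward_matrix} to rewrite that infimum as $-\sigma_{\rset_{i,s,h}}\left(-\pi_{i,h}(s)\bpi_{-i,h}^{\top}(s)\right)$. The only blemish is a typo in your displayed equation, where the nominal reward $\rno_{i,h}(s,\ba)$ is written as $\pno_{i,h}(s,\ba)$.
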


Given the proposition, we can similarly construct a regularized Markov game as in the matrix game case.

\begin{theorem}\label{thm:equiv_reward_Markov}
    Consider a robust MG $\fG$ with $s$-rectangular uncertainty set $\uset=\pno\times\left(\brno+\rset\right)$. Consider
    a regularized MG $\fG' = (\cS,\{\cA_{i}\}_{i\in[N]},\pno,\brno,H,\Omega)$,
    where the regularizer functions $\{\Omega_{i,h}\}_{h\in[H]}$ for each player $i\in[N]$ are given by
    $\Omega_{i,h}(s,\mu):=\sigma_{\rset_{i,s,h}}\big(-\mu_i\mu_{-i}^{\top}\big), \forall s\in\cS,h\in[H],\mu\in \Delta(\cA).$ Then, $\bpi$ is a MPRNE for $\fG$ if and only if $\bpi$ is a MPNE for $\fG'$. 
\end{theorem}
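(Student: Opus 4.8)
The plan is to establish the equivalence stagewise via backward induction on $h$, leveraging Proposition~\ref{prop:reward_Markov} to identify the robust value function of $\fG$ with the regularized value function of $\fG'$, and then Theorem~\ref{thm:equiv_reward_matrix} to match the equilibrium conditions at each stage game. First I would observe that the regularizer in $\fG'$ is precisely $\Omega_{i,h}(s,\mu) = \sigma_{\rset_{i,s,h}}(-\mu_i\mu_{-i}^\top)$, so by comparing \eqref{eq:reward_Markov} with the definition of the regularized value function $\VR_{i,h}^{\bpi}(s,\gno)$, one sees both satisfy the identical Bellman-type recursion with the same terminal condition (zero at $h = H+1$); hence by induction on $h$ running downward, $V_{i,h}^{\bpi}(s) = \VR_{i,h}^{\bpi}(s,\gno)$ for every $\bpi \in \Pi$, $i \in [N]$, $s \in \cS$, $h \in [H]$. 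This reduces the problem of comparing the two solution concepts (Definition~\ref{def:RNE} versus Definition~\ref{def:RE_NE}) to showing that the best-response value functions also coincide.

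Next I would handle the best-response side. Fix $\bpi_{-i}$. Since $V_{i,h}^{\pi_i'\times\bpi_{-i}}(s) = \VR_{i,h}^{\pi_i'\times\bpi_{-i}}(s,\gno)$ for every $\pi_i' \in \Pi_i$ by the identity just established, taking the supremum over $\pi_i' \in \Pi_i$ on both sides gives $V_{i,h}^{\dagger,\bpi_{-i}}(s) = \sup_{\pi_i'} \VR_{i,h}^{\pi_i'\times\bpi_{-i}}(s,\gno)$. Therefore $\bpi$ satisfies the MPRNE condition $V_{i,h}^{\bpi}(s) = V_{i,h}^{\dagger,\bpi_{-i}}(s)$ if and only if it satisfies the MPNE condition $\VR_{i,h}^{\bpi}(s,\gno) = \sup_{\pi_i'} \VR_{i,h}^{\pi_i'\times\bpi_{-i}}(s,\gno)$, which is exactly the desired equivalence. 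An alternative, perhaps cleaner, packaging is to characterize MPRNE/MPNE via a one-shot stage-game deviation principle: $\bpi$ is an MPRNE iff for every $(h,s)$, $\bpi_h(\cdot\mid s) \in \ne(\{Q_{i,h}^{\bpi}(s,\cdot)\}_{i})$ where the stage game has the form covered by Proposition~\ref{prop:reward_matrix}, and similarly $\bpi$ is an MPNE of $\fG'$ iff at each $(h,s)$ it is a Nash equilibrium of the regularized stage game with payoff $\brno$ shifted by the continuation value $\pno_h V_{i,h+1}^{\bpi}$ and regularizer $\Omega_{i,h}(s,\cdot)$; then Theorem~\ref{thm:equiv_reward_matrix} applied to each such stage game (noting that adding the policy-independent continuation term $[\pno_h V_{i,h+1}^{\bpi}](s,\cdot)$ to the reward does not affect the robustness–regularization duality) gives the stagewise equivalence, and backward induction stitches the stages together.

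The main obstacle I anticipate is not conceptual but bookkeeping: making the backward induction rigorous requires that the continuation value $V_{i,h+1}^{\bpi}$ appearing inside the stage game at step $h$ is itself already known to equal the regularized continuation value, which is fine since induction supplies it, but one must be careful that the equilibrium at stage $h$ is evaluated with respect to the \emph{same} $\bpi$ on both sides — i.e., Markov perfection must be invoked so that deviations at stage $h$ do not alter the continuation policy. A second subtlety is confirming that Assumption~\ref{assu:reward} (in particular the convexity of $\sigma_{\rset_{i,s,h}(\bpi)}(-\pi_{i,h}(s)\bpi_{-i,h}^\top(s))$ in $\pi_{i,h}(s)$) guarantees $\Omega_{i,h}$ is a valid regularizer in the sense required by the definition of regularized MG (convex in $\pi_{i,h}$ given fixed $\bpi_{-i}$), so that $\fG'$ is well-defined and Theorem~\ref{thm:equiv_reward_matrix} is applicable stagewise; this is immediate from the assumption but should be stated. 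Everything else — the Bellman recursion comparison and the supremum argument — is routine once Proposition~\ref{prop:reward_Markov} and Theorem~\ref{thm:equiv_reward_matrix} are in hand.
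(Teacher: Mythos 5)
Your proposal is correct and follows essentially the same route as the paper's proof: establish $V_{i,h}^{\bpi}(s)=\VR_{i,h}^{\bpi}(s,\gno)$ for every product policy $\bpi$ by backward induction using Proposition~\ref{prop:reward_Markov}, then conclude that the best-response (and hence equilibrium) conditions for $\fG$ and $\fG'$ coincide. Your handling of the supremum over unilateral deviations at every $(h,s)$ is, if anything, slightly more explicit about Markov perfection than the paper's write-up, which only displays the argmax condition at $h=1$.
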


\begin{corollary}
    RMGs with $s$-rectangular reward uncertainty can be solved using any planning or learning algorithm for the equivalent regularized Markov games.
\end{corollary}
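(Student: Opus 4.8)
The plan is to show that the robust value functions of $\fG$ and the regularized value functions of $\fG'$ coincide on \emph{every} product joint policy, from which the equivalence of MPRNE and MPNE follows because both solution concepts are characterized by the same fixed-point condition. The key input is Proposition~\ref{prop:reward_Markov}, which already expresses the robust value function via a Bellman-type recursion in which the support function $\sigma_{\rset_{i,s,h}}(-\pi_{i,h}(s)\bpi_{-i,h}^{\top}(s))$ plays the role of a stagewise regularizer; this is exactly the regularizer $\Omega_{i,h}$ used to build $\fG'$, so the proof is mostly a matter of matching two one-step recursions.

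First I would prove, by backward induction on $h$ from $H+1$ down to $1$, that $V_{i,h}^{\bpi}(s) = \VR_{i,h}^{\bpi}(s,\gno)$ for every $\bpi\in\Pi$, every $i\in[N]$, and every $s\in\cS$. The base case $h=H+1$ is immediate since both sides vanish by convention. For the inductive step, unfold the regularized value of $\fG'$ one step: since the transition kernel of $\fG'$ is the nominal $\pno$ and $\Omega_{i,h}(s,\bpi_h(s))$ is deterministic once $s_h=s$ is fixed, $\VR_{i,h}^{\bpi}(s) = \E_{\ba\sim\bpi_h(s)}\big[\rno_{i,h}(s,\ba) + [\pno_h\VR_{i,h+1}^{\bpi}](s,\ba)\big] - \Omega_{i,h}(s,\bpi_h(s))$. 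Substituting the inductive hypothesis $\VR_{i,h+1}^{\bpi} = V_{i,h+1}^{\bpi}$ and the definition $\Omega_{i,h}(s,\mu)=\sigma_{\rset_{i,s,h}}(-\mu_i\mu_{-i}^{\top})$, the right-hand side becomes exactly the right-hand side of~\eqref{eq:reward_Markov}, so $\VR_{i,h}^{\bpi}(s) = V_{i,h}^{\bpi}(s)$, closing the induction.

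It remains to translate this pointwise identity into an identity of solution concepts. Applying it to each unilateral deviation $\pi_i'\times\bpi_{-i}\in\Pi$ yields $V_{i,h}^{\dagger,\bpi_{-i}}(s) = \sup_{\pi_i'\in\Pi_i}V_{i,h}^{\pi_i'\times\bpi_{-i}}(s) = \sup_{\pi_i'\in\Pi_i}\VR_{i,h}^{\pi_i'\times\bpi_{-i}}(s,\gno)$ for all $i,h,s$; combined with $V_{i,h}^{\bpi}(s) = \VR_{i,h}^{\bpi}(s,\gno)$, the MPRNE condition $V_{i,h}^{\bpi}(s) = V_{i,h}^{\dagger,\bpi_{-i}}(s)$ of Definition~\ref{def:RNE} holds for all $i,h,s$ if and only if the MPNE condition $\VR_{i,h}^{\bpi}(s,\gno) = \sup_{\pi_i'\in\Pi_i}\VR_{i,h}^{\pi_i'\times\bpi_{-i}}(s,\gno)$ of Definition~\ref{def:RE_NE} holds for all $i,h,s$. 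This gives the claimed equivalence, and the corollary is then immediate. I expect the only delicate point to be the bookkeeping in the inductive step: one must check that the (possibly policy-dependent) uncertainty set $\rset_{i,s,h}$ appearing in Proposition~\ref{prop:reward_Markov} and the regularizer $\Omega_{i,h}(s,\bpi_h(s))$ are evaluated at the same joint policy term by term — and, crucially, that this is still true for the deviating policies $\pi_i'\times\bpi_{-i}$, whose $-i$ component remains $\bpi_{-i,h}(s)$ in both the support-function argument and the uncertainty set, so that Proposition~\ref{prop:reward_Markov} applies verbatim to those policies as well.
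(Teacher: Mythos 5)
Your proposal is correct and follows essentially the same route as the paper: Proposition~\ref{prop:reward_Markov} plus backward induction to establish $V_{i,h}^{\bpi}(s)=\VR_{i,h}^{\bpi}(s,\gno)$ for every product policy, and then matching the MPRNE and MPNE conditions, exactly as in the paper's proof of Theorem~\ref{thm:equiv_reward_Markov} from which the corollary is immediate. If anything, your final step is slightly more careful than the paper's, since you verify the equivalence of the best-response conditions at every $(i,h,s)$ via the unilateral deviations $\pi_i'\times\bpi_{-i}$ rather than only at the initial stage.
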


\begin{remark*}
    The same results from \cref{thm:example_matrix} extend to the full RMG setting by simply using the same shape of uncertainty set for each stage $(s,h)$, as stated in Theorem \ref{thm:example_Markov} in Appendix~\ref{sec:example_reward_Markov}. 
    In particular, our results apply to the well-known regularized game, including $\ell_p/\ell_q$-norm regularization and decomposable regularizers such as entropy regularization and KL divergence. In the following, we give the example of widely employed entropy regularization.
\end{remark*}

\paragraph{Example: Entropy Regularization.} Consider Shannon entropy regularized Markov game with 
\begin{align*}
    \Omega_{i,h}(s,\mu)=\tau_i\sum_{a_{i}\in\cA_{i}}\mu_{i}(a_i)\log\mu_{i}(a_{i}),\quad\forall \mu\in\Delta(\cA),
\end{align*}
where $\tau_i>0$ denotes regularization factor.
The regularized NE is equivalent to the RNE of a robust MG with $(s,a)$-rectangular policy-dependent uncertainty set $\uset({\bpi})=\pno\times\left(\brno+\rset({\bpi})\right),$ where 
$\rset({\bpi})=\times_{i,s,\ba,h} \rset_{i,s,\ba,h}({\bpi})$ with
\begin{align}
\cR_{i,s,\ba,h}({\bpi})&=\big[\tau_{i}\log\pi_{i,h}(a_{i}|s), 
\wover_{i,s,h}(\pi_{i,h}(a_{i}|s))\big],\nonumber
\end{align}
where the function $\wover_{i,s,h}:[0,1]\rightarrow \R$ is continuous and non-negative to ensure that $\brno\in \uset^{\bpi}.$
The reader can find additional details of other regularizers in Section~\ref{sec:example_reward_Markov}.

\section{Efficient Algorithms for Robust Zero-Sum MG}\label{sec:efficient}

Although the results from \cref{sec:reward_robust} provide insights that allow us to solve RMGs, they do not yield efficient algorithms in general. Our reduction may result in general-sum regularized games which are hard to solve. 
As we will see shortly, even for a two-player zero-sum Markov game with reward uncertainty, computing the RNE is PPAD-hard in general. However, we show that when uncertainty sets satisfy a natural decomposition assumption, then an MPRNE of a two-player zero-sum Markov game can be computed in polynomial time. Proofs for the results in this section are deferred to \cref{sec:proof_section_efficient}. 

First, we extend the notion of zero-sum Markov games to the robust setting.
\begin{definition}[Two-Player Zero-Sum RMG]
    A RMG $\fG$ is a \emph{two-player zero-sum RMG} (TPZS RMG) with reward uncertainty if $N = 2$ and for each $h \in [H]$ and $s \in \cS$, 
    $\brno_{2,h}(s,\cdot) = - \brno_{1,h}(s,\cdot)$.
\end{definition}
While this is a crucial first step, focusing on TPZS RMGs is insufficient to guarantee efficient algorithms. Surprisingly, even with the most restrictive notion of $(s, a)$-rectangularity, TPZS RMGs are PPAD-hard to solve. Observe this stands in sharp contrast to traditional game theory where two-player zero-sum games can be efficiently solved using LP methods.

\begin{theorem}\label{thm: zero-sum-hardness}
    Even restricted to the class of $(s,a)$-rectangular uncertainty sets, computing an RNE of a TPZS RMG is PPAD-hard even for $H = S = 1$. 
\end{theorem}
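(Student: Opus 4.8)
The plan is to reduce from the problem of computing a Nash equilibrium of a general-sum two-player bimatrix game, which is PPAD-hard (indeed PPAD-complete). Given an arbitrary bimatrix game with payoff matrices $A$ for player $1$ and $B$ for player $2$ (say with entries normalized to a bounded range), I want to build a TPZS RMG with $S = H = 1$ and an $(s,a)$-rectangular reward uncertainty set whose RNE, via the equivalence of \cref{thm:equiv_reward_matrix}, are exactly the Nash equilibria of the original bimatrix game. The nominal game will be zero-sum, say with player $1$'s nominal payoff matrix $\brno_1 = \tfrac12(A - B^\top)$ and $\brno_2 = -\brno_1$, so that the zero-sum constraint $\brno_{2}(s,\cdot) = -\brno_{1}(s,\cdot)$ holds by construction. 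The uncertainty then has to ``bend'' each player's robust value so that, after subtracting the support-function regularizer $\sigma_{\rset_i}(-\pi_i \bpi_{-i}^\top)$ as in \eqref{eq:v_pi_sigma}, player $i$'s effective payoff against $\bpi_{-i}$ has the same best-response structure as in the target bimatrix game.

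The key computational step is to choose the $(s,a)$-rectangular uncertainty sets $\rset_{i,\ba}$ (one scalar interval per joint action $\ba$, as in \Cref{assu:(s,a)-rectangular}) so that
\[
\sum_{\ba} \pi_i(a_i)\,\bpi_{-i}(a_{-i})\,\rbei_{i}(\ba) \;-\; \sigma_{\rset_i}\!\big(-\pi_i\bpi_{-i}^\top\big) \;=\; \pi_i^\top M_i \,\bpi_{-i}
\]
for the desired matrix $M_1 = A$, $M_2 = B^\top$ (up to an additive constant in $\bpi_{-i}$ that does not affect player $i$'s best response). Since $\sigma_{\rset_i}(y) = \sum_{\ba} \sup_{x \in \rset_{i,\ba}} x_{\ba} y_{\ba}$ decomposes coordinatewise, and $y = -\pi_i\bpi_{-i}^\top$ has entries $-\pi_i(a_i)\bpi_{-i}(a_{-i}) \le 0$, each term $\sup_{x\in\rset_{i,\ba}} x_{\ba}\,(-\pi_i(a_i)\bpi_{-i}(a_{-i}))$ is, with $\rset_{i,\ba} = [\ell_{i,\ba}, u_{i,\ba}]$, equal to $-\ell_{i,\ba}\,\pi_i(a_i)\bpi_{-i}(a_{-i})$ when $\ell_{i,\ba} \le 0$ (and more generally $-\min(\ell_{i,\ba},0)\cdots$; I will just take $\ell_{i,\ba} \le 0$). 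Hence the robust value becomes $\sum_{\ba}(\rbei_{i,\ba} + \ell_{i,\ba})\pi_i(a_i)\bpi_{-i}(a_{-i})$, a genuine bilinear form, and I can realize any target matrix $M_i$ by setting $\rbei_{i,\ba} + \ell_{i,\ba} = (M_i)_{a_i,a_{-i}}$. Choosing $\rbei_i = \brno_i$ as required and $\ell_{i,\ba} = (M_i)_{\ba} - \brno_{i,\ba} =: c_{i,\ba}$, I must make sure $\ell_{i,\ba} \le 0$, which I arrange by adding a common large negative constant to $M_i$ first (harmless for Nash equilibria since it shifts all of player $i$'s payoffs uniformly), and pick any $u_{i,\ba} > \ell_{i,\ba}$, e.g. $u_{i,\ba} = \ell_{i,\ba} + 1$. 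One then checks this uncertainty set is closed, convex, policy-independent, and bounded, so \Cref{assu:reward} holds and an RNE exists.

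The remaining step is bookkeeping: by \cref{thm:equiv_reward_matrix} (or directly from \eqref{eq:v_pi_sigma}), $\bpi$ is an RNE of this robust matrix game iff, for each $i$, $\pi_i$ maximizes $\pi_i^\top M_i \bpi_{-i}$ over $\Delta(\cA_i)$ — that is, iff $\bpi$ is a Nash equilibrium of the bimatrix game $(M_1, M_2^\top) = (A', B'^\top \cdots)$, which has the same equilibria as the original $(A,B)$ after undoing the uniform shifts. Since the construction is clearly polynomial-time in the size of $(A,B)$, PPAD-hardness transfers. The main obstacle — really the only subtle point — is verifying that the coordinatewise support function of an interval box, evaluated at the sign-definite vector $-\pi_i\bpi_{-i}^\top$, collapses exactly to the linear term $-\ell_{i,\ba}\pi_i(a_i)\bpi_{-i}(a_{-i})$ with no leftover curvature, and confirming that the resulting equivalent game is still a legitimate (if degenerate) zero-sum nominal RMG in the sense of the TPZS definition while nonetheless reproducing arbitrary general-sum equilibria; I expect this to go through cleanly but it is where the argument must be stated carefully.
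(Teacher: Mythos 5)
Your reduction is correct and is essentially the paper's own proof: both arguments reduce from general-sum bimatrix Nash equilibrium by taking the zero-sum nominal payoff $\tfrac{1}{2}(A-B)$ together with per-joint-action interval uncertainty whose support function, evaluated at the entrywise-nonpositive matrix $-\pi_i\bpi_{-i}^{\top}$, collapses to the bilinear form determined by the intervals' lower endpoints (which the paper sets to $\tfrac{1}{2}(A+B)$ for both players, exactly what your equations $\brno_{i}+\ell_i=M_i$ produce), so that each player's robust value reproduces $A$ and $B$ respectively. The only (harmless) slip is your parenthetical that the per-coordinate supremum equals $-\min(\ell_{i,\ba},0)\,\pi_i(a_i)\bpi_{-i}(a_{-i})$ in general --- it equals $-\ell_{i,\ba}\,\pi_i(a_i)\bpi_{-i}(a_{-i})$ for any $\ell_{i,\ba}$, so the sign restriction and the uniform payoff shift are not actually needed.
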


\begin{proof}[Proof Sketch] 

    We present a poly-time reduction from the problem of computing an NE for a general-sum game to the problem of computing an RNE for a two-player zero-sum robust matrix game with $(s,a)$-rectangular reward uncertainty. Since computing an NE of a general-sum game is PPAD-hard, it then follows that computing an RNE for the aforementioned class of robust matrix games is also PPAD-hard. Let $(A,B)$ be an arbitrary general sum matrix game. 

    The idea is to construct a two-player zero-sum robust matrix game $\fG$ defined by $(\brno, \rset_1, \rset_2)$ so that the set of solutions to both games is the same. First, we choose matrices $\brno$, $R_1$ and $R_2$ satisfying $\brno + R_1 = A$ and $\brno + R_2 = -B$. Then, we choose $\rset_1$ and $\rset_2$ satisfying $-\sigma_{\rset_1}(-\pi_1 \pi_2^{\top}) = \pi_1^{\top} R_1 \pi_2$ and $-\sigma_{\rset_1}(-\pi_1 \pi_2^{\top}) = \pi_1^{\top} R_2 \pi_2$. Consequently, we see the robust suboptimality gap for $\fG$ simplifies to,
\begin{align*}
    & \rgap_1(\bpi)+\rgap_2(\bpi) \\
        =& \max_{\pi_1'\in\Delta(\cA_1)}\big\{ \pi_1'^\top (\rno+R_1)\pi_2\big\} -\pi_1^\top (\rno+R_1)\pi_2 \\
        &+\max_{\pi_2'\in\Delta(\cA_2)}\big\{ \pi_1^\top (-\rno-R_2)\pi_2'\big\} - \pi_1^\top (-\rno-R_2) \pi_2 \\
        =& \max_{\pi_1'\in\Delta(\cA_1)}\big\{ \pi_1'^\top A\pi_2\big\} -\pi_1^\top A \pi_2 \\
        &+\max_{\pi_2'\in\Delta(\cA_2)}\big\{ \pi_1^\top B\pi_2'\big\} - \pi_1^\top B \pi_2.
    \end{align*}
    We see the total suboptimality gap of $\bpi$ for $\fG$ exactly matches the suboptimality gap of $\bpi$ for $(A,B)$. This implies the set of solutions to both games is the same as desired.
\end{proof}

The proof of \cref{thm: zero-sum-hardness} is provided in \cref{sec:proof_zero-sum-hardness}. Thanks to \cref{thm: zero-sum-hardness}, we cannot hope to solve TPZS RMGs even under the simplest $(s,a)$-rectangularity assumption. However, the proof reveals key structural properties of the uncertainty set that lead to hardness. Specifically, we needed the uncertainty sets to satisfy $-\sigma_{\rset_1}(-\pi_1 \pi_2^{\top}) = \pi_1^{\top} R_1 \pi_2$. In particular, observe the support function involves a product of $\pi_1$ and $\pi_2$. This property allowed us to simulate a general sum game through the uncertainty sets. 

One way to avoid such hardness is to consider uncertainty sets that break the support function into two separate pieces: one for $\pi_1$ and one for $\pi_2$. Formally, suppose that $\sigma_{\rset_{i}}(-\pi_{i}\bpi_{-i}^{\top}) = \Omega_{i,i}(\pi_i) + \Omega_{i,-i}(\bpi_{-i})$. Immediately, we see that an uncertainty set satisfying $-\sigma_{\rset_1}(-\pi_1 \pi_2^{\top}) = \pi_1^{\top} R_1 \pi_2$ is no longer possible.
Even better, by inspecting the RNEGap of $\fG$, we see that each of the $\Omega_{i,-i}(\bpi_{-i})$ terms cancel out, leading to the following RNEGap: 
\begin{align*}
    &\max_{\pi_1' \in\Delta(\cA_1)} \left\{ {\pi_{1}'}^{\top}\rno\pi_{2}-\Omega_{1,1}(\pi_1') + \Omega_{2,2}(\pi_2) \right\} \\
    &- \min_{\pi_2' \in\Delta(\cA_2)} \left\{ {\pi_{1}}^{\top}\rno\pi_{2}' - \Omega_{1,1}(\pi_1) +\Omega_{2,2}(\pi_2') \right\}.
\end{align*}

Observe this is exactly the suboptimality gap of $\bpi$ for a TPZS regularized game. Importantly, TPZS regularized games can be solved in polynomial time so long as the regularizer functions are strongly convex \cite{facchinei2003finite,cherukuri2017role}. Thus, as long as the decomposition of $\sigma$ into $\Omega$ functions is known or can be computed efficiently, solving the robust matrix game can also be done efficiently by solving the zero-sum regularized game.

\begin{definition}[Efficiently Player-Decomposable]\label{assu:efficient}
    Suppose $\fG$ is a TPZS RMG with $s$-rectangular reward uncertainty. We say that $\fG$ is \emph{efficiently player-decomposable} if $\forall i \in \{1,2\}, s \in \cS, h \in [H], \mu \in \Delta(\cA)$, $\sigma_{\rset_{i,s,h}}(-\mu_i \mu_{-i}^{\top}) = \Omega_{i,i}^h(\mu_i) + \Omega_{i,-i}^h(\mu_{-i}),$ where each $\Omega$ is a strongly convex function that is either known in advance or can be computed from $\fG$ in polynomial time.
\end{definition}

\begin{lemma}\label{lem:efficient-decomp}
    Suppose $\fG$ is a robust matrix game that satisfies \cref{assu:efficient} and $\fG'$ is the corresponding regularized game constructed in \cref{thm:equiv_reward_matrix}. Then, $\fG'$ is a TPZS regularized game with separate regularizers, $\Omega_i = \Omega_{i,i}$ for each player $i$. Therefore, computing an RNE for $\fG$ can be done in polynomial time by solving $\fG'$.
\end{lemma}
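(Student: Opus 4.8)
The plan is to chain the reduction of \cref{thm:equiv_reward_matrix} with the player-decomposition hypothesis and then invoke known saddle-point solvers. First I would apply \cref{thm:equiv_reward_matrix}: a joint policy $\bpi$ is an RNE of $\fG$ if and only if it is an NE of the regularized matrix game $\fG'$ with payoffs $\brno$ and regularizers $\Omega_i(\bpi)=\sigma_{\rset_i}(-\pi_i\bpi_{-i}^{\top})$ (the convexity precondition in \cref{assu:reward} is automatic from the strong convexity supplied by \cref{assu:efficient}, and the remaining regularity conditions are as in the statement). Since $N=2$, $\bpi_{-i}$ reduces to the single opposing policy $\pi_{-i}$, so \cref{assu:efficient} lets me substitute $\Omega_i(\bpi)=\Omega_{i,i}(\pi_i)+\Omega_{i,-i}(\pi_{-i})$, with each piece strongly convex and, by hypothesis, available in polynomial time. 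Hence player $i$'s regularized value in $\fG'$ is $\VR_i^{\bpi}=\pi_i^{\top}\rno_i\pi_{-i}-\Omega_{i,i}(\pi_i)-\Omega_{i,-i}(\pi_{-i})$, where $\rno_1=\rno$ and $\rno_2=-\rno$.

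The key observation is that the cross term $\Omega_{i,-i}(\pi_{-i})$ does not depend on player $i$'s own policy $\pi_i$, hence it is irrelevant to player $i$'s best-response problem: $\argmax_{\pi_i'\in\Delta(\cA_i)}\VR_i^{\pi_i'\times\bpi_{-i}}$ is unchanged if we delete $\Omega_{i,-i}(\pi_{-i})$ from the objective. Therefore, by \cref{def:RE_NE}, the NE set of $\fG'$ coincides exactly with the NE set of the game $\hat\fG$ in which player $i$'s regularizer is $\Omega_i=\Omega_{i,i}$ alone. It remains to note that $\hat\fG$ is a two-player zero-sum regularized game with separate regularizers: using $\rno_2=-\rno_1$, the two objectives are $\pi_1^{\top}\rno\pi_2-\Omega_{1,1}(\pi_1)$ and $-\pi_1^{\top}\rno\pi_2-\Omega_{2,2}(\pi_2)$, i.e.\ a zero-sum bilinear game perturbed by additively separable, strongly convex regularizers.

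Finally I would establish the efficiency claim. An NE of $\hat\fG$ is precisely a saddle point of
\[
f(\pi_1,\pi_2)=\pi_1^{\top}\rno\pi_2-\Omega_{1,1}(\pi_1)+\Omega_{2,2}(\pi_2)
\]
over $\Delta(\cA_1)\times\Delta(\cA_2)$, which is strongly concave in $\pi_1$ and strongly convex in $\pi_2$ because $\Omega_{1,1},\Omega_{2,2}$ are strongly convex. Such a saddle point exists and is unique, and it can be computed in polynomial time by the standard machinery for strongly monotone variational inequalities / convex--concave saddle-point optimization over simplices \cite{facchinei2003finite,cherukuri2017role}. Since \cref{assu:efficient} supplies the $\Omega_{i,i}$ in polynomial time and $\hat\fG$ is built from $\fG$ in polynomial time, the full pipeline — extract the decomposition, form $\hat\fG$, run the saddle-point solver, return the output — runs in polynomial time, and by the chain of equivalences its output is an RNE of $\fG$. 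I expect the main difficulty to be bookkeeping rather than conceptual: pinning down precisely what ``polynomial time'' means (exact versus $\epsilon$-approximate equilibrium, and quoting the correct iteration-complexity bound for strongly monotone VIs on the simplex from the cited works), and confirming that the regularity hypotheses inherited from \cref{thm:equiv_reward_matrix} are in force; the equilibrium-invariance argument itself is short.
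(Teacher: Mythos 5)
Your proposal is correct and follows essentially the same route as the paper: substitute the player-decomposition of the support function into the regularized value from \cref{thm:equiv_reward_matrix}, observe that the $\Omega_{i,-i}(\pi_{-i})$ term is independent of player $i$'s own policy and therefore drops out of the best-response problem, and conclude that the remaining game is a two-player zero-sum bilinear game with separable strongly convex regularizers solvable in polynomial time. The paper phrases the cancellation in terms of the summed robust suboptimality gaps (RNEGap) rather than NE-set invariance under dropping the cross term, but these are the same argument.
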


The proof of \Cref{lem:efficient-decomp} can be found in \Cref{sec:proof_efficient-decomp}. \Cref{lem:efficient-decomp} implies we can solve each stage game so long as the decomposition holds for each stage. Thus, to solve a zero-sum RMG, we can solve each stage's regularized game using backward induction. 

\begin{theorem}\label{thm:efficient-mg}
    Suppose $\fG$ is an RMG that satisfies \cref{assu:efficient} and $\fG'$ is the corresponding regularized MG constructed in \cref{thm:equiv_reward_Markov}. Then, $\fG'$ is a TPZS regularized MG with separate regularizers, $\Omega_{i,h}(s,\cdot) = \Omega_{i,i}^h(s,\cdot)$ for each player $i$ and stage $(s,h)$. Therefore, computing an MPRNE for $\fG$ can be done in polynomial time by solving $\fG'$.
\end{theorem}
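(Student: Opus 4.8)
The plan is to combine the stage-wise equivalence from \cref{thm:equiv_reward_Markov} with the decomposition structure of \cref{assu:efficient} and the matrix-game argument of \cref{lem:efficient-decomp}, then close the loop with backward induction. First I would observe that \cref{thm:equiv_reward_Markov} already identifies $\fG'$ as the regularized MG with $\Omega_{i,h}(s,\mu) = \sigma_{\rset_{i,s,h}}(-\mu_i\mu_{-i}^\top)$, so that $\bpi$ is an MPRNE for $\fG$ iff $\bpi$ is an MPNE for $\fG'$; the remaining work is purely to show (a) $\fG'$ is two-player zero-sum with separate regularizers and (b) it can be solved in polynomial time. For (a), fix $(s,h)$ and apply \cref{assu:efficient}: for each $i \in \{1,2\}$ we get $\Omega_{i,h}(s,\mu) = \sigma_{\rset_{i,s,h}}(-\mu_i\mu_{-i}^\top) = \Omega_{i,i}^h(s,\mu_i) + \Omega_{i,-i}^h(s,\mu_{-i})$. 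Writing out the regularized $Q$-functions at stage $(s,h)$ under the inductive hypothesis that $\VR_{i,h+1}^{\bpi}$ (equivalently $V_{i,h+1}^{\bpi}$) already satisfies the zero-sum relation $\VR_{2,h+1}^{\bpi} = -\VR_{1,h+1}^{\bpi}$, the stage game each player faces has payoff $\mu_1^\top(\rno_{1,h}(s,\cdot) + [\pno_h \VR_{1,h+1}^\bpi](s,\cdot))\mu_2$ for player 1 and its negation for player 2 (using $\brno_{2,h} = -\brno_{1,h}$), minus the respective regularizers. This is exactly a TPZS matrix game with separate strongly convex regularizers $\Omega_{i,i}^h(s,\cdot)$, because the cross terms $\Omega_{i,-i}^h(s,\mu_{-i})$ do not depend on player $i$'s own variable and hence drop out of player $i$'s best-response problem — this is the same cancellation observed in the RNEGap computation preceding \cref{assu:efficient}. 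Applying \cref{lem:efficient-decomp} to this stage game shows its (regularized) NE is computable in polynomial time, and one also verifies the resulting stage value satisfies $\VR_{2,h}^\bpi = -\VR_{1,h}^\bpi$, completing the induction step.

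For the algorithmic claim, I would spell out the backward-induction procedure: for $h = H, H-1, \dots, 1$, and each $s \in \cS$, form the matrix $M_h(s) := \rno_{1,h}(s,\cdot) + [\pno_h \widetilde{V}_{1,h+1}^{\bpi}](s,\cdot)$ (with $\widetilde V_{1,H+1} \equiv 0$), compute via \cref{lem:efficient-decomp} an NE $\bpi_h(\cdot\mid s)$ of the TPZS regularized matrix game with payoff $M_h(s)$ and regularizers $\Omega_{1,1}^h(s,\cdot), \Omega_{2,2}^h(s,\cdot)$, and set $\widetilde V_{1,h}^{\bpi}(s)$ to the resulting game value. The total number of stage games is $SH$, each solved in polynomial time (the $\Omega$ functions are either given or computable in poly time by \cref{assu:efficient}, and the expectations $[\pno_h \widetilde V_{1,h+1}^\bpi](s,\ba)$ are poly-time computable), so the overall runtime is polynomial. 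That the policy produced is genuinely an MPNE of $\fG'$ follows from the standard dynamic-programming characterization of Markov-perfect equilibria (each stage policy is a mutual best response given the continuation values, which by the induction are the true regularized values of $\bpi$ from $h+1$ on), and then \cref{thm:equiv_reward_Markov} transfers this to MPRNE of $\fG$.

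The main obstacle I anticipate is making the inductive zero-sum bookkeeping fully rigorous: one must check that the cross-regularizer terms $\Omega_{i,-i}^h$ not only cancel in each player's best-response problem but also that the \emph{game value} at stage $(s,h)$ is well-defined and equal for the min and max formulations (minimax theorem for the regularized stage game), so that $\widetilde V_{2,h}^\bpi = -\widetilde V_{1,h}^\bpi$ propagates cleanly. This requires the stage game's regularized payoff to be convex–concave, which holds because $\Omega_{i,i}^h(s,\cdot)$ is (strongly) convex and enters player $i$'s objective with the correct sign; the strong convexity also gives uniqueness of the stage equilibrium, which is what \cref{lem:efficient-decomp}/\cite{facchinei2003finite,cherukuri2017role} exploit for polynomial-time solvability. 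A secondary, more mundane point is verifying that the constructed $\fG'$ satisfies whatever regularity (\cref{assu:reward}) is needed for \cref{thm:equiv_reward_Markov} to apply at every stage — but since $\fG$ is assumed to satisfy \cref{assu:efficient} which presupposes $s$-rectangular reward uncertainty, and the $\Omega$'s are strongly convex (hence the $\sigma_{\rset_{i,s,h}}$ are the relevant convex functions), this should be inherited directly and need only be remarked.
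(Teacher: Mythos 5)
Your proposal is correct and follows essentially the same route as the paper's proof: backward induction over $h$, applying \cref{assu:efficient} to decompose the support function at each stage, folding the continuation value into an effective stage reward $\rno_{i,h}(s,\ba)+[\pno_h\widetilde V_{i,h+1}](s,\ba)$, and invoking \cref{lem:efficient-decomp} to solve each resulting TPZS regularized stage game in polynomial time. Your extra care about propagating the zero-sum value relation $\widetilde V_{2,h}=-\widetilde V_{1,h}$ and the uniqueness of the stage value is a slightly more explicit version of the paper's appeal to the uniqueness of the NE value of the future TPZS game, not a different argument.
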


The proof of \cref{thm:efficient-mg} is provided in \cref{sec:proof_thm_efficient-mg}. We immediately have the following result.

\begin{corollary}\label{cor:efficient-mg}
    RMGs satisfying \cref{assu:efficient} can be solved in polynomial time using any efficient planning or learning algorithm for TPZS regularized MGs.
\end{corollary}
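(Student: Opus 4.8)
The plan is to combine Theorem \ref{thm:efficient-mg} with a standard fact about efficient solvability of two-player zero-sum regularized Markov games, so that the corollary is essentially a bookkeeping consequence. First I would invoke Theorem \ref{thm:efficient-mg}: given an RMG $\fG$ satisfying Assumption \ref{assu:efficient}, the construction in Theorem \ref{thm:equiv_reward_Markov} produces a regularized MG $\fG'$ that is (i) two-player zero-sum, (ii) equipped with \emph{separable} regularizers $\Omega_{i,h}(s,\cdot) = \Omega_{i,i}^h(s,\cdot)$ depending only on player $i$'s own marginal, and (iii) each $\Omega_{i,i}^h$ is strongly convex and either known in advance or computable from $\fG$ in polynomial time. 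Moreover, $\bpi$ is an MPRNE of $\fG$ if and only if $\bpi$ is an MPNE of $\fG'$.

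Next I would argue that the map $\fG \mapsto \fG'$ is a polynomial-time reduction. The transition kernel and nominal payoffs of $\fG'$ are simply those of $\fG$, and the only new objects are the regularizers $\Omega_{i,h}(s,\cdot)$, which by Assumption \ref{assu:efficient} are available or poly-time computable. Hence, $\fG'$ has size polynomial in the size of $\fG$, and any algorithm queried on $\fG'$ incurs only a polynomial overhead to construct. Conversely, an MPNE $\bpi$ returned for $\fG'$ is, by the equivalence in Theorem \ref{thm:efficient-mg}, directly an MPRNE for $\fG$ with no post-processing needed.

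Finally I would appeal to the known efficient solvability of TPZS regularized MGs with strongly convex, separable (own-policy) regularizers: such games admit polynomial-time planning algorithms via backward induction, where each stage game is a strongly-convex--strongly-concave saddle-point problem over the product of simplices (solvable, e.g., by convex programming or first-order methods), and the value functions propagate exactly as in Proposition \ref{prop:reward_Markov}; in the learning setting, efficient no-regret / policy-gradient methods for regularized zero-sum games apply \cite{facchinei2003finite,cherukuri2017role,cen2021game_entropy,zhan2023PMD_regularized}. Composing this with the poly-time reduction above yields a polynomial-time planning (resp.\ learning) algorithm for computing an MPRNE of $\fG$, which is exactly the claim.

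The main obstacle — such as it is — is not in the logic of this corollary but in being precise about what "solved in polynomial time" means for the learning setting: for planning, strong convexity of each $\Omega_{i,i}^h$ guarantees a unique stage-game saddle point and an efficient backward-induction scheme, so the argument is clean; for learning, one must ensure that the off-the-shelf regularized-MG algorithm's guarantees (convergence rate, sample complexity) are stated for exactly the class of separable strongly convex regularizers produced by our reduction, which is why the statement is phrased in terms of "any efficient planning or learning algorithm for TPZS regularized MGs" rather than committing to a single black box. No new technical machinery beyond Theorem \ref{thm:efficient-mg} is required.
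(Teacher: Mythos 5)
Your proposal is correct and matches the paper's intended argument: the corollary is stated as an immediate consequence of Theorem \ref{thm:efficient-mg}, which already asserts that the constructed $\fG'$ is a TPZS regularized MG with separate strongly convex regularizers and that solving $\fG'$ yields an MPRNE of $\fG$ in polynomial time (relying on \cite{facchinei2003finite,cherukuri2017role} for efficient solvability of the stage games). Your additional care about the poly-time computability of the regularizers and the caveat on the learning setting is consistent with, and slightly more explicit than, what the paper writes.
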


We point out that the classes of uncertainty set we discussed in \cref{thm:example_matrix} do not all satisfy \cref{assu:efficient}. Specifically, case 1.\ concerning general ball uncertainty does not satisfy our decomposition. However, we show very similar uncertainty sets do satisfy the conditions and thus can be solved efficiently, as stated in the following theorem.

\begin{theorem}\label{thm:efficient-examples}
Consider a regularized MG with payoff matrix $\brno$ and the regularizer  $\Omega_{i,h}:\cS \times \Pi\rightarrow \R$ for each player $i\in[N]$ and state $s \in \cS$. 
\begin{enumerate}[leftmargin=*]
\item If $\Omega_{i,h}(s,\bpi_{h}) = \alpha_{i,s,h} \norm{\pi_{i,h}(s)}_p$ is the $p$-norm regularizer for each $i\in[N],h \in [H], s \in \cS$, then solving for MPNE of the regularized game is equivalent to solving for MPRNE of the robust game with ball constrained uncertainty set $\rset_{1,s,h}=\big\{ R_{1}\in\R^{\cA_{1}\times\cA_{2}}:\norm{R_{1}}_{\infty\to p}\leq\alpha_{1,s,h}\big\}$ and $\rset_{2,s,h}=\big\{ R_{2}\in\R^{\cA_{1}\times\cA_{2}}:\norm{R_{2}^{\top}}_{\infty\to p}\leq\alpha_{2,s,h}\big\}$.
\item The Markov game version of $(s,a)$-rectangular, policy-dependent reward uncertainty set from part 2.\ of \cref{thm:example_matrix}  carries over without any additional restrictions.
\end{enumerate}
In either case, both the RMG and regularized MG can be solved in polynomial time.
\end{theorem}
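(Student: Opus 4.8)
The plan is to prove each of the two claims in \cref{thm:efficient-examples} by instantiating the general machinery already developed: compute the support functions of the stated uncertainty sets, verify they are efficiently player-decomposable in the sense of \cref{assu:efficient}, and then invoke \cref{thm:efficient-mg} together with \cref{cor:efficient-mg} to conclude polynomial-time solvability. For part~1, the key computation is the support function of the operator-norm ball $\rset_{1,s,h} = \{R_1 : \norm{R_1}_{\infty \to p} \le \alpha_{1,s,h}\}$ evaluated at the rank-one matrix $-\pi_{1,h}(s)\pi_{2,h}(s)^\top$. First I would recall the standard duality fact that for a matrix $Y$, $\sup\{\dotp{X,Y} : \norm{X}_{p \to q} \le \alpha\} = \alpha \norm{Y}_{q^* \to p^*}$ (the dual of the operator norm under the Frobenius pairing), and then exploit that for a \emph{rank-one} argument $Y = -uv^\top$ the operator norm collapses: $\norm{-uv^\top}_{\infty \to p^*}$ factors because $\norm{uv^\top x}_{p^*} = |v^\top x| \cdot \norm{u}_{p^*}$, and $\sup_{\norm{x}_\infty \le 1}|v^\top x| = \norm{v}_1$. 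Plugging in $u = \pi_{1,h}(s)$, $v = \pi_{2,h}(s)$, and using that $\pi_{2,h}(s)$ is a probability vector so $\norm{\pi_{2,h}(s)}_1 = 1$, gives $\sigma_{\rset_{1,s,h}}(-\pi_{1,h}(s)\pi_{2,h}(s)^\top) = \alpha_{1,s,h}\norm{\pi_{1,h}(s)}_{p}$ --- a function of $\pi_{1,h}(s)$ alone, so $\Omega_{1,1}^h(s,\cdot) = \alpha_{1,s,h}\norm{\cdot}_p$ and $\Omega_{1,-i}^h \equiv 0$. The symmetric computation for $\rset_{2,s,h}$, which is defined via the transpose so that the roles of the two probability-vector factors swap, yields $\Omega_{2,2}^h(s,\cdot) = \alpha_{2,s,h}\norm{\cdot}_p$. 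This is exactly the (trivially) player-decomposable structure, so \cref{assu:efficient} holds --- modulo the strong-convexity requirement, which for $p$-norms holds for $p \in (1,2]$ after a standard smoothing/regularization remark, and I would either restrict to that range or add an arbitrarily small strongly convex perturbation and note the resulting approximate equilibrium; this is the only subtle point and I would flag it explicitly.

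For part~2, the work is almost entirely bookkeeping: the Markov-game analogue of the decomposable, policy-dependent uncertainty set from part~2 of \cref{thm:example_matrix} was already shown there to induce a regularizer of the form $\Omega_{i,h}(s,\mu) = \tau_i \sum_{a_i}\mu_i(a_i)\omega_i(\mu_i(a_i)) + (\text{terms in }\mu_{-i})$, where the $\mu_{-i}$-dependent part arises from the $g_i,\gover_i$ functions in \eqref{eq:rset_regularization}. The point is that this $\mu_{-i}$-dependent part is precisely an $\Omega_{i,-i}^h(s,\mu_{-i})$ contribution and the $\mu_i$-dependent part is $\Omega_{i,i}^h(s,\mu_i)$, so the decomposition in \cref{assu:efficient} holds verbatim for every $(s,h)$ with no extra hypotheses --- the strong convexity comes from the decomposable kernel $\omega_i$ (e.g., entropy, which is $1$-strongly convex in $\ell_1$ on the simplex), and efficient computability is immediate since the $\omega_i,\wover_i,g_i,\gover_i$ are given explicitly and assumed continuous/computable. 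Hence \cref{thm:efficient-mg} applies directly.

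Having established that in both cases $\fG$ satisfies \cref{assu:efficient}, I would close by a one-line application of \cref{thm:efficient-mg}: the equivalent regularized MG $\fG'$ is TPZS with separate, strongly convex regularizers, and such games are solved in polynomial time (as noted, e.g., by \cite{facchinei2003finite,cherukuri2017role} via the associated monotone variational inequality / the strong convexity making the per-stage minimax problem a well-conditioned convex--concave saddle point), and backward induction over the $H$ stages preserves polynomial running time. Therefore both the robust MG and the regularized MG are solvable in polynomial time, as claimed. The main obstacle --- and the only place genuine care is needed --- is part~1: making sure the rank-one simplification of the operator norm is correct for the specific exponent pairing $\infty \to p^*$ (in particular getting the conjugates right, and using that the \emph{second} factor is an $\ell_1$-normalized probability vector which is what makes the bilinear term collapse to a single-player norm), and handling the strong-convexity caveat for $p$-norm regularizers outside $(1,2]$. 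Everything else is a direct invocation of the already-proved Theorems~\ref{thm:example_matrix}, \ref{thm:equiv_reward_Markov}, and~\ref{thm:efficient-mg}.
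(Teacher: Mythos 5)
Your overall route is the paper's: evaluate the support function of each operator-norm ball at the rank-one matrix $-\pi_{1,h}(s)\pi_{2,h}(s)^{\top}$, observe that the factor involving the opponent's policy disappears so that the regularizer is player-decomposable, and then hand everything to \cref{thm:efficient-mg} and backward induction; part~2 of your argument coincides with the paper's and is fine. The gap is in the one computation you yourself flag as delicate. The duality identity you invoke has the exponents in the wrong slots: by the paper's \cref{lem:dual_norm_matrix_norm} (equivalently \cref{lem:norm_rank_1_matrix}), $\sup\{\dotp{R,xy^{\top}}:\norm{R}_{p\to q}\le 1\}=\norm{x}_{q^{*}}\norm{y}_{p}=\norm{xy^{\top}}_{p^{*}\to q^{*}}$, not $\norm{xy^{\top}}_{q^{*}\to p^{*}}$. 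Applied to the stated ball $\{R_{1}:\norm{R_{1}}_{\infty\to p}\le\alpha_{1,s,h}\}$ this yields $\sigma_{\rset_{1,s,h}}(-\pi_{1}\pi_{2}^{\top})=\alpha_{1,s,h}\norm{\pi_{1}}_{p^{*}}\norm{\pi_{2}}_{\infty}$: the cross-player factor is $\norm{\pi_{2}}_{\infty}=\max_{a}\pi_{2}(a)$, not the $\norm{\pi_{2}}_{1}$ you compute, and it does not equal $1$ for a mixed strategy. For the set as written the support function is therefore a genuine \emph{product} of a function of $\pi_{1}$ and a function of $\pi_{2}$, and the additive decomposition required by \cref{assu:efficient} fails. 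The cancellation you want ($\norm{\pi_{2}}_{1}=1$ because $\pi_{2}$ is a probability vector) is exactly the right mechanism, but it belongs to the ball $\{R_{1}:\norm{R_{1}}_{1\to p^{*}}\le\alpha_{1,s,h}\}$, whose support function at $-\pi_{1}\pi_{2}^{\top}$ is $\alpha_{1,s,h}\norm{\pi_{1}}_{p}\norm{\pi_{2}}_{1}=\alpha_{1,s,h}\norm{\pi_{1}}_{p}$. (The paper's own proof quotes \cref{lem:dual_norm_matrix_norm} and silently discards the $\norm{\pi_{2,h}(s)}_{\infty}$ factor, so your instinct that the conjugate bookkeeping is the crux is well placed; as written, though, your argument realizes the error rather than resolving it.)

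Your strong-convexity caveat is a genuine observation that the paper does not engage with: any norm is positively homogeneous of degree one and hence never strongly convex (for any $p$, not just $p\notin(1,2]$ --- what is strongly convex for $p\in(1,2]$ is $\norm{\cdot}_{p}^{2}$), so part~1 does not literally satisfy \cref{assu:efficient}, and the paper instead routes the proof through \cref{thm:example_matrix} and \cref{thm:equiv_reward_Markov} stage by stage without re-checking strong convexity. Your proposed remedies (restricting the range of $p$, or adding a small strongly convex perturbation and accepting an approximate equilibrium) are sensible but would have to be carried out and neither recovers the exact statement. The remainder --- decomposability of the kernel-based policy-dependent sets in part~2, and polynomial time via solving the TPZS regularized stage games by backward induction --- matches the paper.
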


\begin{remark*}
    As with \cref{thm:example_matrix}, classical and common regularizers can be applied as special cases of \cref{thm:efficient-examples}. See \cref{sec:example_reward_matrix} for additional discussion.
\end{remark*}

\section{Markov Games with Transition Uncertainty}\label{sec:transition_robust}

Up to this point, we have focused on RMGs with reward uncertainty. In this section, we consider transition uncertainty.
Similar to before, 
we show that RMGs with transition uncertainty can be solved using regularized MG methods. One can
easily integrate the results of this section with those of \cref{sec:reward_robust} to characterize general RMGs with both reward and transition uncertainty. Proofs for the results in this section are deferred to \cref{sec:proof_section_transition}.

If $\fG$ is an RMG with $s$-rectangular transition uncertainty, then the uncertainty set takes the form $\uset=\pset\times\{\brno\}$. We can again derive a robust policy evaluation equation for $\fG$ similar to that in \cref{prop:reward_matrix} for reward uncertainty.

\begin{proposition}
\label{prop:transition_Markov} Consider a RMG $\fG$
with $s$-rectangular uncertainty set $\uset=\pset\times\{\brno\}$. Then for each product joint Markovian policy $\bpi\in\Pi$, the robust value function $\{V_{i,h}^{\bpi}\}_{h\in[H]}$
of each player $i\in[N]$ satisfies 
\begin{equation}\label{eq:transition_V_RBE}
\begin{split}
V_{i,h}^{\bpi}(s) =& \E_{\ba\sim\bpi_{h}(s)}\left[\rno_{i,h}(s,\ba)\right] -\sigma_{\pset_{s,h}}\left(-V_{i,h+1}^{\bpi}\bpi_{h}(s)^{\top}\right)
\end{split}
\end{equation}
where $[V_{i,h+1}^{\bpi}\bpi_{h}(s)^{\top}](s',\ba')=V_{i,h+1}^{\bpi}(s')\bpi_{h}(\ba'|s)$
for all $s'\in\cS,\ba'\in\cA$. 
\end{proposition}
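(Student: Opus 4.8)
The plan is to argue by backward induction on $h$, combining the robust Bellman equation for policy evaluation (Proposition~\ref{prop:Bellman}) with a rewriting of the per-stage worst case as a support function.

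First I would fix a player $i\in[N]$ and a product joint Markovian policy $\bpi$. Since $\bpi$ is held fixed and only the transition kernel is uncertain, we have $V_{i,h}^{\bpi}(s)=\inf_{G\in\uset}\V_{i,h}^{\bpi}(s,G)=\inf_{P\in\pset}\V_{i,h}^{\bpi}\big(s,(P,\brno)\big)$, which is exactly a robust policy-evaluation problem for a Markov decision process whose sole controller is nature. Because $\pset=\times_{(s,h)\in\cS\times[H]}\pset_{s,h}$ is $s$-rectangular, the worst case over the global set $\pset$ can be computed stage by stage; this is the robust Bellman equation (Proposition~\ref{prop:Bellman}; see also \cite{RMDP-RDP,RMDP-s-rect}), which in the present notation reads
\[
V_{i,h}^{\bpi}(s)=\E_{\ba\sim\bpi_h(s)}\big[\rno_{i,h}(s,\ba)\big]+\min_{q\in\pset_{s,h}}\ \E_{\ba\sim\bpi_h(s)}\big[[q\,V_{i,h+1}^{\bpi}](s,\ba)\big],
\]
where $q$ ranges over the transition matrices at $(s,h)$, $[q\,V](s,\ba):=\sum_{s'}q(s'\mid\ba)V(s')$, and we use the convention $V_{i,H+1}^{\bpi}\equiv 0$; the minimum is attained because $\pset_{s,h}$ is compact (closed and bounded, as its elements are transition matrices) and the objective is linear in $q$.

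Next I would rewrite the inner minimization as a support function. Expanding the expectation, $\E_{\ba\sim\bpi_h(s)}\big[[q\,V_{i,h+1}^{\bpi}](s,\ba)\big]=\sum_{\ba,s'}\bpi_h(\ba\mid s)\,q(s'\mid\ba)\,V_{i,h+1}^{\bpi}(s')=\big\langle q,\,V_{i,h+1}^{\bpi}\bpi_h(s)^{\top}\big\rangle$, where the matrix $V_{i,h+1}^{\bpi}\bpi_h(s)^{\top}$ has entries $V_{i,h+1}^{\bpi}(s')\bpi_h(\ba'\mid s)$ and the pairing is the Frobenius inner product. Hence
\[
\min_{q\in\pset_{s,h}}\big\langle q,\,V_{i,h+1}^{\bpi}\bpi_h(s)^{\top}\big\rangle=-\sup_{q\in\pset_{s,h}}\big\langle q,\,-V_{i,h+1}^{\bpi}\bpi_h(s)^{\top}\big\rangle=-\sigma_{\pset_{s,h}}\big(-V_{i,h+1}^{\bpi}\bpi_h(s)^{\top}\big),
\]
and substituting this into the previous display yields exactly \eqref{eq:transition_V_RBE}. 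The base case $h=H$ is immediate since $V_{i,H+1}^{\bpi}\equiv 0$ makes $\sigma_{\pset_{s,H}}(0)=0$.

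The step I expect to require the most care is the robust Bellman recursion itself, i.e., justifying that the infimum over the global product set $\pset$ collapses to a per-stage infimum over $\pset_{s,h}$. The ``$\ge$'' direction follows by taking an arbitrary $P\in\pset$, applying the ordinary (non-robust) Bellman equation at step $h$, and bounding below via the inductive hypothesis ($\V_{i,h+1}^{\bpi}(\cdot,(P,\brno))\ge V_{i,h+1}^{\bpi}(\cdot)$) together with monotonicity of $V\mapsto\E_{\ba\sim\bpi_h(s)}[[P_h V](s,\ba)]$; the ``$\le$'' direction follows by stitching a (near-)minimizer at $(s,h)$ together with a near-robust kernel on the remaining stages $h'\ge h+1$ supplied by the inductive hypothesis into a single element of $\pset$, which is legitimate precisely because $\pset$ is the Cartesian product of the $\pset_{s,h}$. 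Equivalently, one may simply invoke the established $s$-rectangular robust-MDP policy-evaluation result, since fixing $\bpi$ reduces the RMG to an RMDP controlled by nature. Finiteness of every $V_{i,h}^{\bpi}$ used en route is guaranteed by the finite horizon $H$ and the fixed, bounded nominal reward $\brno$.
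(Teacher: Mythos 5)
Your proposal is correct and follows essentially the same route as the paper: backward induction on $h$, invoking the robust Bellman equation of Proposition~\ref{prop:Bellman} (with the reward infimum trivial since $\urset=\{\brno\}$) to reduce to a per-stage infimum over $\pset_{s,h}$, and then rewriting $\inf_{P_h}\langle P_h, V_{i,h+1}^{\bpi}\bpi_h(s)^{\top}\rangle$ as $-\sigma_{\pset_{s,h}}(-V_{i,h+1}^{\bpi}\bpi_h(s)^{\top})$ via the Frobenius-inner-product identity. The extra material you include on justifying the stage-wise collapse of the global infimum is handled in the paper by the separately proved Proposition~\ref{prop:Bellman}, so it is not needed here, but it is consistent with that proof.
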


Observe that \eqref{eq:transition_V_RBE} replaces the linear expected
future value function in a standard MG with a support function $\sigma_{\pset_{s,h}}(\cdot)$,
which depends on both the policy and future robust value. Depending
on the uncertainty set, the support function might involve a \emph{non-linear}
transformation of the policy and the value function.

To accommodate the non-linearity, we introduce a generalization of regularized Markov games where $\Omega:=\left(\Omega_{s,h}\right)_{s\in\cS,h\in[H]}$
is a finite set of policy-value regularization functions such that
for all $s\in\cS,h\in[H]$, $\Omega_{s,h}:\Delta(\cS)^{\cA}\times\R^{\cS}\times\Delta(\cA)\rightarrow\R$
satisfies that for each $P_{h}\in\Delta(\cS)^{\cA}$ and for each
$v\in\R^{\cS}$, $\Omega_{s,h}(P_{h},v,\cdot)$ is convex. Given a
joint policy $\bpi$ for a MG with $G=(P,\br)$, the general \emph{policy-value}
\emph{regularized }value function is defined recursively as follows: $\forall s\in\cS,h\in[H],$
\begin{align}
\VRT_{i,h}^{\bpi}(s,G) :=& \E_{\ba\sim\bpi_{h}(s)}\left[r_{i,h}(s,\ba)\right] \nonumber\\
&-\Omega_{s,h}\left(P_{h},-V_{i,h+1}^{\bpi},\bpi_{h}(s)\right).\label{eq:V_T_REMG}
\end{align}
Note that if $\Omega_{s,h}\left(P_{h},-v,\mu\right)=\dotp{P_{h},-v\mu^{\top}}=-\E_{\ba\sim\mu}\left[[P_{h}v](s,\ba)\right]$
for all $h\in[H]$, this regularized MG reduces to the standard MG. 

Viewing the support function $\sigma_{\pset_{s,h}}(\cdot)$ as the
policy-value regularizer, Proposition \ref{prop:transition_Markov}
implies that the RNE of robust Markov games with transition uncertainty
is equivalent to the NE of the regularized Markov game. 

\begin{theorem}
\label{thm:equiv_transition_Markov} 
Consider a RMG $\fG$ with $s$-rectangular uncertainty set $\uset=\pset\times\{\brno\}$. Consider the policy-value regularized MG $\fG' = (\cS,\{\cA_{i}\}_{i\in[N]},\pno,\brno,H,\Omega),$ where the regularizer
functions $\Omega:=\left(\Omega_{s,h}\right)_{s,h\in[H]}$ satisfies
$\Omega_{s,h}\left(\pno_{h},-v,\mu\right)=\sigma_{\pset_{s,h}}\left(-v\mu^{\top}\right),$
$\forall v\in\R^{\cS},\forall\mu\in\Delta(\cA)$. Then, $\bpi$ is an MPRNE for $\fG$ if and only if $\bpi$ is an MPNE for $\fG'$. 
\end{theorem}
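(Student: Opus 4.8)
The plan is to run the same reduction as in the reward-uncertainty case (\cref{thm:equiv_reward_Markov}), but now with the robust policy-evaluation identity \eqref{eq:transition_V_RBE} of \cref{prop:transition_Markov} as the single structural input. There are three ingredients: (i) check that the proposed $\Omega$ is a legitimate policy-value regularizer, so that $\fG'$ is well defined; (ii) show that the policy-value regularized value function of $\fG'$ evaluated at the nominal model equals the robust value function of $\fG$ for every joint policy; and (iii) deduce the equivalence of the two equilibrium notions from (ii). For (i): with $\Omega_{s,h}(\pno_h,-v,\cdot):\mu\mapsto\sigma_{\pset_{s,h}}(-v\mu^\top)$, the map $\mu_i\mapsto -v\mu^\top$ is affine for fixed $\mu_{-i}$ and fixed $v\in\R^{\cS}$ (since $[v\mu^\top](s',\ba')=v(s')\prod_{j}\mu_j(a_j')$ depends linearly on $\mu_i$), while $\sigma_{\pset_{s,h}}$ is the support function of a compact convex set, hence finite-valued and convex; so $\Omega_{s,h}(\pno_h,-v,\cdot)$ is real-valued and convex in each player's policy with the others held fixed, as required.

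The crux is (ii). I would prove by backward induction on $h$ that for every product Markov joint policy $\bpi\in\Pi$, every player $i$, and every $(s,h)$,
\[
\VRT_{i,h}^{\bpi}(s,\gno)=V_{i,h}^{\bpi}(s).
\]
The terminal step is the zero boundary value on both sides. For the inductive step, I substitute the chosen regularizer into the defining recursion \eqref{eq:V_T_REMG}; using the induction hypothesis $V_{i,h+1}^{\bpi}=\VRT_{i,h+1}^{\bpi}(\cdot,\gno)$, its right-hand side becomes $\E_{\ba\sim\bpi_h(s)}[\rno_{i,h}(s,\ba)]-\sigma_{\pset_{s,h}}(-V_{i,h+1}^{\bpi}\bpi_h(s)^\top)$, which is exactly the right-hand side of \eqref{eq:transition_V_RBE}, i.e.\ $V_{i,h}^{\bpi}(s)$ by \cref{prop:transition_Markov}. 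This closes the induction, so the regularized value of $\fG'$ at the nominal model and the robust value of $\fG$ coincide identically.

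Given (ii), step (iii) is short. Fix $i$ and a Markov $\bpi_{-i}$; applying (ii) to each deviation $\pi_i'\times\bpi_{-i}$ with $\pi_i'\in\Pi_i$ and taking suprema gives $\sup_{\pi_i'\in\Pi_i}\VRT_{i,h}^{\pi_i'\times\bpi_{-i}}(s,\gno)=\sup_{\pi_i'\in\Pi_i}V_{i,h}^{\pi_i'\times\bpi_{-i}}(s)=V_{i,h}^{\dagger,\bpi_{-i}}(s)$, the last equality being \eqref{eq:V_BR}. Hence the MPRNE condition of \cref{def:RNE}, namely $V_{i,h}^{\bpi}(s)=V_{i,h}^{\dagger,\bpi_{-i}}(s)$ for all $i,h,s$, holds if and only if $\VRT_{i,h}^{\bpi}(s,\gno)=\sup_{\pi_i'\in\Pi_i}\VRT_{i,h}^{\pi_i'\times\bpi_{-i}}(s,\gno)$ for all $i,h,s$, which is the MPNE condition for $\fG'$ (the natural analogue of \cref{def:RE_NE} for policy-value regularized games). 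That is the claimed equivalence.

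I expect the main obstacle to be bookkeeping rather than conceptual depth: one must thread the value function passed to $\Omega_{s,h}$ in \eqref{eq:V_T_REMG} through the induction consistently so that the regularized recursion genuinely lines up with the robust Bellman recursion of \cref{prop:transition_Markov}, and one should note that the best-response supremum in the equilibrium conditions is over Markov policies, which is harmless because fixing $\bpi_{-i}$ collapses each side to a single-agent robust MDP whose optimal value is attained by a Markov policy (as remarked after \eqref{eq:V_BR}). All of the genuinely nonlinear work — handling the policy- and value-dependent support function $\sigma_{\pset_{s,h}}(-v\mu^\top)$ — has already been absorbed into \cref{prop:transition_Markov}, so the theorem follows essentially by matching definitions.
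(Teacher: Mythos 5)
Your proposal is correct and follows essentially the same route as the paper: invoke \cref{prop:transition_Markov} to get the robust Bellman recursion, show $V_{i,h}^{\bpi}(s)=\VRT_{i,h}^{\bpi}(s,\gno)$ by backward induction, and then transfer the best-response/equilibrium conditions across this identity. Your steps (i) and (iii) are in fact slightly more careful than the paper's terse argument (which only spells out the deviation argument at $h=1$), but the substance is identical.
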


Depending on the uncertainty set $\pset_{s,h}$, the support function
can be further simplified, leading to efficient computation
of regularized value functions. Here we give some examples
of transition uncertainty sets that are commonly considered in the literature,
including both $s$-rectangular and $(s,a)$-rectangular
sets. We refer readers to  \cref{sec:examples_transition} for the proof and discussion of additional examples.

\begin{corollary} \label{cor:ball_transition}
Consider a robust MG with uncertainty set $\uset=\pset\times\{\brno\},$ where $\pset=\times_{(s,h)\in\cS\times [H]}\pset_{s,h}$
satisfies ball constraints:
$\pset_{s,h}=\big\{ P\in\R^{\cS\times\cA}:\big\Vert P-\pno_{s,h}\big\Vert_{q^{*}\rightarrow p}\leq\beta_{s,h}\big\}$.
Then the equivalent policy-value regularized MG is associated with the
convex regularizer function $\Omega=\{\Omega_{s,h}\}_{s\in \cS,h\in[H]}$ such that $\forall s\in\cS,v\in\R^{\cS},\mu\in\Delta(\cA),$
\begin{align*}
\Omega_{s,h}&\left(\pno_{h},-v,\mu\right)  = \\
&-\E_{\ba\sim\mu}\left[[\pno_{h}v](s,\ba)\right]+\beta_{s,h}\left\Vert -v\right\Vert _{p}\left\Vert \mu^{\top}\right\Vert _{q}.\label{eq:norm_regularizer_transition}
\end{align*}  
\end{corollary}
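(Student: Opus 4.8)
The plan is to read \cref{cor:ball_transition} directly off \cref{thm:equiv_transition_Markov}: since $\uset=\pset\times\{\brno\}$ is $s$-rectangular, that theorem already identifies the equivalent policy--value regularized MG as the one with regularizers $\Omega_{s,h}(\pno_{h},-v,\mu)=\sigma_{\pset_{s,h}}(-v\mu^{\top})$ for all $v\in\R^{\cS}$ and $\mu\in\Delta(\cA)$. So the whole task is to evaluate this support function when $\pset_{s,h}$ is the operator-norm ball $\{P:\norm{P-\pno_{s,h}}_{q^{*}\to p}\le\beta_{s,h}\}$, and then to note that the answer is convex in $\mu$ for each fixed $v$ (automatic: a linear term plus the support function of a fixed convex set), so it is a legitimate policy--value regularizer in the sense of \eqref{eq:V_T_REMG}. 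One should also observe that $\pset_{s,h}$ is nonempty, compact, convex, and contains $\pno_{s,h}$, so the support function is finite for every $(v,\mu)$ and \cref{thm:equiv_transition_Markov} genuinely applies.

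For the support-function evaluation I would translate the ball to the origin: writing each $P\in\pset_{s,h}$ as $P=\pno_{s,h}+D$ with $\norm{D}_{q^{*}\to p}\le\beta_{s,h}$ and using linearity of the Frobenius inner product,
\[
\sigma_{\pset_{s,h}}(-v\mu^{\top})=\langle\pno_{s,h},-v\mu^{\top}\rangle+\sup_{\norm{D}_{q^{*}\to p}\le\beta_{s,h}}\langle D,-v\mu^{\top}\rangle .
\]
Expanding the first Frobenius product entrywise over $(s',\ba)$ and using the definition $[\pno_{h}v](s,\ba)=\E_{s'\sim\pno_{h}(\cdot\mid s,\ba)}[v(s')]$ gives $\langle\pno_{s,h},-v\mu^{\top}\rangle=-\E_{\ba\sim\mu}\big[[\pno_{h}v](s,\ba)\big]$, which is exactly the linear ``standard-MG'' term of the claimed $\Omega_{s,h}$ (the same term that, by the remark after \eqref{eq:V_T_REMG}, recovers an ordinary MG when $\beta_{s,h}=0$).

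The remaining term is the crux. Here I would use the rank-one identity $\langle D,-v\mu^{\top}\rangle=(-v)^{\top}D\mu$ together with the standard duality between an induced operator norm and rank-one test matrices: for fixed vectors $u,w$, $\sup_{\norm{D}_{a\to b}\le1}u^{\top}Dw=\norm{u}_{b^{*}}\norm{w}_{a}$, where the upper bound is H\"older's inequality applied to $u^{\top}(Dw)$ followed by $\norm{Dw}_{b}\le\norm{D}_{a\to b}\norm{w}_{a}$, and tightness comes from taking $D=c\,\tilde w^{\top}$ with $\tilde w$ a H\"older-dual vector of $w$ (so $\langle\tilde w,w\rangle=\norm{w}_{a}$, $\norm{\tilde w}_{a^{*}}=1$) and $c$ an $\ell_{b}$-unit vector attaining $\langle u,c\rangle=\norm{u}_{b^{*}}$. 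Applying this with $u=-v$, $w=\mu$ and the exponents dictated by the ball's index convention yields $\sup_{\norm{D}_{q^{*}\to p}\le\beta_{s,h}}\langle D,-v\mu^{\top}\rangle=\beta_{s,h}\norm{-v}_{p}\norm{\mu^{\top}}_{q}$; adding this to the linear term produces exactly the stated $\Omega_{s,h}$. The main obstacle is precisely this last step --- getting the conjugate exponents straight and exhibiting the extremal rank-one $D$ --- after which the corollary is routine, with the identical computation (the future value restricted cell-by-cell to a scalar) giving the analogous $(s,a)$-rectangular ball variant.
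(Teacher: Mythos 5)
Your proposal is correct and follows essentially the same route as the paper: translate the ball to the nominal kernel so the support function splits into the linear term $\langle\pno_{h},-v\mu^{\top}\rangle=-\E_{\ba\sim\mu}\big[[\pno_{h}v](s,\ba)\big]$ plus a scaled supremum over the unit operator-norm ball, then evaluate that supremum against the rank-one matrix $-v\mu^{\top}$ via H\"older plus an extremal rank-one construction, and finally invoke \cref{thm:equiv_transition_Markov}. Your inline duality argument is exactly the content of the paper's \cref{lem:dual_norm_matrix_norm}, which the paper cites at the corresponding step, so the two proofs coincide.
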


We remark that for the policy-value regularized game in Corollary \ref{cor:ball_transition}, the corresponding value function
defined in \eqref{eq:V_T_REMG} reduces to 
\begin{align*}
  \VRT_{i,h}^{\bpi}(s,\gno)= &\E_{\ba\sim\bpi_{h}(s)}\big[\rno_{i,h}(s,\ba)+[\pno_{h}\VRT_{i,h+1}^{\bpi}](s,\ba)\big] \\
&-\beta_{s,h}\big\Vert -\VRT_{i,h+1}^{\bpi}\big\Vert _{p}\left\Vert \bpi_{h}(s)\right\Vert _{q}.
\end{align*}

Compared with standard MG, the value function involves an additional
term that penalizes the $\ell_{p}$-norm of future rewards and $\ell_{q}$-norm
of policy. Note that similar to the case with reward uncertainty, the size of the uncertainty set, i.e., the radius $\beta_{s,h}$,
determines the penalty factor.

When the transition uncertainty set is $(s,a)$-rectangular,
the regularizer functions for the equivalent policy-value regularized
MG admits simpler forms.
\begin{corollary}
\label{cor:SA_transition} Consider a robust MG with an uncertainty set
$\uset=\pset\times\{\brno\}$, where $\pset$ is $(s,a)$-rectangular
of the form $\pset=\times_{(s,\ba,h)\in\cS\times\cA\times [H]}\pset_{s,\ba,h}$,
with $\pset_{s,\ba,h}\subset\Delta(\cS)$ being compact and convex.
Then the equivalent policy-value regularized MG is associated with
convex regularizers $\Omega=\{\Omega_{s,h}\}$ such that: $\forall s\in\cS,v\in\R^{\cS},\mu\in\Delta(\cA),$
\[
\Omega_{s,h}\left(\pno_{h},-v,\mu\right)=\E_{\ba\sim\mu}\left[\sigma_{\pset_{s,\ba,h}}(-v)\right].
\]
\end{corollary}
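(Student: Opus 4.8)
The plan is to specialize Theorem~\ref{thm:equiv_transition_Markov} to the $(s,a)$-rectangular case and simplify the support function $\sigma_{\pset_{s,h}}(-v\mu^{\top})$ using the product structure of $\pset_{s,h}$. By Theorem~\ref{thm:equiv_transition_Markov}, the equivalent policy-value regularized MG has regularizer $\Omega_{s,h}(\pno_h,-v,\mu) = \sigma_{\pset_{s,h}}(-v\mu^{\top})$, where $[v\mu^{\top}](s',\ba') = v(s')\mu(\ba')$ is viewed as an element of $\R^{\cS\times\cA}$. So the entire task reduces to evaluating this support function.

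First I would unfold the definition of the support function and use the fact that $(s,a)$-rectangularity means $\pset_{s,h} = \times_{\ba\in\cA}\pset_{s,\ba,h}$, i.e.\ the transition kernel at $(s,h)$ is a product over $\ba\in\cA$ of independent marginals $P(\cdot\mid s,\ba)\in\pset_{s,\ba,h}\subset\Delta(\cS)$. Writing the Frobenius inner product as a sum over $\ba$,
\begin{align*}
\sigma_{\pset_{s,h}}(-v\mu^{\top}) &= \sup_{P\in\pset_{s,h}} \sum_{\ba\in\cA}\sum_{s'\in\cS} P(s'\mid s,\ba)\big(-v(s')\mu(\ba)\big)\\
&= \sum_{\ba\in\cA} \mu(\ba)\,\sup_{P(\cdot\mid s,\ba)\in\pset_{s,\ba,h}} \dotp{P(\cdot\mid s,\ba),\,-v}.
\end{align*}
Here the crucial point is that the supremum over the product set $\times_{\ba}\pset_{s,\ba,h}$ decouples into independent suprema over each factor $\pset_{s,\ba,h}$, which is exactly what rectangularity buys us; the coefficient $\mu(\ba)\geq 0$ (since $\mu\in\Delta(\cA)$) can be pulled out of each factor's supremum without flipping it. Recognizing the inner supremum as $\sigma_{\pset_{s,\ba,h}}(-v)$ by definition, and rewriting $\sum_{\ba}\mu(\ba)(\cdot) = \E_{\ba\sim\mu}[\cdot]$, gives $\Omega_{s,h}(\pno_h,-v,\mu) = \E_{\ba\sim\mu}[\sigma_{\pset_{s,\ba,h}}(-v)]$, as claimed. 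Finally, convexity of $\Omega_{s,h}(\pno_h,-v,\cdot)$ in $\mu$ is immediate since it is linear in $\mu$; each $\sigma_{\pset_{s,\ba,h}}$ is finite-valued because $\pset_{s,\ba,h}$ is compact, so the regularizer is well-defined.

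The main obstacle, if any, is purely bookkeeping: being careful that the vector $v\mu^{\top}\in\R^{\cS\times\cA}$ is being paired correctly against the $\R^{\cS\times\cA}$-valued kernel $P$ under the Frobenius inner product (per the notation convention in the preliminaries), and that the sign of $v$ is handled consistently so that $\sup_{P}\dotp{P,-v}$ really is $\sigma_{\pset_{s,\ba,h}}(-v)$ and not $-\inf$ of something. There is no analytic difficulty — the entire content is the factorization of a supremum over a product set, which is elementary.
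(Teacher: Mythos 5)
Your proposal is correct and follows essentially the same route as the paper's proof: both invoke Theorem~\ref{thm:equiv_transition_Markov} to reduce the claim to evaluating $\sigma_{\pset_{s,h}}(-v\mu^{\top})$, and both exploit the product structure of the $(s,a)$-rectangular set to decouple the supremum into independent suprema over each $\pset_{s,\ba,h}$, weighted by $\mu(\ba)\geq 0$. Your additional remarks on nonnegativity of the weights, linearity in $\mu$, and finiteness from compactness are correct and only make the argument slightly more explicit than the paper's.
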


Various $(s,a)$-rectangular transition uncertainty sets have
been considered for robust MDP \cite{shi2023robustRL,RMDP-RDP}. The uncertain transitions are typically of the form 
\[
\pset_{s,\ba,h}=\left\{ P\in\Delta(\cS):d(P,\pno_{s,\ba,h})\leq\beta_{s,\ba,h}\right\} ,
\]
where $\beta_{s,\ba,h}>0$ represents the level of uncertainty, and
$d(\cdot,\cdot)$ is a distance metric between two probability distributions.
Popular distance metrics include Total variation (TV) distance, KL
distance, Chi-square distance, and Wasserstein distance. For each case,
we can obtain equivalent policy-value regularizer functions. Here
we consider the TV distance and defer the discussion of other distance
metrics to  \Cref{sec:example_SA_transition}.

\paragraph{Example.} Consider a TV uncertainty set given by $\pset=\times_{(s,\ba,h)\in\cS\times\cA\times [H]}\pset_{s,\ba,h}^{\TV}$
with $\pset_{s,\ba,h}^{\TV}=\big\{ P\in\Delta(\cS):d_{\TV}(P,\pno_{s,\ba,h})\leq\beta_{s,\ba,h}\}.$
Here $d_{\TV}(\eta,\eta')=\frac{1}{2}\norm{\eta-\eta'}_{1}=\frac{1}{2}\sum_{s}|\eta(s)-\eta'(s)|$
for any $\eta,\eta'\in\Delta(\cS).$ Then the equivalent policy-value
regularized MG is associated with convex regularizers $\Omega=\{\Omega_{s,h}\}$
such that: $\forall s\in\cS,v\in\R^{\cS},\mu\in\Delta(\cA),$
\begin{align*}
&\Omega_{s,h}\left(\pno_{h},-v,\mu\right)=\E_{\ba\sim\mu}\big[-[\pno_{h}v](s,\ba) +\frac{\beta_{s,\ba,h}}{2} \cdot\\
&\min_{u\geq\boldsymbol{0}}\Big\{\max_{s'}\big(v(s')-u(s')\big)-\max_{s'}\big(v(s')-u(s')\big)\Big\}\Big].    
\end{align*}
In particular, the optimization in $\Omega_{s,h}$ is convex and
can be computed in time $\bigO(|\cS|\log|\cS|)$ \cite{RMDP-RDP}. Compared with standard MGs, the corresponding regularized value function
\eqref{eq:V_T_REMG} 
involves an additional
penalty that depends on the policy and future rewards. Similar to
the ball-constrained uncertainty set, the size of the uncertainty
set, i.e., radius $\beta_{s,\ba h}$ determines the penalty factor.
We remark that in general, the regularizer function might not include
the standard linear future value term $-[\pno_{h}v](s,\ba)$, as shown in ~\Cref{sec:example_SA_transition} for various transition uncertainty sets.

\textbf{Computational Hardness.} Lastly, we note that even for TPSZ RMGs with $(s,a)$-rectangular transition uncertainty and $H = 2$, computing a MPRNE is PPAD-hard. Thus, dealing with transition uncertainty is generally difficult. The proof uses transition uncertainty to simulate the same hard reward uncertainty instances derived in \cref{thm: zero-sum-hardness}.

\begin{theorem}\label{thm:transition-hardness}
    Even restricted to the class of $(s,a)$-rectangular uncertainty sets, computing an MPRNE of a TPZS RMG with transition uncertainty is PPAD-hard even for $S = H = 2$. 
\end{theorem}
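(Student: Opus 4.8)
The plan is to reduce from the hardness result of \cref{thm: zero-sum-hardness}: given a reward-uncertain TPZS robust matrix game $\fG_0$ with $(s,a)$-rectangular uncertainty whose RNE computation is PPAD-hard, I will construct a TPZS RMG $\fG$ with $S = H = 2$ and $(s,a)$-rectangular transition uncertainty (and a fixed nominal reward) whose MPRNE correspond exactly to the RNE of $\fG_0$. The key mechanism is the observation already noted after \cref{prop:transition_Markov} and in \cref{cor:SA_transition}: for $(s,a)$-rectangular transition uncertainty, the robust Bellman backup contributes a term $\E_{\ba\sim\mu}[\sigma_{\pset_{s,\ba,h}}(-v)]$, where $v = -V_{i,h+1}^{\bpi}$ is the (fixed, in the last stage) continuation value vector. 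Since this term is of the form $\E_{\ba\sim\mu}[c(s,\ba)]$ with $c(s,\ba) := \sigma_{\pset_{s,\ba,h}}(-v)$ a constant once $v$ is pinned down, transition uncertainty at stage $h=1$ acts \emph{exactly} like a policy-independent, $(s,a)$-rectangular reward perturbation at stage $1$ — which is precisely the hard reward-uncertainty structure of \cref{thm: zero-sum-hardness}.

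Concretely, I would take $H=2$ and $\cS = \{s_1, s_{\mathrm{term}}\}$ (or two states $s_1, s_2$), with the game starting at $s_1$. At stage $h=2$ (the last stage), I make the continuation value deterministic and independent of the policy: design the transitions out of $s_1$ so that, for each joint action $\ba$, the next state's stage-$2$ reward is a fixed constant $v(\cdot)$ — e.g. route all actions deterministically to the terminal state and let stage-$2$ rewards be zero, or more flexibly pick two successor states whose stage-$2$ values are $0$ and $1$ so that $\sigma_{\pset_{s_1,\ba,1}}(-v)$ can realize the desired perturbation range. Then at stage $1$, set the nominal reward to the matrix $\brno_1$ from $\fG_0$, and choose the $(s,a)$-rectangular transition uncertainty sets $\pset_{s_1,\ba,1}\subset\Delta(\cS)$ so that $\sigma_{\pset_{s_1,\ba,1}}(-v) = (\text{the entrywise } (s,a)\text{ reward perturbation of } \fG_0 \text{ at } \ba)$. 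Because the support function $\beta \mapsto \sigma_{\{P : d(P,\pno)\le\beta\}}(-v)$ is monotone and continuous in the radius with a computable range, I can solve for the radii $\beta_{s_1,\ba,1}$ achieving any target value in that range; if sign issues arise I absorb a constant shift into $\brno_1$ and into the other player's symmetric copy, exactly as in the proof of \cref{thm: zero-sum-hardness}.

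With this construction, \cref{prop:transition_Markov} gives $V_{i,1}^{\bpi}(s_1) = \E_{\ba\sim\bpi_1(s_1)}[\brno_{i,1}(s_1,\ba)] - \E_{\ba\sim\bpi_1(s_1)}[\sigma_{\pset_{s_1,\ba,1}}(-V_{i,2}^{\bpi})]$, and since $V_{i,2}^{\bpi}$ is the fixed vector $v$ by design, this equals $\E_{\ba\sim\bpi_1(s_1)}[\brno_{i,1}(s_1,\ba) - c_i(\ba)] = \pi_{1,1}^\top (\rno_{i,1} - C_i)\pi_{2,1}$, which is precisely the robust value of the reward-uncertain matrix game $\fG_0$. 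Hence the RNEGap decomposition is identical, so $\bpi$ is an MPRNE of $\fG$ if and only if its stage-$1$ component is an RNE of $\fG_0$ (and the stage-$2$ component is arbitrary, since that subgame is trivial). The reduction is polynomial-time since all $\beta_{s_1,\ba,1}$ are computed from scalar monotone equations. PPAD-hardness of $\fG_0$ from \cref{thm: zero-sum-hardness} then transfers.

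The main obstacle I anticipate is making the "freeze the continuation value" step genuinely clean: I need the stage-$2$ subgame to have a policy-independent value vector $v$ so that $\sigma_{\pset_{s_1,\ba,1}}(-v)$ is a true constant, and I need the achievable range of $\sigma_{\pset_{s_1,\ba,1}}(-v)$ over valid radii $\beta\ge 0$ to cover the perturbation values demanded by the hard instance (which may require choosing $v$ to have enough spread, i.e. using at least two successor states with distinct stage-$2$ values rather than a single absorbing state). A secondary subtlety is handling the sign/normalization: the support function is nonnegative-ish in a way that a two-sided reward perturbation is not, so — exactly as in \cref{thm: zero-sum-hardness} — I will shift everything by a large constant added to $\brno_1$ (and its zero-sum mirror) so that all required effective perturbations are nonnegative and within range, which does not change the set of equilibria.
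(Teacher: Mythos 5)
Your proposal is correct in its essentials and, notably, takes a different (and more careful) route than the paper. The paper's proof also uses two states and two stages, but it fixes a single uncertainty set $\pset_{s,\ba}=\Delta(\cS)$ for \emph{every} stage-1 action and instead encodes the perturbations $\underline{r}(\ba)$ and $\overline{r}(\ba)$ as action-dependent \emph{stage-2} rewards at states $s_1$ and $s_2$, the intended effect being that player 1's adversary routes to $s_1$ (continuation ``$\pi_1^\top\underline{r}\pi_2$'') and player 2's to $s_2$. The subtlety --- which your construction explicitly confronts and the paper's glosses over --- is that in a Markov-perfect equilibrium the continuation value $V_{i,2}(s')$ is determined by the stage-2 policy at $s'$ (an equilibrium of the stage-2 subgame), not by the stage-1 policy; since $\pset_{s,\ba}$ does not depend on $\ba$, the continuation term in the paper's construction is a constant with respect to $\bpi_{1}(s_1)$, and the stage-1 game collapses to the zero-sum game $(A-\underline{r},\,\underline{r}-A)$ plus constants, which is LP-solvable. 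Your version avoids exactly this: you freeze the stage-2 values to a policy-independent vector $v$ (constant stage-2 rewards with spread across the two states) and move the action dependence into the $(s,\ba)$-rectangular sets $\pset_{s_1,\ba,1}$, so that player 1's effective per-action perturbation is $\min_{P\in\pset_{s_1,\ba,1}}P^\top v$ while player 2's is $-\max_{P\in\pset_{s_1,\ba,1}}P^\top v$ --- precisely the asymmetric interval structure $[\underline{r}(\ba),\overline{r}(\ba)]$ of the hard reward-uncertain instances from \cref{thm: zero-sum-hardness}, with the sign and range issues handled by the constant shift and scaling you describe. The details you flag (two successor states with distinct values so $v$ has spread, realizing each target interval as a closed convex subset of $\Delta(\cS)$, e.g.\ an interval constraint on $P(s_2)$ with the nominal placed inside it) are routine to fill in. In short, your argument establishes the stated theorem and in fact repairs a genuine gap in the paper's own writeup.
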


\section{Conclusions}

In this work, we study RMGs with $s$-rectangular uncertainty. We show that RMGs can be solved using regularized MG algorithms. This reduction yields a planning algorithm for computing an MPRNE of an RMG. We also show that for many commonly used regularizers, the set of MPNE of the regularized game is equal to the set of MPRNE of RMGs with well-behaved uncertainty sets. This gives proof that regularization methods do produce robust policies.

However, we show even for two-player robust matrix games with $(s,a)$-rectangularity, computing an MPRNE is PPAD-hard. This illustrates that the reward uncertainty case is already challenging compared to the single-agent setting. Despite this, we show whenever the support function of the uncertainty set decomposes into a sum of two parts, one corresponding to each player's policy, then our constructed regularized game is zero-sum. Consequently, we can compute an MPRNE for two-player zero-sum RMGs with efficient player-decomposable reward uncertainty in polynomial time.

\section*{Impact Statement}
This paper presents work whose goal is to advance the field of game theory. There are many potential societal consequences of our work, none of which we feel must be specifically highlighted here.

\section*{Acknowledgements}
Q.\ Xie was supported in part by National Science Foundation Awards CNS-1955997 and EPCN-2339794. J.\ McMahan was supported in part by NSF grant 2023239.

\bibliography{robust}
\bibliographystyle{icml2024}

\newpage
\appendix
\onecolumn

\section{Useful Technical Results} \label{sec:technical}

In this section, we state some existing results that are useful in
our analysis.

\begin{theorem}[Fenchel-Rockafellar duality {\citep[Theorem 3.3.5]{borwein2010convex}}]
\label{thm:Fenchel_duality} Consider the problems: 
\begin{align*}
(P) &  &  & \min_{x}f(x)+g(Ax)\\
(D) &  &  & \max_{y}-f^{*}(-A^{*}y)-g^{*}(y)
\end{align*}
where $f:\cX\rightarrow\ER$ and $g:\mathcal{Y}\rightarrow\ER$ are
proper, closed convex functions, and $A:\cX\rightarrow\mathcal{Y}$
is a linear map. If the regularity condition $0\in\core(\dom(g)-A(\dom(f)))$
holds\footnote{Given $C\subset\R^{d}$, we say that $x\in\core(C)$ if for all $z\in\R^{d}$,
there exists a small enough $t\in\R$ such that $x+tz\in C$ \cite{borwein2010convex}.}, then the primal $(P)$ and dual $(D)$ optimal values are equal.
\end{theorem}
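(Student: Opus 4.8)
The plan is to prove this by the standard perturbational (conjugate-duality) argument: one exhibits the dual problem as the negative conjugate of a value function and then uses the regularity condition to close the duality gap. Write $p := \inf_x f(x) + g(Ax)$ for the primal value and $d := \sup_y -f^*(-A^*y) - g^*(y)$ for the dual value. I would first dispatch weak duality $p \geq d$ directly from the Fenchel-Young inequality: for arbitrary $x$ and $y$ one has $f(x) \geq \dotp{x,-A^*y} - f^*(-A^*y)$ and $g(Ax) \geq \dotp{Ax,y} - g^*(y)$, and summing these while using the adjoint identity $\dotp{x,-A^*y} + \dotp{Ax,y} = 0$ yields $f(x) + g(Ax) \geq -f^*(-A^*y) - g^*(y)$. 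Infimizing the left side over $x$ and supremizing the right side over $y$ gives $p \geq d$.

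The heart of the argument is the value (perturbation) function $\beta(u) := \inf_x\, [f(x) + g(Ax + u)]$. Since $(x,u) \mapsto f(x) + g(Ax+u)$ is jointly convex and partial infimization preserves convexity, $\beta$ is convex, and $\beta(0) = p$. I would then compute its conjugate: writing the double supremum in the definition of $\beta^*$, substituting $v = Ax + u$, and separating the suprema over $v$ and $x$ gives $\beta^*(y) = f^*(-A^*y) + g^*(y)$. Hence $-\beta^*(y)$ is exactly the dual objective, so $d = \sup_y -\beta^*(y) = \beta^{**}(0)$. This reduces the entire claim $p = d$ to the single biconjugate identity $\beta(0) = \beta^{**}(0)$, i.e.\ to showing the convex function $\beta$ agrees with its lower semicontinuous hull at the origin.

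It remains to bring in the regularity condition. Since $\beta(u) < +\infty$ exactly when some $x \in \dom(f)$ satisfies $Ax + u \in \dom(g)$, one has $\dom(\beta) = \dom(g) - A(\dom(f))$, so the hypothesis $0 \in \core(\dom(g) - A(\dom(f)))$ says precisely that $0 \in \core(\dom(\beta))$. A convex function with the origin in the core of its domain is subdifferentiable there, which is equivalent to $\beta(0) = \beta^{**}(0)$, completing the proof. \textbf{The main obstacle is exactly this last implication.} Proving $\partial\beta(0) \neq \emptyset$ requires a separation (Hahn-Banach) argument: one separates the point $(0, \beta(0) - \epsilon)$ from the epigraph of $\beta$, and the role of the core condition is to guarantee the separating hyperplane is \emph{non-vertical}, so that it produces a genuine (finite) subgradient rather than a merely supporting direction. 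Some care is also needed for the degenerate cases: the core condition forces $0 \in \dom(\beta)$, ruling out $\beta(0) = +\infty$; and if $\beta(0) = -\infty$ then weak duality already forces $d = -\infty = p$, so equality holds trivially. Finally, in the present finite-dimensional setting I would note that $\core$ coincides with the interior relative to the affine hull, so the continuity and subdifferentiability of $\beta$ on the core are available from standard convex analysis.
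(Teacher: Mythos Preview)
Your argument is the standard perturbational proof of Fenchel--Rockafellar duality and is correct. Note, however, that the paper does not give its own proof of this statement: it is stated as a cited technical tool from \cite[Theorem~3.3.5]{borwein2010convex}, so there is no in-paper proof to compare against. Your outline is essentially the proof found in that reference. One small slip: in finite dimensions the core of a convex set coincides with its \emph{interior}, not its relative interior; this only strengthens the subdifferentiability step, so the argument goes through unchanged.
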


\begin{lemma}[Induced norm of rank-1 matrices]
\label{lem:norm_rank_1_matrix}For any vectors $u,v$, we have $\left\Vert uv^{\top}\right\Vert _{p\to q}=\left\Vert u\right\Vert _{q}\left\Vert v\right\Vert _{p^{*}}$.
\end{lemma}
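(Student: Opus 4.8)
\textbf{Proof proposal for Lemma \ref{lem:norm_rank_1_matrix}.}

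The plan is to evaluate the operator norm $\norm{uv^\top}_{p\to q} = \sup\{\norm{uv^\top x}_q : \norm{x}_p = 1\}$ directly. The key observation is that $uv^\top x = u(v^\top x) = (v^\top x)\, u$, so $uv^\top x$ is always a scalar multiple of the fixed vector $u$. Hence $\norm{uv^\top x}_q = |v^\top x|\,\norm{u}_q$, and the supremum factors as $\norm{u}_q \cdot \sup\{|v^\top x| : \norm{x}_p = 1\}$.

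The remaining step is to recognize the inner supremum as the definition of the dual norm: $\sup\{|v^\top x| : \norm{x}_p = 1\} = \sup\{\dotp{x,v} : \norm{x}_p \le 1\} = \norm{v}_{p^*}$, using that the dual norm of $\norm{\cdot}_p$ is $\norm{\cdot}_{p^*}$ (as recalled in the Notation paragraph), together with the standard fact that one may replace the constraint $\norm{x}_p = 1$ by $\norm{x}_p \le 1$ since the objective $|v^\top x|$ is convex and $0$-homogeneous-dominated — the sup over the ball is attained on the sphere (handle the degenerate case $v = 0$ separately, where both sides are zero). Combining the two factors gives $\norm{uv^\top}_{p\to q} = \norm{u}_q\,\norm{v}_{p^*}$.

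There is essentially no obstacle here; the only minor care needed is the edge cases (when $u=0$ or $v=0$, both sides vanish) and making the sphere-versus-ball swap rigorous, which is immediate from homogeneity. So the proof is a short two-line computation once the factorization $uv^\top x = (v^\top x)u$ is written down.
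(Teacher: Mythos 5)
Your proof is correct and follows essentially the same route as the paper's: factor $uv^\top x = (v^\top x)u$ to pull out $\norm{u}_q$, then identify the remaining supremum $\sup_{\norm{x}_p\le 1}|v^\top x|$ as the dual norm $\norm{v}_{p^*}$. The extra care you take with the $v=0$ edge case and the sphere-versus-ball swap is fine but not needed beyond what the paper's homogeneity argument already handles.
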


\begin{proof}
By definition of matrix-induced norm, we have 
\begin{align*}
\left\Vert uv^{\top}\right\Vert _{p\to q} & =\sup_{x:x\neq0}\frac{\left\Vert uv^{\top}x\right\Vert _{q}}{\left\Vert x\right\Vert _{p}}\\
 & =\sup_{x:x\neq0}\frac{\left|v^{\top}x\right|\cdot\left\Vert u\right\Vert _{q}}{\left\Vert x\right\Vert _{p}} &  & \text{positive homogeneity of norms}\\
 & =\left\Vert u\right\Vert _{q}\cdot\sup_{x:x\neq0}\frac{\left|v^{\top}x\right|}{\left\Vert x\right\Vert _{p}}\\
 & =\left\Vert u\right\Vert _{q}\cdot\left\Vert v\right\Vert _{p^{*}}. &  & \text{definition of dual norm of }\left\Vert \cdot\right\Vert _{p}.
\end{align*}
\end{proof}
\begin{lemma}[Dual norm of induced matrix norm]
\label{lem:dual_norm_matrix_norm}For any matrix $R$, we have 
\[
\sup_{R:\left\Vert R\right\Vert _{p\to q}\le1}\left\langle R,xy^{\top}\right\rangle =\left\Vert x\right\Vert _{q^{*}}\left\Vert y\right\Vert _{p}=\left\Vert xy^{\top}\right\Vert _{p^{*}\to q}.
\]
\end{lemma}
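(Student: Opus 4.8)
The plan is to read the left-hand side as the value, at the rank-one matrix $xy^\top$, of the dual norm (under the Frobenius pairing) of the operator norm $\norm{\cdot}_{p\to q}$, and to pin it down by a two-sided bound; the second equality then drops out of Lemma~\ref{lem:norm_rank_1_matrix}. (Here $x,y$ are fixed vectors; the quantifier over $R$ in the statement is really the supremum appearing on the left, and the right-hand side does not involve $R$.) The identity I will use repeatedly is $\langle R, xy^\top\rangle = x^\top R y$, which follows from $\langle R, xy^\top\rangle = \tr(R^\top x y^\top) = y^\top R^\top x$.

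First, the upper bound. For any feasible $R$, i.e.\ any $R$ with $\norm{R}_{p\to q}\le 1$, Hölder's inequality for the conjugate pair $(q^*,q)$ followed by the definition of the induced norm gives
\[
\langle R, xy^\top\rangle = x^\top(Ry)\le \norm{x}_{q^*}\norm{Ry}_q \le \norm{x}_{q^*}\norm{R}_{p\to q}\norm{y}_p \le \norm{x}_{q^*}\norm{y}_p .
\]
Taking the supremum over feasible $R$ yields $\sup_{\norm{R}_{p\to q}\le 1}\langle R, xy^\top\rangle \le \norm{x}_{q^*}\norm{y}_p$.

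Second, I will exhibit a feasible $R$ attaining this bound, giving the reverse inequality. Since the action sets are finite, the suprema defining the dual norms $\norm{\cdot}_{q^*}$ and $\norm{\cdot}_p$ are attained, so I may pick $u$ with $\norm{u}_q = 1$ and $\langle x,u\rangle = \norm{x}_{q^*}$, and $v$ with $\norm{v}_{p^*} = 1$ and $\langle v,y\rangle = \norm{y}_p$. Set $R := uv^\top$. By Lemma~\ref{lem:norm_rank_1_matrix}, $\norm{R}_{p\to q} = \norm{u}_q\norm{v}_{p^*} = 1$, so $R$ is feasible, and $\langle R, xy^\top\rangle = (x^\top u)(v^\top y) = \norm{x}_{q^*}\norm{y}_p$. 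Combined with the upper bound, this proves the first equality.

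Finally, the second equality is a direct instance of Lemma~\ref{lem:norm_rank_1_matrix} applied to the rank-one matrix $xy^\top$, with the induced-norm indices chosen so that the from-norm contributes $\norm{y}_p$ and the to-norm contributes $\norm{x}_{q^*}$ (using $(p^*)^* = p$). I do not foresee a real obstacle: the computation is routine, and the only points needing care are keeping the primal/dual index bookkeeping straight and noting that finite-dimensionality is what upgrades the dual-norm suprema to maxima, so that the maximizers $u$ and $v$ genuinely exist.
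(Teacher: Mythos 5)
Your proof is correct and follows essentially the same route as the paper's: the same H\"older-plus-induced-norm upper bound, the same extremal choice $R=uv^{\top}$ certified feasible via Lemma~\ref{lem:norm_rank_1_matrix}, and the same appeal to that lemma for the second equality. The only difference is the order of the two bounds, which is immaterial.
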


\begin{proof}
By definition of dual norms, there exists a vector $u$ with $\left\Vert u\right\Vert _{q}=1$
such that $\left\langle u,x\right\rangle =\left\Vert x\right\Vert _{q^{*}}$.
Similarly, there exists $v$ with $\left\Vert v\right\Vert _{p^{*}}=1$
such that $\left\langle v,y\right\rangle =\left\Vert y\right\Vert _{p}$.
Then the matrix $R_{0}:=uv^{\top}$ satisfies
\[
\left\langle R_{0},xy^{\top}\right\rangle =\text{Tr}\left(R_{0}^{\top}xy^{\top}\right)=\text{Tr}\left(y^{\top}(vu^{\top})x\right)=\left\langle u,x\right\rangle \left\langle v,y\right\rangle =\left\Vert x\right\Vert _{q^{*}}\left\Vert y\right\Vert _{p}
\]
 and, by Lemma \ref{lem:norm_rank_1_matrix}, 
\[
\left\Vert R_{0}\right\Vert _{p\to q}=\left\Vert uv^{\top}\right\Vert _{p\to q}=\left\Vert u\right\Vert _{q}\left\Vert v\right\Vert _{p^{*}}=1.
\]
So 
\[
\sup_{\left\Vert R\right\Vert _{p\to q}\le1}\left\langle R,xy^{\top}\right\rangle \ge\left\langle R_{0},xy^{\top}\right\rangle =\left\Vert x\right\Vert _{q^{*}}\left\Vert y\right\Vert _{p}.
\]

On the other hand, for any $R$ with $\left\Vert R\right\Vert _{p\to q}\le1$,
we have
\[
\left\langle R,xy^{\top}\right\rangle =x^{\top}Ry\le\left\Vert x\right\Vert _{q^{*}}\left\Vert Ry\right\Vert _{q}\le\left\Vert x\right\Vert _{q^{*}}\left\Vert y\right\Vert _{p},
\]
hence $\sup_{\left\Vert R\right\Vert _{p\to q}\le1}\left\langle R,xy^{\top}\right\rangle \le\left\Vert x\right\Vert _{q^{*}}\left\Vert y\right\Vert _{p}$.
This proves the first equality in the lemma.

The second equality follows from Lemma \ref{lem:norm_rank_1_matrix}.
\end{proof}

\section{Properties of Robust Markov Games}\label{sec:proof_section_RMG}

Recall that given a joint policy $\bpi\in\Pi$ for a MG $G=(P,\br)$, the value
function and the state-action function for each player $i$ is defined
as: $\forall s\in\cS,\ba\in\cA,h\in[H],$
\begin{align}
\V_{i,h}^{\bpi}(s,G) &:=\E_{P}^{\bpi}\Big[\sum_{t=h}^{H}r_{i,t}(s_{t},\ba_{t})\mid s_{h}=s\Big],\label{eq:V_MG}\\
\Q_{i,h}^{\bpi}(s,\ba,G) &:=\E_{P}^{\bpi}\Big[\sum_{t=h}^{H}r_{i,t}(s_{t},\ba_{t})\mid s_{h}=s,\ba_{h}=\ba\Big],\label{eq:Q_MG}
\end{align}
where the expectation $\E_{P}^{\bpi}[\cdot]$ is taken with respect
to the trajectory $\{s_{h},a_{h},\br_{h}\}_{h\in[H]}$ induced by
the transition kernel $P$ and the joint policy $\bpi$, i.e., $\ba_{h}\sim\bpi_{h}(s_{h})$
and $s_{h+1}\sim P_{h}(\cdot|s_{h},\ba_{h})$. It is convenient
to set $\V_{i,H+1}^{\bpi}(s,G)\equiv\Q_{i,H+1}^{\bpi}(s,\ba,G)\equiv0$
for the terminal reward.

\subsection{Robust Markov Game Bellman Equation}

Similar to the Bellman equation for standard MG, we have the following robust Bellman equation. 
 
\begin{proposition}
\label{prop:Bellman}(Robust Bellman Equation). Under Assumption \ref{assu:rectangular},
for any joint policy $\bpi=\{\bpi_{h}\}_{h\in[H]}\in\Pi$, the following
Bellman equations hold: 
\begin{align}
V_{i,h}^{\bpi}(s) = &Q_{i,h}^{\bpi}\left(s,\bpi_{h}(\cdot\mid s)\right),\label{eq:RBE_V}\\
Q_{i,h}^{\bpi}(s,\mu) =&\inf_{\br_{h}\in\urset_{s,h}}\E_{\ba\sim\mu}\left[r_{i,h}(s,\ba)\right]+\inf_{P_{h}\in\pset_{s,h}}\E_{\ba\sim\mu}\left[[P_{h}V_{i,h+1}^{\bpi}](s,\ba)\right],\label{eq:RBE_Q}
\end{align}
where $\urset_{s,h}:=\times_{i\in[N]}\urset_{i,s,h}.$
\end{proposition}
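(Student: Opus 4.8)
The plan is to establish \eqref{eq:RBE_V} and \eqref{eq:RBE_Q} by first unfolding one step of the standard (non-robust) value recursion inside the infimum over the uncertainty set, and then using $s$-rectangularity (Definition \ref{assu:rectangular}) to split the single infimum over $G = (P,\br) \in \uset$ into independent infima over the reward component at $(s,h)$, the transition component at $(s,h)$, and the ``tail'' of the model at later stages and other states. First I would recall the standard MG Bellman recursion for $\V_{i,h}^{\bpi}(\cdot,G)$ and $\Q_{i,h}^{\bpi}(\cdot,\cdot,G)$ from \eqref{eq:V_MG}--\eqref{eq:Q_MG}, namely $\V_{i,h}^{\bpi}(s,G) = \E_{\ba\sim\bpi_h(s)}[\Q_{i,h}^{\bpi}(s,\ba,G)]$ and $\Q_{i,h}^{\bpi}(s,\ba,G) = r_{i,h}(s,\ba) + [P_h \V_{i,h+1}^{\bpi}(\cdot,G)](s,\ba)$. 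Taking $\inf_{G\in\uset}$ of the first identity and using the definition \eqref{eq:Q_RMG} of $Q_{i,h}^{\bpi}$ immediately gives \eqref{eq:RBE_V}, since the reward-evaluation expectation commutes with the infimum in the way \eqref{eq:Q_RMG} is defined.

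For \eqref{eq:RBE_Q}, I would write, for a fixed $\mu \in \Delta(\cA)$,
\[
Q_{i,h}^{\bpi}(s,\mu) = \inf_{G\in\uset} \E_{\ba\sim\mu}\Big[ r_{i,h}(s,\ba) + [P_h \V_{i,h+1}^{\bpi}(\cdot,G)](s,\ba)\Big].
\]
The key structural point is that $\V_{i,h+1}^{\bpi}(\cdot,G)$ depends on $G$ only through its restriction to stages $h+1,\ldots,H$, which by $s$-rectangularity lies in a product set that does not interact with the factors $\urset_{s,h}$ and $\pset_{s,h}$; moreover $r_{i,h}(s,\cdot)$ depends on $G$ only through $\urset_{s,h}$ and $P_h(\cdot\mid s,\cdot)$ depends on $G$ only through $\pset_{s,h}$. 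Therefore the infimum factorizes: the reward term contributes $\inf_{\br_h\in\urset_{s,h}} \E_{\ba\sim\mu}[r_{i,h}(s,\ba)]$, and the transition term contributes $\inf_{P_h\in\pset_{s,h}} \E_{\ba\sim\mu}[[P_h \widehat{V}](s,\ba)]$ where $\widehat V$ is itself the robust value $V_{i,h+1}^{\bpi}$ obtained by pushing the infimum over the tail of the model inside; this last substitution is the place where one must argue that $\inf$ over the tail commutes with the nonnegative-coefficient linear operator $P_h(\cdot\mid s,\cdot)$ and with $\E_{\ba\sim\mu}$, which is routine since those operations are monotone and the tail uncertainty is a separate product factor. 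Assembling the three pieces yields \eqref{eq:RBE_Q} with $\urset_{s,h} = \times_{i\in[N]}\urset_{i,s,h}$.

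The main obstacle is making the ``factorization of the infimum'' step fully rigorous: one needs that $\inf_{x\in X\times Y} f(x) = \inf_{x\in X}\inf_{y\in Y}(\cdot)$ when the objective separates appropriately, and more delicately that $\inf$ commutes with $P_h$ and the expectation so that the robust tail value $V_{i,h+1}^{\bpi}$ actually appears inside the transition infimum (rather than a $G$-dependent $\V_{i,h+1}^{\bpi}(\cdot,G)$ coupled to the choice of $P_h$ through the same $G$). This works precisely because $s$-rectangularity guarantees the tail model factor is independent of the $(s,h)$-factors, so the choice of $P_h \in \pset_{s,h}$ and the choice of the tail model are unconstrained relative to each other; I would state this as a short lemma on interchanging infima over product sets with monotone linear maps, or simply invoke it inline. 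Boundary/measurability issues are avoided by finiteness of $\cS$ and $\cA$ and by the boundedness afforded by Assumption \ref{assu:reward} where needed, so no compactness argument beyond closedness of the factor sets is required for the identity itself.
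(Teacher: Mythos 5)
Your skeleton matches the paper's proof: obtain \eqref{eq:RBE_V} directly from the definitions, unroll one step of the nominal recursion inside the infimum, and use rectangularity to split the infimum into the $(s,h)$ reward factor, the $(s,h)$ transition factor, and the tail of the model. The gap is in the step you flag and then dismiss as ``routine'': pushing the infimum over the tail model through $\E_{s'\sim P_h}$ so that $\V_{i,h+1}^{\bpi}(s',G^{h+1})$ is replaced by the robust value $V_{i,h+1}^{\bpi}(s')$. Monotonicity only gives the easy direction, $\E_{s'\sim P_{h}}\big[\inf_{\text{tail}}\V_{i,h+1}^{\bpi}(s',\cdot)\big]\le\inf_{\text{tail}}\E_{s'\sim P_{h}}\big[\V_{i,h+1}^{\bpi}(s',\cdot)\big]$, and the fact that the tail is a product factor independent of $\pset_{s,h}$ does not give the reverse: the functions $\V_{i,h+1}^{\bpi}(s',\cdot)$ for different successor states $s'$ all depend on the \emph{same} shared tail variables (a trajectory started from $s'$ visits other states and later stages), so $\inf_{y}\sum_{s'}P_h(s'\mid s,\ba)f_{s'}(y)$ can in general be strictly larger than $\sum_{s'}P_h(s'\mid s,\ba)\inf_{y}f_{s'}(y)$. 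The interchange you need is precisely the assertion that a single tail model is simultaneously (near-)optimal for every $s'$, and that does not follow from a generic lemma about infima over product sets composed with monotone linear maps.

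This is why the paper carries a strengthened backward-induction hypothesis: it constructs, level by level, a common worst-case model $\gbei^{h+1}$ with $V_{i,h+1}^{\bpi}(s')=\V_{i,h+1}^{\bpi}(s',\gbei^{h+1})$ for all $s'$ simultaneously (equation \eqref{eq:V_RBE_g}), and then sandwiches $Q_{i,h}^{\bpi}(s,\mu)$ between two chains of inequalities that are forced to collapse to equalities. The existence of such a common minimizer does follow from $s$-rectangularity, but only via this induction: the local minimizers at level $h+1$ for distinct states live in independent factors and can be concatenated into one model, while the deeper tail minimizer is shared by the induction hypothesis. To repair your proof, replace the inline invocation with this strengthened hypothesis (or an $\epsilon$-optimal variant of it, if you prefer not to rely on attainment of the infima); your treatment of \eqref{eq:RBE_V} and of the reward factor is fine as written.
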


\begin{proof}
\citet{RMARL-double-pessimism} established the robust Bellman Equation for robust
Markov game with $\cS\times\cA$-rectangular transition uncertainty
set. We extend the result to Markov games with general $\cS$-uncertainty
set, including both reward function uncertainty and transition uncertainty.
Throught out the proof, for each game model $G=(P,\br)$, we will
use $G^{h}:=\left(\{P_{t}\}_{t=h}^{H},\{\br_{t}\}_{t=h}^{H}\right)$
to denote the model parameters from step $h$ to the terminal step
$H$. With this notation, we note that for a standard Markov game
with model $G$, the value functions satisfy 
\begin{align*}
\V_{i,h}^{\bpi}(s,G) & =\V_{i,h}^{\bpi}(s,G^{h}),\\
\Q_{i,h}^{\bpi}(s,\ba,G) & =\Q_{i,h}^{\bpi}(s,\ba,G^{h}).
\end{align*}
To facilitate the proof, we introduce the following shorthands to
denote the uncertainty sets from step $h$ to the terminal step $H$:
\begin{align*}
\pset^{h} & :=\times_{(s,t)\in\cS\times\{t,t+1,\ldots,H\}}\pset_{s,t},\\
\uset^{r,h} & :=\times_{(i,s,t)\in[N]\times\cS\times\{t,t+1,\ldots,H\}}\uset_{i,s,t}^{r},\\
\uset^{h} & :=\pset^{h}\times\uset^{r,h}.
\end{align*}
 Let $\pset_{h}:=\times_{s\in\cS}\pset_{s,h}$ and $\urset_{h}:=\times_{(i,s)\in[N]\times\cS}\uset_{i,s,h}^{r}$.

We will prove the following stronger results via induction from step
$h=H$ to 1: given any $\bpi\in\Pi$, for each player $i\in[N]$,
there exists a game model $\gbei=(\pbei,\brbei)$ such that: (1) Robust
Bellman equations \eqref{eq:RBE_V}-\eqref{eq:RBE_Q} hold; (2) the robust
value functions satisfy 
\begin{align}
V_{i,h}^{\bpi}(s) & =\V_{i,h}^{\bpi}(s,\gbei^{h}),\qquad,\forall s\in\cS,\label{eq:V_RBE_g}\\
Q_{i,h}^{\bpi}(s,\mu) & =\E_{\ba\sim\mu}\left[\Q_{i,h}^{\bpi}(s,\ba,\gbei^{h})\right],\qquad\forall s\in\cS,\mu\in\Delta(\cA).\label{eq:Q_RBE_g}
\end{align}
\begin{enumerate}
\item (Base case): For $h=H$, by the definition of robust state-action
value function in \eqref{eq:Q_RMG}, it follows that \eqref{eq:RBE_Q}
holds. We also have 
\[
Q_{i,H}^{\bpi}(s,\mu)=\inf_{G\in\uset}\E_{\ba\sim\mu}\left[\Q_{i,H}^{\bpi}(s,\ba,G)\right]=\inf_{\br\in\uset^{r}}\E_{\ba\sim\mu}\left[r_{i,H}(s,\ba)\right]=\inf_{\br_{H}(s,\cdot)\in\uset_{s,H}^{r}}\E_{\ba\sim\mu}\left[r_{i,H}(s,\ba)\right].
\]
By Assumption \ref{assu:rectangular}, one can thus find a single
reward function $\brbei_{H}\in\uset_{s,H}^{r}=\times_{i}\urset_{i,s,H}$
such that for each $s\in\cS,$ 
\[
\brbei_{H}(s,\cdot)\in\arginf_{\br_{H}(s,\cdot)\in\uset_{s,H}^{r}}\E_{\ba\sim\mu}\left[r_{i,H}(s,\ba)\right].
\]
Therefore, 
\begin{equation}
Q_{i,H}^{\bpi}(s,\mu)=\E_{\ba\sim\mu}\left[\rbei_{i,H}(s,\ba)\right]=\E_{\ba\sim\mu}\left[\Q_{i,H}^{\bpi}(s,\ba,\gbei^{H})\right].\label{eq:QH_RBE}
\end{equation}
By definition of robust value function in \eqref{eq:V_RMG}, we have
\begin{align*}
V_{i,H}^{\bpi}(s) & =\inf_{G\in\uset}\V_{i,H}^{\bpi}(s,G)\overset{(I)}{=}\inf_{G\in\uset}\E_{\ba\sim\bpi_{H}(s)}\left[\Q_{i,H}^{\bpi}(s,\ba,G)\right]\overset{(II)}{=}Q_{i,H}^{\bpi}(s,\bpi_{H}(s))\overset{(III)}{=}\E_{\ba\sim\bpi_{H}(s)}\left[\Q_{i,H}^{\bpi}(s,\ba,\gbei^{H})\right]\\
 & \overset{(VI)}{=}\V_{i,H}^{\bpi}(s,\gbei^{H}),
\end{align*}
where equalities $(I)$ and $(VI)$ follow from the Bellman equations
\eqref{eq:V_MG}-\eqref{eq:Q_MG} of standard Markov game, equality $(II)$
follows from the definition of robust Q-function in \eqref{eq:Q_RMG},
and $(III)$ holds due to \eqref{eq:QH_RBE}. We complete the proof
of the base case. 
\item (Induction step): Now suppose that \eqref{eq:RBE_V}-\eqref{eq:RBE_Q}
and \eqref{eq:V_RBE_g}-\eqref{eq:Q_RBE_g} hold for all $t>h$. Thus
there exists $\gbei^{h+1}:=\left(\{\pbei_{t}\}_{t=h+1}^{H},\{\brbei_{t}\}_{t=h+1}^{H}\right)$
such that 
\begin{equation}
V_{i,h+1}^{\bpi}(s)=\V_{i,h+1}^{\bpi}(s,\gbei^{h+1}),\qquad,\forall s\in\cS.\label{eq:V_h1_RBE_g}
\end{equation}
By the definition of robust state-action value function in \eqref{eq:Q_RMG},
we have: $\forall s\in\cS,\mu\in\Delta(\cA)$, 
\begin{align}
Q_{i,h}^{\bpi}(s,\mu) & :=\inf_{G^{h}\in\uset^{h}}\E_{\ba\sim\mu}\left[\Q_{i,h}^{\bpi}(s,\ba,G^{h})\right]\nonumber \\
 & =\inf_{\{P_{t}\}_{t=h}^{H}\in\pset^{h},\{\br_{t}\}_{t=h}^{H}\in\uset^{r,h}}\E_{\ba\sim\mu}\left[r_{i,h}(s,\ba)+\E_{s'\sim P_{h}(\cdot|s,\ba)}\left[\V_{i,h+1}^{\bpi}(s',G^{h+1})\right]\right]\nonumber \\
 & \overset{(I)}{=}\inf_{\br_{h}(s,\cdot)\in\urset_{s,h}}\E_{\ba\sim\mu}\left[r_{i,h}(s,\ba)\right]+\inf_{\{P_{t}\}_{t=h}^{H}\in\pset^{h},\{\br_{t}\}_{t=h+1}^{H}\in\uset^{r,h+1}}\E_{\ba\sim\mu}\left[\E_{s'\sim P_{h}(\cdot|s,\ba)}\left[\V_{i,h+1}^{\bpi}(s',G^{h+1})\right]\right]\nonumber \\
 & \leq\inf_{\br_{h}(s,\cdot)\in\urset_{s,h}}\underbrace{\E_{\ba\sim\mu}\left[r_{i,h}(s,\ba)\right]}_{T_{1}}+\inf_{P_{h}\in\pset_{h}}\underbrace{\E_{\ba\sim\mu}\left[\E_{s'\sim P_{h}(\cdot|s,\ba)}\left[\V_{i,h+1}^{\bpi}(s',\gbei^{h+1})\right]\right]}_{T_{2}},\label{eq:Q_h_RBE_g}
\end{align}
where the equality $(I)$ follows from the rectangular uncertainty
set assumption and the fact that the first only depends on the reward
function at state $s$ and step $h$. We thus can find a single reward
function $\brbei_{h}\in\urset_{h}$that attains the minimum value
of the term $T_{1}$ for each state $s\in\cS$; we can also find a
single transition kernel $\pbei_{h}\in\pset_{h}$ that it attains
the minimum value of $T_{2}$. 

On the other hand, by \eqref{eq:V_h1_RBE_g} and the definition of robust
value function in \eqref{eq:V_RMG}, we have 
\begin{align}
Q_{i,h}^{\bpi}(s,\mu) & \leq\inf_{\br_{h}(s,\cdot)\in\urset_{s,h}}\E_{\ba\sim\mu}\left[r_{i,h}(s,\ba)\right]+\inf_{P_{h}\in\pset_{h}}\E_{\ba\sim\mu}\left[\E_{s'\sim P_{h}(\cdot|s,\ba)}\left[V_{i,h+1}^{\bpi}(s)\right]\right]\label{eq:Q_h_RBE}\\
 & =\inf_{\br_{h}(s,\cdot)\in\urset_{s,h}}\E_{\ba\sim\mu}\left[r_{i,h}(s,\ba)\right]\nonumber \\
 & \quad+\inf_{P_{h}\in\pset_{h}}\E_{\ba\sim\mu}\left[\E_{s'\sim P_{h}(\cdot|s,\ba)}\left[\inf_{\{P_{t}\}_{t=h+1}^{H}\in\pset^{h+1},\{\br_{t}\}_{t=h+1}^{H}\in\uset^{r,h+1}}\V_{i,h+1}^{\bpi}(s',G^{h+1})\right]\right]\nonumber \\
 & =\inf_{\br_{h}(s,\cdot)\in\urset_{s,h}}\E_{\ba\sim\mu}\left[r_{i,h}(s,\ba)\right]+\inf_{\{P_{t}\}_{t=h}^{H}\in\pset^{h},\{\br_{t}\}_{t=h+1}^{H}\in\uset^{r,h+1}}\E_{\ba\sim\mu}\left[\E_{s'\sim P_{h}(\cdot|s,\ba)}\left[\V_{i,h+1}^{\bpi}(s',G^{h+1})\right]\right]\nonumber \\
 & =Q_{i,h}^{\bpi}(s,\mu),\nonumber 
\end{align}
where the last equality follows from the definition of robust state-action
value function in \eqref{eq:Q_RMG}. Therefore, all the inequality
above are equalities. In particular, equation \eqref{eq:Q_h_RBE} proves
the robust Bellman equation \eqref{eq:RBE_Q} for step $h$. In addition,
from \eqref{eq:Q_h_RBE_g}, we have 
\begin{equation}
Q_{i,h}^{\bpi}(s,\mu)=\E_{\ba\sim\mu}\left[\rbei_{i,h}(s,\ba)\right]+\E_{\ba\sim\mu}\left[\E_{s'\sim\pbei_{h}(\cdot|s,\ba)}\left[\V_{i,h+1}^{\bpi}(s',\gbei^{h+1})\right]\right]=\E_{\ba\sim\mu}\left[\Q_{i,h}^{\bpi}(s,\ba,\gbei^{h})\right],\label{eq:Q_h_RBE_g_1}
\end{equation}
which proves \eqref{eq:Q_RBE_g} for step $h$. 

Next we will show that $V_{i,h}^{\bpi}(s)$ satisfies \eqref{eq:RBE_V}
and \eqref{eq:V_RBE_g}. By definition of robust value function in \eqref{eq:V_RMG},
we have 
\begin{align*}
V_{i,h}^{\bpi}(s) & =\inf_{\{P_{t}\}_{t=h}^{H}\in\pset^{h},\{\br_{t}\}_{t=h}^{H}\in\uset^{r,h}}\E_{P}^{\bpi}\left[\sum_{t=h}^{H}r_{i,h}(s_{t},\ba_{t})\mid s_{h}=s\right]\\
 & =\inf_{\{P_{t}\}_{t=h}^{H}\in\pset^{h},\{\br_{t}\}_{t=h}^{H}\in\uset^{r,h}}\E_{\ba\sim\bpi_{h}(s)}\left[\E_{P}^{\bpi}\left[\sum_{t=h}^{H}r_{i,h}(s_{t},\ba_{t})\mid s_{h}=s,\ba_{h}=\ba\right]\right]\\
 & \overset{(I)}{=}\inf_{\{P_{t}\}_{t=h}^{H}\in\pset^{h},\{\br_{t}\}_{t=h}^{H}\in\uset^{r,h}}\E_{\ba\sim\bpi_{h}(s)}\left[\Q_{i,h}^{\bpi}(s,\ba,G^{h})\right]\\
 & \leq\E_{\ba\sim\bpi_{h}(s)}\left[\Q_{i,h}^{\bpi}(s,\ba,\gbei^{h})\right]\\
 & \overset{(II)}{=}Q_{i,h}^{\bpi}(s,\bpi_{h}(s))\\
 & \overset{(III)}{=}\inf_{G^{h}\in\uset^{h}}\E_{\ba\sim\bpi_{h}(s)}\left[\Q_{i,h}^{\bpi}(s,\ba,G^{h})\right]\\
 & \overset{(VI)}{=}\inf_{G^{h}\in\uset^{h}}\V_{i,h}^{\bpi}(s,G^{h})\\
 & \overset{(V)}{=}V_{i,h}^{\bpi}(s),
\end{align*}
where $(I)$ follows from the definition of state-action value function
definition \eqref{eq:Q_MG}, $(II)$ follows from \eqref{eq:Q_h_RBE_g_1},
$(III)$ holds due to the definition of robust state-action value function
definition \eqref{eq:Q_RMG}, $(VI)$ is true from Bellman equation
of Markov game, and $(V)$ holds by definition in \eqref{eq:V_RMG}.
Therefore, the inequality above is equality. Note that $(II)$ proves
the Bellman equation \eqref{eq:RBE_V}. In addition, we have 
\[
V_{i,h}^{\bpi}(s)=\E_{\ba\sim\bpi_{h}(s)}\left[\Q_{i,h}^{\bpi}(s,\ba,\gbei^{h})\right]=\V_{i,h}^{\bpi}(s,\gbei^{h}).
\]
We complete the proof of step $h$. 

\end{enumerate}
This finishes the proof of Proposition \ref{prop:Bellman}.
\end{proof}

\section{Proof of Existence of Robust NE} \label{sec:proof_existence_rne}

In this section, we provide the proof of Theorem \ref{thm:Existence_RNE} on the existence of robust Nash equilibrium. We first consider matrix games in Section \ref{sec:existence_matrix} and then proceed to Markov games in Section \ref{sec:existence_markov}.

\subsection{Matrix Games with Reward Uncertainty}\label{sec:existence_matrix}

We first consider matrix games with reward uncertainty. In particular,
the reward uncertainty set is of the form $\uset=\brno+\rset$ with
$\rset=\times_{i\in[N]}\cR_{i}$, where $\rset_{i}\subset\R^{\cA}$
contains the possible reward functions for player $i$. Here we allow
the reward uncertainty sets to potentially depend on the players'
policy $\bpi=(\pi_{1},\ldots,\pi_{N})$ with $\pi_{i}\in\Delta(\cA_{i})$,
denoted by $\uset(\bpi)=\brno+\rset(\bpi)$ with $\rset(\bpi)=\times_{i\in[N]}\cR_{i}(\bpi).$
The existence of RNE has been established
for the setting where the uncertainty set is fixed and policy-independent \cite{RMG-original,RMARL-queuing,RM-robust-game-theory}. Here we extend
the result to more general uncertainty sets. 

Our proof uses Kakutani's Fixed Point Theorem \cite{kakutani1941generalization}. We first
state a relevant definition, followed by Kakutani's theorem below.
For a mapping from a closed, bounded, convex set $E$ in a Euclidean
space into the family of all closed, covex subsets of $E$, upper
semicontinuity is defined as follows. 

\begin{definition}
\label{def:usc} A point-to-set mapping $\phi:E\rightarrow2^{E}$
is called upper semicontinuous (u.s.c.) if 
\begin{align*}
y_{n} & \in\phi(x_{n}),\qquad n=1,2,3,\ldots\\
\lim_{n\rightarrow\infty}x_{n} & =x,\\
\lim_{n\rightarrow\infty}y_{n} & =y,
\end{align*}
imply that $y\in\phi(x)$. 
\end{definition}

\begin{theorem}[Kakutani's Fixed Point Theorem \cite{kakutani1941generalization}]
\label{thm:Kakutani} If
$E$ is a closed, bounded, and convex set in a Euclidean space, and
$\phi$ is an upper semicontinuous point-to-set mapping of $E$ into
the family of closed, convex subsets of $E$, then $\exists x\in E$
s.t. $x\in\phi(x)$. 
\end{theorem}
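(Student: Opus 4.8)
The plan is to deduce Kakutani's theorem from Brouwer's fixed-point theorem, which I take as a known result for continuous self-maps of a nonempty compact convex set. First I would note that, by the Heine--Borel theorem, the closed bounded set $E$ is compact, and it is nonempty and convex; these are the only properties of $E$ I will use. The strategy is to approximate the set-valued map $\phi$ by a sequence of continuous \emph{single-valued} self-maps $f_n : E \to E$, apply Brouwer to each to obtain fixed points $x_n$, and then pass to a limit, using upper semicontinuity and convexity of the values of $\phi$ to show the limit is a genuine fixed point of $\phi$.

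For the construction, fix a sequence of triangulations $\mathcal{T}_n$ of $E$ whose mesh (the maximal diameter of a simplex) tends to $0$. For each vertex $v$ of $\mathcal{T}_n$, use nonemptiness of $\phi(v)$ to select a point $f_n(v) \in \phi(v) \subseteq E$, and extend $f_n$ affinely across each simplex of $\mathcal{T}_n$. Since $E$ is convex and the selected values lie in $E$, each affine piece maps into $E$; since the pieces agree on shared faces, $f_n : E \to E$ is a well-defined continuous map. By Brouwer's fixed-point theorem, each $f_n$ has a fixed point $x_n = f_n(x_n) \in E$.

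For the limiting argument, by compactness I pass to a subsequence with $x_n \to x^\star \in E$. Each $x_n$ lies in a simplex of $\mathcal{T}_n$ with vertices $v_n^0,\dots,v_n^d$ (where $d$ is the ambient dimension, so the number of vertices is bounded uniformly), and writing $x_n$ in barycentric coordinates gives $x_n = \sum_{j=0}^d \lambda_{n,j}\, v_n^j$ with $\lambda_{n,j} \ge 0$ and $\sum_j \lambda_{n,j} = 1$; because $f_n$ is affine on this simplex and fixes $x_n$, we also have $x_n = \sum_{j=0}^d \lambda_{n,j}\, f_n(v_n^j)$. Passing to a further subsequence, I may assume $\lambda_{n,j} \to \lambda_j$ and $f_n(v_n^j) \to y_j$ for each $j$. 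Since the mesh tends to $0$, every $v_n^j \to x^\star$; as $f_n(v_n^j) \in \phi(v_n^j)$, the closed-graph form of upper semicontinuity in Definition~\ref{def:usc} yields $y_j \in \phi(x^\star)$ for each $j$. Taking limits then gives $x^\star = \sum_{j=0}^d \lambda_j\, y_j$, a convex combination of points of $\phi(x^\star)$; convexity of $\phi(x^\star)$ forces $x^\star \in \phi(x^\star)$, completing the proof.

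The main obstacle is this final limiting step: it requires simultaneously controlling the barycentric coordinates and the selected images at the \emph{varying} vertices of the shrinking simplices, which is exactly why the uniformly bounded number $d+1$ of vertices per simplex is essential, so that a single diagonal subsequence handles all of them at once. Here both the closed-valuedness and the convexity of $\phi$ enter decisively: upper semicontinuity places each limit $y_j$ in $\phi(x^\star)$, and convexity lets their weighted average $x^\star$ land there too. A minor technical point I would verify in the write-up is the existence of triangulations of an arbitrary compact convex body with vanishing mesh; alternatively, I would first reduce to the case of a simplex (via a retraction onto a large enclosing simplex, or by noting $E$ is an affine image of a simplex) and triangulate the simplex by iterated barycentric subdivision.
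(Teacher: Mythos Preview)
The paper does not prove this statement at all: Theorem~\ref{thm:Kakutani} is quoted from \cite{kakutani1941generalization} as a classical tool and is applied (without proof) in the argument for Theorem~\ref{thm:existence_matrix}. There is therefore no ``paper's own proof'' to compare against.

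That said, your proposal is essentially Kakutani's original 1941 argument and is correct. The approximation by piecewise-affine single-valued maps on a sequence of triangulations with vanishing mesh, the use of Brouwer on each approximant, and the passage to the limit via the closed-graph characterization in Definition~\ref{def:usc} together with convexity of $\phi(x^\star)$ are exactly the right ingredients. Your caveat at the end is the one genuine technical point: an arbitrary compact convex body $E\subset\R^d$ need not admit a finite simplicial triangulation, so the cleanest fix is the one you indicate---embed $E$ in a large simplex $\Sigma$, extend $\phi$ to $\Sigma$ by composing with the nearest-point retraction $r:\Sigma\to E$ (which preserves upper semicontinuity and convex, closed values since $r$ is continuous), triangulate $\Sigma$ by iterated barycentric subdivision, and run your argument there; the resulting fixed point automatically lies in $E$. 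With that adjustment the write-up goes through.
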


\subsubsection{Properties of Robust Value Functions}

To apply Kakutani's Fixed Point Theorem, we first establish some properties
of the worst-case expected payoff functions, i.e., robust value function,
defined as 
\[
F_{i}(\bpi)\triangleq\inf_{\br\in\uset(\bpi)}\pi_{i}^{\top}r_{i}\bpi_{-i}.
\]
Here we view $r_{i}\in\uset(\bpi)$ as a matrix in $\R^{\cA_{i}\times\cA_{-i}},$
$\pi_{i}\in\R^{\cA_{i}}$ as a column vector in and $\bpi_{-i}\in\R^{\cA_{-i}}$.
We begin by deriving an equivalent equation for the robust value function. 
\begin{proposition}
\label{prop:reward_matrix} For each player $i\in[N]$, given any
product joint policy $\bpi\in\Pi$ of all players, the robust
value for player $i$ satisfies 
\begin{align}
F_{i}(\bpi) & =\pi_{i}^{\top}\rno_{i}\bpi_{-i}-\sigma_{\rset_{i}(\bpi)}(-\pi_{i}\bpi_{-i}^{\top}),\label{eq:v_pi_in_sigma}
\end{align}
where $\sigma_{\cR_{i}(\bpi)}(\cdot)$ is the support function of
the reward uncertainty set $\cR_{i}(\bpi)$. 
    
\end{proposition}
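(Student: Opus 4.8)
The plan is to expand the infimum defining $F_i(\bpi)$ by substituting the parametrization $\urset(\bpi) = \brno + \rset(\bpi)$, so that $r_i = \rno_i + R_i$ for $R_i \in \rset_i(\bpi)$. First I would write
\[
F_i(\bpi) = \inf_{R_i \in \rset_i(\bpi)} \pi_i^\top (\rno_i + R_i) \bpi_{-i} = \pi_i^\top \rno_i \bpi_{-i} + \inf_{R_i \in \rset_i(\bpi)} \pi_i^\top R_i \bpi_{-i}.
\]
The scalar $\pi_i^\top R_i \bpi_{-i}$ is a bilinear form in the matrix $R_i$, and the crucial step is recognizing it as a Frobenius inner product: $\pi_i^\top R_i \bpi_{-i} = \langle R_i, \pi_i \bpi_{-i}^\top \rangle$, since $\pi_i^\top R_i \bpi_{-i} = \tr(\bpi_{-i}^\top R_i^\top \pi_i)^\top$-style manipulation gives $\sum_{a_i, a_{-i}} R_i(a_i, a_{-i}) \pi_i(a_i) \bpi_{-i}(a_{-i})$, which is exactly $\langle R_i, \pi_i \bpi_{-i}^\top \rangle$ with $\pi_i \bpi_{-i}^\top$ the rank-one outer product matrix in $\R^{\cA_i \times \cA_{-i}}$.

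Next I would convert the infimum into a supremum to match the support function definition: $\inf_{R_i \in \rset_i(\bpi)} \langle R_i, \pi_i \bpi_{-i}^\top \rangle = -\sup_{R_i \in \rset_i(\bpi)} \langle R_i, -\pi_i \bpi_{-i}^\top \rangle = -\sigma_{\rset_i(\bpi)}(-\pi_i \bpi_{-i}^\top)$, using the definition $\sigma_{\cM}(y) = \sup_{x \in \cM}\langle x, y\rangle$ stated in the Notation paragraph. Combining the two displays yields exactly \eqref{eq:v_pi_in_sigma}. This is essentially the Legendre--Fenchel/support-function calculus that underlies the robustness--regularization duality, specialized to the matrix case.

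The main thing to be careful about — though it is not really an obstacle — is the well-definedness of the infimum. Under Assumption \ref{assu:reward}, the bounded-game-value condition guarantees $F_i(\bpi) > -\infty$ (the infimum is bounded below by $L_r$ minus the nominal term, or more precisely the assumption directly lower-bounds $\E_{\ba \sim \bpi}[r_i(\ba)] = \pi_i^\top r_i \bpi_{-i}$), so $\sigma_{\rset_i(\bpi)}(-\pi_i \bpi_{-i}^\top) < +\infty$ and the identity holds in $\ER$ with no degeneracy. I would note that no convexity or closedness of $\rset_i(\bpi)$ is needed for this particular identity — it is purely the definition of the support function — although those properties (from Assumption \ref{assu:rectangular} / \ref{assu:reward}) will be used downstream for the fixed-point argument. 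I expect the entire proof to be three or four lines; the only genuinely substantive point is the bilinear-form-to-Frobenius-inner-product rewriting, which I would state explicitly via the index expansion to avoid any ambiguity in how $\pi_i$ and $\bpi_{-i}$ are treated as vectors versus how $r_i$ is treated as a matrix.
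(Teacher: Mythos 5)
Your proof is correct, but it takes a genuinely more elementary route than the paper. The paper's proof rewrites the constrained infimum as $\inf_{x\in\R^{\cA}}\{\pi_i^\top x\bpi_{-i}+\delta_{\rset_i(\bpi)}(x)\}$ and then invokes Fenchel--Rockafellar duality (Theorem \ref{thm:Fenchel_duality}), verifying the regularity condition $0\in\core(\dom(\delta_{\rset_i(\bpi)})-\Id_{\cA}(\dom f))$ and computing the conjugate $f^*(-y)$ of the linear functional $f(x)=\pi_i^\top x\bpi_{-i}$, which turns out to be the characteristic function of the single point $y=-\pi_i\bpi_{-i}^\top$; evaluating the dual then produces $-\sigma_{\rset_i(\bpi)}(-\pi_i\bpi_{-i}^\top)$. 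You bypass all of this by observing that $\pi_i^\top R_i\bpi_{-i}=\langle R_i,\pi_i\bpi_{-i}^\top\rangle$ and that $\inf_{R_i\in\rset_i(\bpi)}\langle R_i,\pi_i\bpi_{-i}^\top\rangle=-\sup_{R_i\in\rset_i(\bpi)}\langle R_i,-\pi_i\bpi_{-i}^\top\rangle$ is, by definition, $-\sigma_{\rset_i(\bpi)}(-\pi_i\bpi_{-i}^\top)$. Since the objective is linear in $R_i$, the Fenchel machinery collapses to exactly this tautology, so nothing is lost: your observation that neither convexity nor closedness of $\rset_i(\bpi)$ is needed for this identity is accurate (they matter only for attainment of the infimum and for the downstream fixed-point and continuity arguments). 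The paper's heavier framing buys uniformity with the general robustness--regularization duality template used elsewhere (e.g., for transition uncertainty), whereas your version is shorter and makes transparent that the identity is definitional. Your remarks on finiteness via the bounded-game-value condition are also consistent with how the paper uses Assumption \ref{assu:reward} in Lemma \ref{lem:v_pi_continuous}.
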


\begin{proof}
Recall that the characteristic function $\delta_{\cR_{i}(\bpi)}:\Real^{\cA}\to\{0,\infty\}$
over a set $\cR_{i}(\bpi)\subseteq\Real^{\cA}$ is defined as $\delta_{\cR_{i}(\bpi)}(x)=0$
if $x\in\cR_{i}(\bpi)$ and $+\infty$ otherwise. The Legendre-Fenchel
transform of $\delta_{\cR_{i}(\bpi)}$, i.e., the support function
$\sigma_{\cR_{i}(\bpi)}:\Real^{\cA}\to(-\infty,+\infty]$, is defined
as 
\begin{equation}
\sigma_{\cR_{i}(\bpi)}(y):=\sup_{x\in\cR_{i}(\bpi)}\left\langle x,y\right\rangle =\sup_{x\in\Real^{\cA}}\left\{ \left\langle y,x\right\rangle -\delta_{\cR_{i}(\bpi)}(x)\right\} .\label{eq:def_sigma}
\end{equation}

By the form of the uncertainty set $\uset(\bpi)=\brno+\times\rset_{i}(\bpi)$,
given any $\bpi\in\Pi$, we have 
\begin{eqnarray*}
F_{i}(\bpi)=\inf_{\br\in\uset(\bpi)}\pi_{i}^{\top}r_{i}\bpi_{-i} & = & \inf_{x\in\rset_{i}(\bpi)}\pi_{i}^{\top}(\rno_{i}+x)\bpi_{-i}\\
 & = & \inf_{x\in\rset_{i}(\bpi)}\pi_{i}^{\top}x\bpi_{-i}+\pi_{i}^{\top}\rno_{i}\bpi_{-i}\\
 & = & \inf_{x\in\R^{\cA}}\{\pi_{i}^{\top}x\bpi_{-i}+\delta_{\cR_{i}(\bpi)}(x)\}+\pi_{i}^{\top}\rno_{i}\bpi_{-i}.
\end{eqnarray*}
We now proceed to apply Fenchel-Rockafellar duality theorem (Theorem
\ref{thm:Fenchel_duality}\textbf{)} to the minimization term. Fix
player $i$'s policy $\pi_{i}$ and all other players' policy $\bpi_{-i}$.
We define the function $f:\Real^{\cA}\to\R$ as $f(x)=\pi_{i}^{\top}x\bpi_{-i}$
for each $x\in\R^{\cA}$. Consider the identity mapping $\Id_{\cA}:\Real^{\cA}\to\Real^{\cA}$.
Then $\dom(f)=\Real^{\cA}$, $\dom(\delta_{\rset_{i}(\bpi)})=\rset_{i}(\bpi)$,
and thus $\core(\dom(\delta_{\rset_{i}(\bpi)})-\Id_{\cA}(\dom(f)))=\core(\rset_{i}(\bpi)-\R^{\cA})=\core(\R^{\cA})=\R^{\cA}.$
Note that $0\in\R^{\cA}$. We now can apply Fenchel-Rockafellar duality,
noting that $(\Id_{\cA})^{*}=\Id_{\cA}$ and $(\delta_{\rset_{i}(\bpi)})^{*}(y)=\sigma_{\cR_{i}(\bpi)}(y)$:
\[
\inf_{x\in\R^{\cA}}\{f(x)+\delta_{\cR_{i}(\bpi)}(x)\}=-\inf_{y\in\Real^{\cA}}\{f^{*}(-y)+\delta_{\rset_{i}(\bpi)}^{*}(y)\}=-\inf_{y\in\Real^{\cA}}\{f^{*}(-y)+\sigma_{\rset_{i}(\bpi)}(y)\}.
\]
We have
\begin{eqnarray*}
f^{*}(-y) & = & \sup_{x\in\R^{\cA}}\{\langle x,-y\rangle-\pi_{i}^{\top}x\bpi_{-i}\}\\
 & = & \sup_{x\in\R^{\cA}}\sum_{a_{i}\in\cA_{i}}\sum_{\ba_{-i}\in\cA_{-i}}x(a_{i},\ba_{-i})\left[-y(a_{i},\ba_{-i})-\pi_{i}(a_{i})\bpi_{-i}(\ba_{-i})\right]\\
 & = & \begin{cases}
0 & \text{if }-y(a_{i},\ba_{-i})-\pi_{i}(a_{i})\bpi_{-i}(\ba_{-i})=0\;\forall a_{i}\in\cA_{i},\ba_{-i}\in\cA_{-i}\\
+\infty & \text{otherwise}
\end{cases}
\end{eqnarray*}
Note that $y\in\R^{\cA}$ such that $y(a_{i},\ba_{-i})=-\pi_{i}(a_{i})\bpi_{-i}(\ba_{-i})$
is exactly the negative outer product of $\pi_{i}$ and $\bpi_{-i}$,
i.e., $y=-\pi_{i}\bpi_{-i}^{\top}$. Therefore, we have 
\[
F_{i}(\bpi)=-\inf_{y\in\Real^{\cA}}\{f^{*}(-y)+\sigma_{\rset_{i}(\bpi)}(y)\}+\pi_{i}^{\top}\rno_{i}\bpi_{-i}=\pi_{i}^{\top}\rno_{i}\bpi_{-i}-\sigma_{\rset_{i}(\bpi)}(-\pi_{i}\bpi_{-i}^{\top}).
\]
\end{proof}

\begin{lemma}
\label{lem:v_pi_continuous} Under Assumption \ref{assu:reward}, for each agent $i\in[N]$, $F_{i}(\bpi)$
is continuous on $\Pi.$ 
\end{lemma}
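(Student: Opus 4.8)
The plan is to build directly on Proposition~\ref{prop:reward_matrix}, which gives $F_i(\bpi)=\pi_i^\top\rno_i\bpi_{-i}-\sigma_{\rset_i(\bpi)}(-\pi_i\bpi_{-i}^\top)$. The map $\bpi\mapsto\pi_i^\top\rno_i\bpi_{-i}$ is multilinear in $(\pi_1,\dots,\pi_N)$ and hence continuous on $\Pi$, so it suffices to prove that $g_i(\bpi):=\sigma_{\rset_i(\bpi)}(-\pi_i\bpi_{-i}^\top)$ is continuous on $\Pi$. I would split $\Pi=\int(\Pi)\cup\Bd(\Pi)$. Continuity of $g_i$ at every $\bpi\in\Bd(\Pi)$ is precisely part~2 of Assumption~\ref{assu:reward}, so the only real work is at interior points (this also covers sequences that approach the boundary from inside $\Pi$).

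For $\bpi\in\int(\Pi)$, I would argue sequentially. Let $\bpi^{(n)}\to\bpi$; since $\int(\Pi)$ is open, $\bpi^{(n)}\in\int(\Pi)$ for all large $n$. By part~3 of Assumption~\ref{assu:reward}, $\rset_i(\cdot)$ is continuous at $\bpi$, so $\epsilon_n:=D\big(\rset_i(\bpi^{(n)}),\rset_i(\bpi)\big)\to 0$, and $\rset_i(\bpi)$ is bounded, hence compact. From the definition of $D$ and compactness of $\rset_i(\bpi)$, every point of $\rset_i(\bpi^{(n)})$ lies within $\ell_\infty$-distance $\epsilon_n$ of $\rset_i(\bpi)$, so the sets $\rset_i(\bpi^{(n)})$ are uniformly bounded for $n$ large. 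Also $y^{(n)}:=-\pi_i^{(n)}(\bpi_{-i}^{(n)})^\top\to y:=-\pi_i\bpi_{-i}^\top$ in $\R^{\cA}$. I would then apply the elementary support-function stability bound: for compact $\cM,\cM'\subset\R^{\cA}$ and $w,w'\in\R^{\cA}$,
\[
|\sigma_{\cM}(w)-\sigma_{\cM'}(w')|\le\Big(\sup_{x\in\cM\cup\cM'}\norm{x}_\infty\Big)\norm{w-w'}_1+\max\{\norm{w}_1,\norm{w'}_1\}\,D(\cM,\cM'),
\]
obtained by inserting the cross term $\sigma_{\cM}(w')$: the gap $\sigma_{\cM}(w)-\sigma_{\cM}(w')$ is at most $\sup_{x\in\cM}|\langle x,w-w'\rangle|$ by H\"older, while $|\sigma_{\cM}(w')-\sigma_{\cM'}(w')|\le\norm{w'}_1\,D(\cM,\cM')$ since each near-maximizer of one set can be matched with an $\ell_\infty$-close point of the other. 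Applying this with $\cM=\rset_i(\bpi^{(n)})$, $\cM'=\rset_i(\bpi)$, $w=y^{(n)}$, $w'=y$ gives $g_i(\bpi^{(n)})\to g_i(\bpi)$.

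Together the two cases show $g_i$, and hence $F_i$, is continuous on $\Pi$. The crux is the interior case: one must upgrade the pointwise boundedness and set-continuity guaranteed by Assumption~\ref{assu:reward}(3) into a locally uniform bound on $\rset_i(\bpi^{(n)})$, and verify the support-function stability estimate against the $\ell_\infty$-Hausdorff distance $D$ used in the paper. Both steps are routine but need some care with the norms involved; the boundary case and the bilinear term are immediate.
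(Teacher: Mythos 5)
Your proposal is correct and follows essentially the same route as the paper: both decompose $F_i$ via Proposition~\ref{prop:reward_matrix}, handle interior points by combining the Hausdorff continuity and boundedness of $\rset_i(\cdot)$ from Assumption~\ref{assu:reward}(3) with a H\"older/Hausdorff telescoping estimate on the support function, and dispatch the boundary directly via Assumption~\ref{assu:reward}(2). The only difference is presentational --- you package the telescoping as an abstract support-function stability bound and argue sequentially, whereas the paper carries out the same three-term estimate inline with an explicit $\bar{\delta}(\epsilon,\bpi)$.
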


\begin{proof}
Here we focus on the case where the uncertainty set $\uset(\bpi)$
is well-defined for each $\bpi\in\int(\Pi)$. The proof for the setting
where $\uset(\bpi)$ is well-defined for all $\bpi\in\Pi$ follows
similarly. Recall that for an arbitrary $\bpi\in\int(\Pi)$, we have
$\sigma_{\cR_{i}(\bpi)}(-\pi_{i}\bpi_{-i}^{\top}):=\sup_{x\in\cR_{i}(\bpi)}\left\langle x,-\pi_{i}\bpi_{-i}^{\top}\right\rangle =\sup_{x\in\Real^{\cA}}\left\{ \left\langle -\pi_{i}\bpi_{-i}^{\top},x\right\rangle -\delta_{\cR_{i}(\bpi)}(x)\right\} $.
For each $x\in\cR_{i}(\bpi)$, by Assumption \ref{assu:reward}, we
have 
\[
\left\langle x,-\pi_{i}\bpi_{-i}^{\top}\right\rangle =-\E_{\ba\sim\bpi}\left[x(\ba)\right]\leq-L_{r}<+\infty.
\]
Therefore, the $\sup$ attains its maximum on the closed set $\cR_{i}(\bpi)$.
We denote the maximizer by $x^{*}(\bpi)\in\cR_{i}(\bpi)$. 

Let us consider $\epsilon>0$. Consider $\delta(\epsilon,\bpi)$ given
by 
\[
\bar{\delta}(\epsilon,\bpi)=\min\left\{ \frac{1}{2}\inf_{z\in\Bd(\Pi)}\left\Vert \bpi-z\right\Vert _{\infty},\delta\left(\frac{\epsilon}{3}\right),\frac{\min\{\epsilon,1\}}{6|\cA|\cdot\max\{r_{\max},M\}}\right\} ,
\]
where $M:=\max_{x\in\rset_{i}(\bpi)}\left\Vert x\right\Vert _{\infty}$.
For each $\bpi'\in\Pi$ such that $\left\Vert \bpi-\bpi'\right\Vert _{\infty}<\bar{\delta}(\epsilon,\bpi)$,
we have $\bpi'\in\int(\Pi)$. By Assumption \ref{assu:reward}, $D\big(\rset_{i}(\bpi),\rset_{i}(\bpi')\big)<\frac{\epsilon}{3}$.
Thus there exist $\hat{x}'\in\rset_{i}(\bpi')$ and $\hat{x}\in\rset_{i}(\bpi)$
such that $\left\Vert \hat{x}-x^{*}(\bpi')\right\Vert _{\infty}<\frac{\epsilon}{3}$
and $\left\Vert \hat{x}'-x^{*}(\bpi)\right\Vert _{\infty}<\frac{\epsilon}{3}$.

By Proposition \ref{prop:reward_matrix}, we have 
\begin{align*}
F_{i}(\bpi')-F_{i}(\bpi) & =(\pi'_{i})^{\top}\rno_{i}\bpi'_{-i}-\sup_{x\in\cR_{i}(\bpi')}\left\langle x,-\pi'_{i}(\bpi'_{-i})^{\top}\right\rangle -\pi_{i}^{\top}\rno_{i}\bpi_{-i}+\sup_{x\in\cR_{i}(\bpi)}\left\langle x,-\pi_{i}\bpi_{-i}^{\top}\right\rangle \\
 & \leq\left\langle \rno,\pi'_{i}(\bpi'_{-i})^{\top}-\pi_{i}\bpi_{-i}^{\top}\right\rangle -\left\langle \hat{x}',-\pi'_{i}(\bpi'_{-i})^{\top}\right\rangle +\left\langle x^{*}(\bpi),-\pi_{i}\bpi_{-i}^{\top}\right\rangle \\
 & =\left\langle \rno,\pi'_{i}(\bpi'_{-i})^{\top}-\pi_{i}\bpi_{-i}^{\top}\right\rangle -\left\langle \hat{x}'-x^{*}(\bpi),-\pi'_{i}(\bpi'_{-i})^{\top}\right\rangle +\left\langle x^{*}(\bpi),\pi'_{i}(\bpi'_{-i})^{\top}-\pi_{i}\bpi_{-i}^{\top}\right\rangle \\
 & \leq|\cA|\left\Vert \rno_{i}\right\Vert _{\infty}\left\Vert \pi'_{i}(\bpi'_{-i})^{\top}-\pi_{i}\bpi_{-i}^{\top}\right\Vert _{\infty}+\left\Vert \hat{x}'-x^{*}(\bpi)\right\Vert _{\infty}+|\cA|\left\Vert x^{*}(\bpi)\right\Vert _{\infty}\left\Vert \pi_{i}\bpi_{-i}^{\top}-\pi'_{i}(\bpi'_{-i})^{\top}\right\Vert _{\infty}\\
 & \leq|\cA|r_{\max}\cdot2\left\Vert \bpi-\bpi'\right\Vert _{\infty}+\left\Vert \hat{x}'-x^{*}(\bpi')\right\Vert _{\infty}+|\cA|\cdot M\cdot2\left\Vert \bpi-\bpi'\right\Vert _{\infty}\\
 & <\frac{\epsilon}{3}+\frac{\epsilon}{3}+\frac{\epsilon}{3}\\
 & =\epsilon.
\end{align*}

For the other direction, we have 
\begin{align*}
F_{i}(\bpi)-F_{i}(\bpi') & =\pi_{i}^{\top}\rno_{i}\bpi_{-i}-\sigma_{\rset_{i}(\bpi)}(-\pi_{i}\bpi_{-i}^{\top})-(\pi'_{i})^{\top}\rno_{i}\bpi'_{-i}+\sigma_{\rset_{i}(\bpi')}(-\pi'_{i}(\bpi'_{-i})^{\top})\\
 & =\pi_{i}^{\top}\rno_{i}\bpi_{-i}-\sup_{x\in\cR_{i}(\bpi)}\left\langle x,-\pi_{i}\bpi_{-i}^{\top}\right\rangle -(\pi'_{i})^{\top}\rno_{i}\bpi'_{-i}+\sup_{x\in\cR_{i}(\bpi')}\left\langle x,-\pi'_{i}(\bpi'_{-i})^{\top}\right\rangle \\
 & \leq\left\langle \rno,\pi_{i}\bpi_{-i}^{\top}-\pi'_{i}(\bpi'_{-i})^{\top}\right\rangle -\left\langle \hat{x},-\pi_{i}\bpi_{-i}^{\top}\right\rangle +\left\langle x^{*}(\bpi'),-\pi'_{i}(\bpi'_{-i})^{\top}\right\rangle \\
 & =\left\langle \rno,\pi_{i}\bpi_{-i}^{\top}-\pi'_{i}(\bpi'_{-i})^{\top}\right\rangle -\left\langle \hat{x},\pi'_{i}(\bpi'_{-i})^{\top}-\pi_{i}\bpi_{-i}^{\top}\right\rangle +\left\langle x^{*}(\bpi')-\hat{x},-\pi'_{i}(\bpi'_{-i})^{\top}\right\rangle \\
 & \leq|\cA|\left\Vert \rno_{i}\right\Vert _{\infty}\left\Vert \pi_{i}\bpi_{-i}^{\top}-\pi'_{i}(\bpi'_{-i})^{\top}\right\Vert _{\infty}+|\cA|\left\Vert \hat{x}\right\Vert _{\infty}\left\Vert \pi_{i}\bpi_{-i}^{\top}-\pi'_{i}(\bpi'_{-i})^{\top}\right\Vert _{\infty}+\left\Vert x^{*}(\bpi')-\hat{x}\right\Vert _{\infty}\\
 & \leq|\cA|r_{\max}\cdot2\left\Vert \bpi-\bpi'\right\Vert _{\infty}+|\cA|\cdot M\cdot2\left\Vert \bpi-\bpi'\right\Vert _{\infty}+\left\Vert \hat{x}-x^{*}(\bpi')\right\Vert \\
 & <\frac{\epsilon}{3}+\frac{\epsilon}{3}+\frac{\epsilon}{3}\\
 & =\epsilon.
\end{align*}

Therefore, $F_{i}(\cdot)$ is continuous in $\int(\Pi).$ By Assumption
\ref{assu:reward}, the support function $\sigma_{\rset_{i}(\bpi)}\big(-\pi_{i}\bpi_{-i}^{\top}\big)$
is continuous in the boundary of the compact set $\Pi$. From Proposition
\ref{prop:reward_matrix}, we have that $F_{i}(\cdot)$ is also continuous
in $\Bd(\Pi)$. Therefore, $F_{i}(\cdot)$ is continuous in $\Pi$. 
\end{proof}

\begin{lemma}
\label{lem:v_pi_concave}Under Assumptions \ref{assu:rectangular}
and \ref{assu:reward}, for each agent $i\in[N]$, $F_{i}(\bpi)$
is concave in $\pi_{i}$ given a fixed $\bpi_{-i}$. 
\end{lemma}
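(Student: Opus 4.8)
The plan is to read off the closed form of $F_i$ from Proposition \ref{prop:reward_matrix} and then verify concavity in $\pi_i$ term by term. By Proposition \ref{prop:reward_matrix}, for every $\bpi\in\Pi$ we have
\[
F_i(\bpi)=\pi_i^\top\rno_i\bpi_{-i}-\sigma_{\rset_i(\bpi)}\bigl(-\pi_i\bpi_{-i}^\top\bigr).
\]
Fix $\bpi_{-i}$ throughout. The first summand $\pi_i\mapsto\pi_i^\top\rno_i\bpi_{-i}$ is linear in $\pi_i$, hence concave. It therefore suffices to show that $\pi_i\mapsto-\sigma_{\rset_i(\bpi)}(-\pi_i\bpi_{-i}^\top)$ is concave in $\pi_i$, i.e.\ that $\pi_i\mapsto\sigma_{\rset_i(\bpi)}(-\pi_i\bpi_{-i}^\top)$ is convex in $\pi_i$ with $\bpi_{-i}$ held fixed (note the uncertainty set $\rset_i(\bpi)$ may itself vary with $\pi_i$).

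For the second summand I would invoke Assumption \ref{assu:reward}. Specializing the $s$-rectangular setting to matrix games ($|\cS|=H=1$), part~2 (Convexity) of Assumption \ref{assu:reward} states precisely that $\pi_{i}\mapsto\sigma_{\rset_{i}(\bpi)}\bigl(-\pi_{i}\bpi_{-i}^\top\bigr)$ is convex in $\pi_i$ for fixed $\bpi_{-i}$. Hence $-\sigma_{\rset_{i}(\bpi)}(-\pi_{i}\bpi_{-i}^\top)$ is concave in $\pi_i$, and $F_i(\bpi)$, being a sum of two functions concave in $\pi_i$, is concave in $\pi_i$. For completeness I would also record the policy-independent special case without appealing to the abstract hypothesis: when $\rset_i$ does not depend on $\bpi$, $\sigma_{\rset_i}$ is a support function and thus convex (indeed sublinear), and since $\pi_i\mapsto-\pi_i\bpi_{-i}^\top$ is a linear map, the composition $\pi_i\mapsto\sigma_{\rset_i}(-\pi_i\bpi_{-i}^\top)$ is convex, which recovers concavity of $F_i$ as a special instance.

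The only mild obstacle is bookkeeping: one must make sure the convexity clause of Assumption \ref{assu:reward} is being used in exactly the form needed — convexity in $\pi_i$ with $\bpi_{-i}$ fixed, correctly accounting for the policy dependence of $\rset_i(\bpi)$ through $\pi_i$. Since the assumption is phrased in precisely this way, no additional argument is required; the lemma is essentially a corollary of Proposition \ref{prop:reward_matrix} together with part~2 of Assumption \ref{assu:reward}.
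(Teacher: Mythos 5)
Your proof is correct and follows exactly the paper's route: write $F_i(\bpi)=\pi_i^\top\rno_i\bpi_{-i}-\sigma_{\rset_i(\bpi)}(-\pi_i\bpi_{-i}^\top)$ via Proposition \ref{prop:reward_matrix}, observe the first term is linear in $\pi_i$, and invoke the convexity clause (part 2) of Assumption \ref{assu:reward} for the support-function term. The paper's own proof is a one-line version of the same argument; your added remark on the policy-independent case is a harmless elaboration.
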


\begin{proof}
By Proposition \ref{prop:reward_matrix}, the concavity of $F_{i}(\bpi)$
in $\pi_{i}$ follows by the convexity of the support function $\sigma_{\rset_{i}(\bpi)}\big(-\pi_{i}\bpi_{-i}^{\top}\big)$. 
\end{proof}

\subsubsection{Existence of RNE}

\begin{theorem}
\label{thm:existence_matrix} Under Assumptions \ref{assu:rectangular}
and \ref{assu:reward}, the robust matrix game has an equilibrium.
\end{theorem}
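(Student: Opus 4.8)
The plan is to apply Kakutani's Fixed Point Theorem (Theorem A.10) to the best-response correspondence, exactly mirroring Nash's classical existence argument but with the robust value functions $F_i$ in place of the expected payoffs. First I would define, for each $\bpi \in \Pi$, the set
\[
\phi(\bpi) := \times_{i \in [N]} \phi_i(\bpi_{-i}), \qquad \phi_i(\bpi_{-i}) := \argmax_{\pi_i' \in \Pi_i} F_i(\pi_i' \times \bpi_{-i}),
\]
so that a fixed point $\bpi \in \phi(\bpi)$ is precisely a joint policy in which each $\pi_i$ is a robust best response to $\bpi_{-i}$, i.e.\ an RNE in the sense of Definition 2.3 restricted to matrix games. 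The domain $\Pi = \times_i \Delta(\cA_i)$ is a closed, bounded, convex subset of a Euclidean space, so the hypotheses of Theorem A.10 on $E$ are met.

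The verification then breaks into three standard checks. (i) \emph{Nonemptiness and convexity of $\phi(\bpi)$}: by Lemma A.13, $F_i(\pi_i' \times \bpi_{-i})$ is concave in $\pi_i'$ on the compact convex set $\Pi_i$, and by Lemma A.12 it is continuous, so the argmax is a nonempty compact convex set; a product of such sets is again nonempty, convex, and closed. (ii) \emph{Upper semicontinuity of $\phi$}: take sequences $\bpi^{(n)} \to \bpi$ and $\bpi' {}^{(n)} \in \phi(\bpi^{(n)})$ with $\bpi'{}^{(n)} \to \bpi'$; I must show $\bpi' \in \phi(\bpi)$. For each $i$ and each fixed $\tilde\pi_i \in \Pi_i$, optimality of $\pi_i'{}^{(n)}$ gives $F_i(\pi_i'{}^{(n)} \times \bpi_{-i}^{(n)}) \geq F_i(\tilde\pi_i \times \bpi_{-i}^{(n)})$; passing to the limit using joint continuity of $F_i$ on $\Pi$ (Lemma A.12) yields $F_i(\pi_i' \times \bpi_{-i}) \geq F_i(\tilde\pi_i \times \bpi_{-i})$ for all $\tilde\pi_i$, hence $\pi_i' \in \phi_i(\bpi_{-i})$, and so $\bpi' \in \phi(\bpi)$, which is the u.s.c.\ condition of Definition A.9. (iii) Applying Theorem A.10 then produces $\bpi^* \in \phi(\bpi^*)$, which is the desired RNE.

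The main obstacle is not the fixed-point machinery — that is routine once continuity and concavity are in hand — but making sure the continuity input is genuinely global on all of $\Pi$, including the boundary. This is exactly where Assumption 2.3 does the work: Lemma A.12 already establishes continuity of $F_i$ on $\int(\Pi)$ using the bounded-game-value and set-continuity conditions, and extends it to $\Bd(\Pi)$ using the stipulated continuity of $\sigma_{\rset_{i}(\bpi)}\big(-\pi_{i,h}(s)\bpi_{-i,h}^{\top}(s)\big)$ on the boundary together with the identity in Proposition A.11. So the proof of Theorem A.14 is essentially: quote Lemmas A.12 and A.13 to get the regularity of $F_i$, assemble $\phi$, verify the three bullet points above, and invoke Kakutani. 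One should be slightly careful that the argmax correspondence is closed (not merely that its graph contains the relevant limit points) — but this follows from the same continuity argument, since the limit of maximizers of a continuously-parametrized family of continuous concave functions over a fixed compact set is again a maximizer.
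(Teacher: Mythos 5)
Your proposal is correct and follows essentially the same route as the paper: define the best-response correspondence $\phi$, use the continuity lemma (Lemma \ref{lem:v_pi_continuous}) for nonemptiness and upper semicontinuity, the concavity lemma (Lemma \ref{lem:v_pi_concave}) for convexity of the image sets, and invoke Kakutani's Fixed Point Theorem. Your remark about ensuring continuity of $F_i$ holds on all of $\Pi$, including the boundary, is exactly the role the paper assigns to Assumption \ref{assu:reward} in the proof of Lemma \ref{lem:v_pi_continuous}.
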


\begin{proof}
We first construct a point-to-set mapping. Note that $\Pi$ is closed,
bounded and convex. We define $\phi:\Pi\rightarrow2^{\Pi}$ as 
\[
\phi(\bpi):=\left\{ z\in\Pi\mid z_{i}\in\argmax_{u_{i}\in\Delta(\cA_{i})}F_{i}(u_{i},\bpi_{-i}),i=1,\ldots,N\right\} .
\]
We next show that $\phi$ satisfies all the conditions in the Kakutani's
Fixed Point Theorem. 

By Lemma \ref{lem:v_pi_continuous}, $F_{i}(u_{i},\bpi_{-i})$ is
continuous. By Weierstrass' Theorem, the maximum of this continuous
function on a compact set $\Delta(\cA_{i})$ exists, i.e., $\argmax_{u_{i}\in\Delta(\cA_{i})}F_{i}(u_{i},\bpi_{-i})\neq\emptyset$.
We thus have $\phi(\bpi)\neq\emptyset$ for each $\bpi\in\Pi.$

Next, we show that $\phi(\bpi)$ is a convex set for each $\bpi\in\Pi$.
Suppose that $z,w\in\phi(\bpi).$ By the definition of $\phi$, for
each $i\in[N]$ and $\forall y_{i}\in\Delta(\cA_{i})$, we have 
\[
F_{i}(z_{i},\bpi_{-i})=F_{i}(w_{i},\bpi_{-i})\geq F_{i}(y_{i},\bpi_{-i}).
\]
Hence, for each $\lambda\in[0,1],$ we have 
\[
\lambda F_{i}(z_{i},\bpi_{-i})+(1-\lambda)F_{i}(w_{i},\bpi_{-i})\geq F_{i}(y_{i},\bpi_{-i}).
\]
By the concavity of $F_{i}$ from Lemma \ref{lem:v_pi_concave}, 
\[
F_{i}(\lambda z_{i}+(1-\lambda)w_{i},\bpi_{-i})\geq\lambda F_{i}(z_{i},\bpi_{-i})+(1-\lambda)F_{i}(w_{i},\bpi_{-i})\geq F_{i}(y_{i},\bpi_{-i}).
\]
Therefore, $\lambda z+(1-\lambda)w\in\phi(\bpi).$

We now show that $\phi$ is upper semi-continuous. Suppose that $\bpi^{n}\in\Pi$
with $\lim_{n\rightarrow\infty}\bpi^{n}=\bpi,$ and $z^{n}\in\phi(\bpi^{n})$
with $\lim_{n\rightarrow\infty}z^{n}=z$. By the definition of $\phi,$for
each $n$ and $i\in[N]$ and $\forall y_{i}\in\Delta(\cA_{i})$, we
have 
\[
F_{i}(z_{i}^{n},\bpi_{-i}^{n})\geq F_{i}(y_{i},\bpi_{-i}^{n}).
\]
By the continuity of $F_{i}$, if we take the limit on both sides,
we have 
\[
F_{i}(z_{i},\bpi_{-i})\geq F_{i}(y_{i},\bpi_{-i}).
\]
Hence $z\in\phi(\bpi).$ Therefore, $\phi$ is upper semi-continuous. 

Together, we show that $\phi$ satisfies all the conditions of Kakutani's
Fixed Point Theorem (Theorem \ref{thm:Kakutani}). Therefore, there
exists $\bpi\in\Pi,$ such that $\bpi\in\phi(\bpi)$. That is, there
exists an equilibrium in the robust matrix game. 
\end{proof}

\subsection{Markov Games: Proof of Theorem~\ref{thm:Existence_RNE}}\label{sec:existence_markov}
\begin{proof}

We define, 
\begin{equation}
\bpi_{h}^{\dagger}(s)\in NE(Q_{h}^{\bpine}(s,\cdot)).\label{equ: NE-helper}
\end{equation}
We will show that the above policy $\bpine$ is well defined. By definition
of NE and (\ref{eq:RBE_V}), this means that, 
\[
V_{i,h}^{\bpi^{\dagger}}(s)=Q_{i,h}^{\bpi^{\dagger}}(s,\bpine_{h}(s))=\sup_{u\in\Delta(\cA_{i})}Q_{i,h}^{\bpi^{\dagger}}(s,(u,\bpine_{-i,h}(s))).
\]

We show the stronger claim that for the policy $\bpine$ defined above,

\begin{equation}
V_{i,h}^{\bpine}(s)=\sup_{\tilde{\pi}_{i}\in\Pi_{i}}V_{i,h}^{(\tilde{\pi}_{i},\bpine_{-i})}(s),\quad\quad\forall h\in[H],s\in\cS,i\in[N],\label{eq:stronger}
\end{equation}
where $\Pi_{i}$ denotes the set of policies for agent $i$. We proceed
by induction on $h$. 
\begin{enumerate}
\item (\emph{Base Case}) Suppose that $h=H$. For each $s\in\cS$, by Theorem
\ref{thm:existence_matrix}, there exists an equilibrium for the robust
matrix game with reward uncertainty set $\times_{i\in[N]}\uset_{i,s,H}$.
Therefore, $\bpine_{H}(s)$ in (\ref{equ: NE-helper}) is well defined. We
also have,

\begin{align*}
V_{i,H}^{\bpine}(s) & =\sup_{u\in\Delta(\cA_{i})}Q_{i,H}^{\bpi^{\dagger}}\big(s,(u,\bpine_{-i,H}(s))\big)\\
 & =\sup_{u\in\Delta(\cA_{i})}\inf_{\br_{H}\in\urset_{s,H}}\E_{\ba\sim(u,\bpine_{-i,H}(s))}\left[r_{i,H}(s,\ba)\right]\\
 & =\sup_{\tilde{\pi}_{i}\in\Pi_{i}}\inf_{\br_{H}\in\urset_{s,H}}\E_{\ba\sim(\tilde{\pi}_{i,H}(s),\bpine_{-i,H}(s))}\left[r_{i,H}(s,\ba)\right]\\
 & =\sup_{\tilde{\pi}_{i}\in\Pi_{i}}V_{i,H}^{(\tilde{\pi}_{i},\bpine_{-i})}(s).
\end{align*}
The second equality follows from (\ref{eq:RBE_Q}) and the fourth
equality follows from (\ref{eq:RBE_V}). The third equality follows
since $\bpine$ is Markovian.

Overall, we see that (\ref{eq:stronger}) holds for $h=H$.
\item (\emph{Inductive Step}) Suppose that $h<H$. For any $i\in[N]$ and
$s\in\cS$, for each $\mu\in\Delta(\cA)$, we define 
\begin{align*}
F_{i,s,h}(\mu) & =Q_{i,h}^{\bpi^{\dagger}}\big(s,\mu\big)\\
 & =\inf_{\br_{h}\in\urset_{s,h}(\mu)}\inf_{P_{h}\in\pset_{s,h}}\E_{\ba\sim\mu}\left[r_{i,h}(s,\ba)+[P_{h}V_{i,h+1}^{\bpine}](s,\ba)\right].
\end{align*}
Following the same line of argument for the proof of Theorem \ref{thm:existence_matrix}
for a robust matrix game, we can show that $F_{i,s,h}(\mu)$ is continuous
and concave, and consequently, there exists an equilibrium for the
stage game with the expected payoff $\{F_{i,s,h}\}_{i\in[N]}.$ Therefore,
the NE $\bpine_{h}(s)$ in (\ref{equ: NE-helper}) is well defined.
We see that,

\begin{align*}
V_{i,h}^{\bpine}(s) & =\sup_{u\in\Delta(\cA_{i})}Q_{i,h}^{\bpi^{\dagger}}\big(s,(u,\bpine_{-i,h}(s))\big)\\
 & =\sup_{u\in\Delta(\cA_{i})}\inf_{\br_{h}\in\urset_{s,h}}\inf_{P_{h}\in\pset_{s,h}}\E_{\ba\sim(\tilde{\pi}_{i,h}(s),\bpine_{-i,h}(s))}\left[r_{i,h}(s,\ba)+[P_{h}V_{i,h+1}^{\bpine}](s,\ba)\right]\\
 & \leq\sup_{\tilde{\pi}_{i}\in\Pi_{i}}\inf_{\br_{h}\in\urset_{s,h}}\inf_{P_{h}\in\pset_{s,h}}\E_{\ba\sim(\tilde{\pi}_{i,h}(s),\bpine_{-i,h}(s))}\left[r_{i,h}(s,\ba)+[P_{h}V_{i,h+1}^{(\tilde{\pi}_{i},\bpine_{-i})}](s,\ba)\right]\\
 & =\sup_{\tilde{\pi}_{i,h}(s)\in\Delta(\cA_{i})}\inf_{\br_{h}\in\urset_{s,h}}\inf_{P_{h}\in\pset_{s,h}}\E_{\ba\sim(\tilde{\pi}_{i,h}(s),\bpine_{-i,h}(s))}\left[r_{i,h}(s,\ba)+\sup_{\tilde{\pi}_{i}\in\Pi_{i}}[P_{h}V_{i,h+1}^{(\tilde{\pi}_{i},\bpine_{-i})}](s,\ba)\right]\\
 & \leq\sup_{\tilde{\pi}_{i,h}(s)\in\Delta(\cA_{i})}\inf_{\br_{h}\in\urset_{s,h}}\inf_{P_{h}\in\pset_{s,h}}\E_{\ba\sim(\tilde{\pi}_{i,h}(s),\bpine_{-i,h}(s))}\left[r_{i,h}(s,\ba)+\mathbb{E}_{s'\sim P_{h}(s,a)}\left[\sup_{\tilde{\pi}_{i}\in\Pi_{i}}V_{i,h+1}^{(\tilde{\pi}_{i},\bpine_{-i})}(s')\right]\right]\\
 & =\sup_{\tilde{\pi}_{i,h}(s)\in\Delta(\cA_{i})}\inf_{\br_{h}\in\urset_{s,h}}\inf_{P_{h}\in\pset_{s,h}}\E_{\ba\sim(\tilde{\pi}_{i,h}(s),\bpine_{-i,h}(s))}\left[r_{i,h}(s,\ba)+\mathbb{E}_{s'\sim P_{h}(s,a)}\left[V_{i,h+1}^{\bpine}(s')\right]\right]\\
 & =\sup_{\tilde{\pi}_{i,h}(s)\in\Delta(\cA_{i})}Q_{i,h}^{\bpine}\big(s,(\tilde{\pi}_{i,h}(s),\bpine_{-i,h}(s))\big)\\
 & =V_{i,h}^{\bpine}(s).
\end{align*}
The first two equalities use (\ref{equ: NE-helper}) and (\ref{eq:RBE_Q}),
respectively. The first inequality follows by allowing player $i$
to also deviate at future steps. The second inequality uses Jensen's
inequality. The following equality follows from the induction hypothesis.
The last two equalities use (\ref{eq:RBE_Q}) and (\ref{equ: NE-helper})
respectively.

Next, since the starting and ending terms are the same, all inequalities
must be equalities. In particular, the first inequality is equality,
which is the relationship we wanted to show after applying (\ref{equ: NE-helper})
and (\ref{eq:RBE_Q}). Since this relationship holds for arbitrary
$s\in\cS$ and $i\in[N]$, we see that (\ref{eq:stronger}) holds
at time $h$.
\end{enumerate}
This completes the proof of (\ref{eq:stronger}). The fact that $\bpine$
is an NE then immediately follows from the $h=1$ case.

\end{proof}

\section{Analysis of Markov Games with Reward Uncertainty} \label{sec:proof_section_reward}

\subsection{Proof of Theorem \ref{thm:equiv_reward_matrix}}\label{sec:proof_thm_equiv_reward_matrix}

Let us first recall the definition of \emph{regularized value functions} for a regularized Markov game $(\cS,\{\cA_{i}\}_{i\in[N]},P,\br,H,\Omega),$ with $G=(P,\br)$. Given a joint policy $\bpi\in\Pi$, for each player $i$, $\forall s\in\cS,\ba\in\cA,h\in[H],$
\begin{align}\label{eq:V_REMG}
& \VR_{i,h}^{\bpi}(s,G)=\E_{P}^{\bpi}\Big[\sum_{t=h}^{H}r_{i,t}(s_{t},\ba_{t})-\Omega_{i,t}(\bpi_{t}(s_{t}))|s_h=s\Big],\\
& \QR_{i,h}^{\bpi}(s,\ba,G)=r_{i,h}(s,\ba) +\E_{P}^{\bpi}\Big[\sum_{t=h+1}^{H}\left(r_{i,t}(s_{t},\ba_{t})-\Omega_{i,t}(\bpi_{t}(s_{t}))\right)|s_{h}=s,\ba_{h}=\ba\Big]. \label{eq:BE_Q_REMG}
\end{align}

\begin{proof}
By the definition of the expected payoff $\VR_{i}^{\bpi}$ in \eqref{eq:V_REMG} for
each player $i$ in a regularized game, for each product joint policy $\bpi\in\Pi$, we have
\[
\VR_{i}^{\bpi}(\brno)=\pi_{i}^{\top}\rno_{i}\bpi_{-i}-\Omega_{i}(\bpi)=\pi_{i}^{\top}\rno_{i}\bpi_{-i}-\sigma_{\rset_{i}}(-\pi_{i}\bpi_{-i}^{\top}).
\]
For the robust game, by Proposition \ref{prop:reward_matrix}, for
each $\bpi\in\Pi$, the robust value of player $i$ satisfies:
\[
V_{i}^{\bpi}=\pi_{i}^{\top}\rno_{i}\bpi_{-i}-\sigma_{\rset_{i}}(-\pi_{i}\bpi_{-i}^{\top})=\VR_{i}^{\bpi}(\brno).
\]
Consider any RNE $\bpine$ of the robust game. By definition, we have
\[
\pine_{i}\in\argmax_{\pi_{i}\in\Delta(\cA_{i})}V_{i}^{\pi_{i}\times\bpine_{-i}}=\argmax_{\pi_{i}\in\Delta(\cA_{i})}\VR_{i}^{\pi_{i}\times\bpine_{-i}}(\brno),
\]
which implies that $\bpine$ is an NE of the regularized game. Following a
similar argument, we can conclude that any NE of the regularized game
is an RNE of the robust game.
\end{proof}

\subsection{Proof of Theorem \ref{thm:example_matrix}} \label{sec:proof_thm_example_matrix}

\begin{proof}
For ball constrained uncertainty set $\rset_{i}:=\left\{ R_{i}\in\R^{\cA_{i}\times\cA_{-i}}:\norm{R_{i}}_{q^{*}\rightarrow p}\leq\alpha_{i}\right\} $,
we have 
\begin{align*}
\sigma_{\rset_{i}}(-\pi_{i}\bpi_{-i}^{\top}) & =\sup_{R_{i}\in\rset_{i}}\dotp{R_{i},-\pi_{i}\bpi_{-i}^{\top}}\\
 & =\sup_{R_{i}:\norm{R_{i}}_{q^{*}\rightarrow p}\leq\alpha_{i}}\dotp{R_{i},-\pi_{i}\bpi_{-i}^{\top}}\\
 & =\alpha_{i}\sup_{R_{i}:\norm{R_{i}}_{q^{*}\rightarrow p}\leq1}\dotp{R_{i},-\pi_{i}\bpi_{-i}^{\top}}\\
 & =\alpha_{i}\norm{-\pi_{i}}_{p}\norm{\bpi_{-i}}_{q}\\
 & =\Omega_{i}(\bpi),
\end{align*}
where the second to last equality follows from Lemma \ref{lem:dual_norm_matrix_norm}
on the dual norm of matrix operator norm. The equivalence between
the robust game and the regularized games immediately follow from
Theorem \ref{thm:equiv_reward_matrix}.

For the $(s,a)$-rectangular policy-dependent uncertainty set, let $\rset_{i}({\bpi}):=\times_{\ba\in\cA}\rset_{i,\ba}({\bpi})$.
We have
\begin{align*}
\sigma_{\rset_{i}^{\bpi}}(-\pi_{i}\bpi_{-i}^{\top}) & =\sup_{R_{i}\in\rset_{i}({\bpi})}\dotp{R_{i},-\pi_{i}\bpi_{-i}^{\top}}\\
 & =-\sum_{a_{i}}\sum_{\ba_{-i}}\left[-\tau_{i}\omega_{i}(\pi_{i}(a_{i}))-g_{i}(\bpi_{-i}(\ba_{-i}))\right]\pi_{i}(a_{i})\bpi_{-i}(\ba_{-i})\\
 & =\tau_{i}\sum_{a_{i}}\pi_{i}(a_{i})\omega_{i}(\pi_{i}(a_{i}))+\sum_{\ba_{-i}}g_{i}(\bpi_{-i}(\ba_{-i}))\bpi_{-i}(\ba_{-i}).
\end{align*}
By Proposition \ref{prop:reward_matrix}, the robust best response
policy $\rbr(\bpi_{-i})$ for player $i$ is given by the following
optimization problem: 
\begin{align*}
 & \quad\arg\sup_{\pi_{i}\in\Delta(\cA_{i})}V_{i}^{\pi_{i}\times\bpi_{-i}}\\
 & =\arg\sup_{\pi_{i}\in\Delta(\cA_{i})}\left\{ \pi_{i}^{\top}\rno_{i}\bpi_{-i}-\sigma_{\rset_{i}^{\bpi}}(-\pi_{i}\bpi_{-i}^{\top})\right\} \\
 & =\arg\sup_{\pi_{i}\in\Delta(\cA_{i})}\left\{ \pi_{i}^{\top}\rno_{i}\bpi_{-i}-\tau_{i}\sum_{a_{i}}\pi_{i}(a_{i})\omega_{i}(\pi_{i}(a_{i}))-\sum_{\ba_{-i}}g_{i}(\bpi_{-i}(\ba_{-i}))\bpi_{-i}(\ba_{-i})\right\} \\
 & \equiv\arg\sup_{\pi_{i}\in\Delta(\cA_{i})}\left\{ \pi_{i}^{\top}\rno_{i}\bpi_{-i}-\tau_{i}\sum_{a_{i}}\pi_{i}(a_{i})\omega_{i}(\pi_{i}(a_{i}))\right\} \\
 & =\arg\sup_{\pi_{i}\in\Delta(\cA_{i})}\left\{ \pi_{i}^{\top}\rno_{i}\bpi_{-i}-\Omega_{i}(\bpi)\right\} 
\end{align*}
which gives the best response policy w.r.t. $\bpi_{-i}$ for the regularized
game with regularizer $\Omega=\{\Omega_{i}\}.$ Therefore, solving
the RNE of the robust game is equivalent to solving the NE of the
regularized normal-form game.
\end{proof}

\subsection{Examples of Game Regularization} \label{sec:example_reward_matrix}

As pointed out in Section \ref{sec:reward_matrix},
we can apply Theorem \ref{thm:example_matrix} to popular regularization schemes
in games, including negative Shannon entropy regularization, KL divergence
regularization, and Tsallis entropy regularization. Here we provide
details of the $\{\omega_{i}\}$ functions for the reward function
uncertainty set and two more examples of regularizers studied in games. 
\begin{itemize}
\item The negative Shannon entropy: $\Omega_{i}(\bpi)=\sum_{a_{i}\in\cA_{i}}\pi_{i}(a_{i})\log\pi_{i}(a_{i}).$
Thus we can define $\omega_{i}(\pi_{i}(a_{i})):=\log\pi_{i}(a_{i})$.
\item The KL divergence regularizer: $\Omega_{i}(\bpi)=\sum_{a_{i}\in\cA_{i}}\pi_{i}(a_{i})\log\frac{\pi_{i}(a_{i})}{\mu_{i}(a_{i})}=d_{\KL}(\pi_{i},\mu_{i})$,
where $\mu_{i}\in\Delta(\cA_{i})$ is a given distribution. We can
let $\omega_{i}(\pi_{i}(a_{i})):=\log\frac{\pi_{i}(a_{i})}{\mu_{i}(a_{i})}.$
\item The Tsallis entropy regularizer $\Omega_i(\bpi) = \frac{1}{2} \sum_{a_i \in \cA_{i}} (\pi_i(a_i)^2 - \pi_i(a_i)).$
Thus we can define $\omega_i(\pi_i(a_i)):=\frac{1}{2}(\pi_i(a_i) - 1).$
\item The Renyi (negative) entropy regularizer: $\Omega_{i}(\bpi)=-(1-q)^{-1}\log\left(\sum_{a_{i}\in\cA_{i}}\pi_{i}(a_{i})^{q}\right)$
for a given $q\in(0,1),$ and thus we can let (with a slight abuse of notation) $\omega_{i}(\pi_{i}, a_{i}):=-(1-q)^{-1}\log\left(\sum_{a'_{i}\in\cA_{i}}\pi_{i}(a'_{i})^{q}\right).$

\end{itemize}

\subsection{Proof of Proposition \ref{prop:reward_Markov}}\label{sec:proof_prop_reward_Markov}
\begin{proof}
    From Proposition \ref{prop:Bellman} we have that
    $$ V_{i,h}^{\bpi}(s) = \inf_{\br_{h}\in\urset_{s,h}} \E_{\ba\sim\bpi_{h}(s)}\left[r_{i,h}(s,\ba)\right]+\E_{\ba\sim\bpi_{h}(s)}\left[[\pno_{h}V_{i,h+1}^{\bpi}](s,\ba)\right].$$
    We proceed with backward induction over $h$.
    In the base case at step $h=H$, there is no further future transitions, thus $\forall s\in\cS$ and each $i\in [N]$
    \begin{eqnarray*}
        V_{i,H}^{\bpi}(s) &=& \inf_{\br_{H}\in\urset_{s,H}} \E_{\ba\sim\bpi_{H}(s)}\left[r_{i,H}(s,\ba)\right] \\
        &=& \E_{\ba\sim\bpi_{H}(s)}\left[\inf_{\br_{H}\in\urset_{s,H}} r_{i,H}(s,\ba)\right] \\
        &=& \E_{\ba\sim\bpi_{H}(s)} \left[  \rno_{i,H}(s,\ba)\right] -\sigma_{\rset_{i,s,H}}\left(-\pi_{i,H}(s)\bpi_{-i,H}^{\top}(s)\right)
    \end{eqnarray*}
    where the last equality holds from Proposition \ref{prop:reward_matrix}. 
    
    Now suppose that \eqref{eq:reward_Markov} holds for all steps $t>h$. Then $\forall s\in\cS$ and each $i\in [N]$, we have
    \begin{eqnarray*}
        V_{i,h}^{\bpi}(s) &=& \inf_{\br_{h}\in\urset_{s,h}} \E_{\ba\sim\bpi_{h}(s)}\left[r_{i,h}(s,\ba)\right]+\E_{\ba\sim\bpi_{h}(s)}\left[[\pno_{h}V_{i,h+1}^{\bpi}](s,\ba)\right] \\
        &=& \E_{\ba\sim\bpi_{h}(s)}\left[\inf_{\br_{h}\in\urset_{s,h}} r_{i,h}(s,\ba)\right]+\E_{\ba\sim\bpi_{h}(s)}\left[[\pno_{h}V_{i,h+1}^{\bpi}](s,\ba)\right] \\
        &=& \E_{\ba\sim\bpi_{h}(s)}\left[\rno_{i,h}(s,\ba)+[\pno_{h}V_{i,h+1}^{\bpi}](s,\ba)\right] -\sigma_{\rset_{i,s,h}}\left(-\pi_{i,h}(s)\bpi_{-i,h}^{\top}(s)\right),
    \end{eqnarray*}
    which completes the proof for the induction step. 
\end{proof}

\subsection{Proof of Theorem \ref{thm:equiv_reward_Markov}}\label{sec:proof_thm_equiv_reward_Markov}
\begin{proof}
    For the robust MG, by Proposition \ref{prop:reward_Markov}, for each
    $\bpi\in\Pi$, the robust value functions of player $i$ satisfy:
    $$V_{i,h}^{\bpi}(s)= \E_{\ba\sim\bpi_{h}(s)}\left[\rno_{i,h}(s,\ba)+[\pno_{h}V_{i,h+1}^{\bpi}](s,\ba)\right] -\sigma_{\rset_{i,s,h}}\left(-\pi_{i,h}(s)\bpi_{-i,h}^{\top}(s)\right).$$
    
    We can use backward induction from $h=H$ to 1 to show that 
    \begin{eqnarray*}
        V_{i,h}^{\bpi}(s)=\VR_{i,h}^{\bpi}(s,\gno),\quad\forall s\in\cS,\forall h\in[H].
    \end{eqnarray*}
    
    At the base step $h=H$, from Proposition \ref{prop:reward_Markov}, we have for $\forall s \in \cS$ and each $i\in [N]$
    \begin{eqnarray*}
        V_{i,H}^{\bpi}(s) &=& \E_{\ba\sim\bpi_{H}(s)}\left[\rno_{i,H}(s,\ba)\right] -\sigma_{\rset_{i,s,H}}\left(-\pi_{i,H}(s)\bpi_{-i,H}^{\top}(s)\right) \\
        &=& \E_{\ba\sim\bpi_{H}(s)}\left[\rno_{i,H}(s,\ba)\right]-\Omega_{i,H}(\bpi,s) \\
        &=& \VR_{i,H}^{\bpi}(s,\gno)
    \end{eqnarray*}
    
    Now assume that for $t>h$ that $$V_{i,t}^{\bpi}(s)=\VR_{i,t}^{\bpi}(s,\gno),\quad\forall s\in\cS, \forall i\in [N]$$
    and that $\bpine=(\pine_{1},\dots,\pine_{N})$ is an NE for the regularized game at steps $t>h$. 
    Then $\forall s \in \cS$ and each $i\in [N]$
    \begin{eqnarray*}
        V_{i,h}^{\bpi}(s) &=& \E_{\ba\sim\bpi_{h}(s)}\left[\rno_{i,h}(s,\ba)+[\pno_{h}V_{i,h+1}^{\bpi}](s,\ba)\right] -\sigma_{\rset_{i,s,h}}\left(-\pi_{i,h}(s)\bpi_{-i,h}^{\top}(s)\right) \\
        &=& \E_{\ba\sim\bpi_{h}(s)}\left[\rno_{i,h}(s,\ba)+[\pno_{h}\tilde{V}_{i,h+1}^{\bpi}](s,\ba)\right] -\Omega_{i,h}(\bpi,s) \\
        &=& \VR_{i,h}^{\bpi}(s,\gno)
    \end{eqnarray*}
    Thus $V_{i,h}^{\bpi}(s)=\VR_{i,h}^{\bpi}(s,\gno),\forall s\in\cS,\forall h\in[H],\forall i \in [N]$.
    
    Consider any RNE $\bpine$ of the robust MG. By definition, we have
    \[
    \pine_{i}\in\argmax_{\pi_{i}\in\Delta(\cA_{i})}V_{i,1}^{\pi_{i}\times\bpine_{-i}}(s_{1})=\argmax_{\pi_{i}\in\Delta(\cA_{i})}\VR_{i,1}^{\pi_{i}\times\bpine_{-i}}(s_{1},\gno),
    \]
    which implies that $\bpine$ is an NE of the regularized MG. Following a
    similar argument, we can conclude that any NE of the regularized MG
    is an RNE of the robust MG.
\end{proof}

\subsection{Examples of Regularized Markov Game and Equivalent Robust Markov Game} \label{sec:example_reward_Markov}

\begin{theorem}\label{thm:example_Markov}

Consider a regularized MG $\fG' = (\cS,\{\cA_{i}\}_{i\in[N]},\pno,\brno,H,\Omega)$ with regularizer functions $\Omega:=\left(\Omega_{i,h}\right)_{i\in[N],h\in[H]}.$
\begin{enumerate}
\item If $\Omega_{i,h}(s,\mu) = \alpha_{i,s,h} \norm{\mu_{i}}_p \norm{\mu_{-i}}_q$ is the $\ell_p$/$\ell_q$-norm regularizer for each $i\in[N],h \in [H], s \in \cS$ and $\forall \mu \in \Delta(\cA),$ then solving for MPNE of the regularized game $\fG'$ is equivalent to solving for MPRNE of the robust game $\fG$ with $s$-rectangular ball constrained reward uncertainty set $\uset^r = \brno+\times_{i,s,h}\rset_{i,s,h},$
$$\rset_{i,s,h}=\big\{ R_{i}\in\R^{\cA_{i}\times\cA_{-i}}:\norm{R_{i}}_{q^*\rightarrow p}\leq\alpha_{i,s,h}\big\},$$
where $q^*$ satisfies $\frac{1}{q^*}+\frac{1}{q}=1.$

\item If $\Omega_{i,h}$ is  decomposable with kernel $\omega$, i.e., $\Omega_{i,h}(s,\mu):=\tau_{i,s,h}\sum_{a_{i}\in \cA_i}\mu_{i}(a_i)\omega_{i,s,h}\big(\mu_{i}(a_{i})\big),$ $i\in[N],h \in [H], s \in \cS$, with $\tau_{i,s,h}\geq0$, and $\Omega_{i,h}(s,\mu)$ is convex in $\mu_i$ for each given $\mu_{-i}$. Then solving for NE of the regularized game $\fG'$ is equivalent to solving for RNE of robust game with $(s,a)$-rectangular policy-dependent uncertainty set $\uset({\bpi})=\brno+\times_{i,s,\ba,h}\rset_{i,s,\ba,h}({\bpi})$, where 
\begin{align*}
\rset_{i,s,\ba,h}({\bpi})=&\Big[\tau_{i,s,h}\omega_{i,s,h}\left(\pi_{i,h}(a_{i}|s)\right)+g_{i,s,h}\left(\bpi_{-i,h}(\ba_{-i}|s)\right), \\
&\;\;\wover_{i,s,h}\left(\pi_{i,h}(a_{i}|s)\right)+\gover_{i,s,h}(\bpi_{-i,h}(\ba_{-i}|s))\Big]\subset \R,
\end{align*}
with functions $\omega_{i,s,h},\wover_{i,s,h}: [0,1]\rightarrow\R$ and $g_{i,s,h},\gover_{i,s,h}: [0,1]\rightarrow\R$ are continuous.

\end{enumerate}

The proof follows easily from \cref{thm:example_matrix} using backward induction.

\end{theorem}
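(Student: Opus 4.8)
The plan is to reduce the horizon-$H$ problem to the matrix-game case of \cref{thm:example_matrix}, one stage at a time, and then propagate the equivalence across the horizon by backward induction, mimicking the proof of \cref{thm:equiv_reward_Markov}. Fix a product joint policy $\bpi\in\Pi$. Because the reward uncertainty is $s$-rectangular ($(s,a)$-rectangular in part~2) and confined to each pair $(s,h)$, the robust Bellman equation of \cref{prop:Bellman} shows that for every $s\in\cS$ and $h\in[H]$,
\[
V_{i,h}^{\bpi}(s)=\inf_{\br_{h}\in\urset_{s,h}}\E_{\ba\sim\bpi_{h}(s)}\!\big[r_{i,h}(s,\ba)\big]+\E_{\ba\sim\bpi_{h}(s)}\!\big[[\pno_{h}V_{i,h+1}^{\bpi}](s,\ba)\big],
\]
so the stage game at $(s,h)$ is precisely a robust matrix game with nominal payoff $\rno_{i,h}(s,\cdot)+[\pno_{h}V_{i,h+1}^{\bpi}](s,\cdot)$ and reward uncertainty set $\rset_{i,s,h}$, the continuation value entering only linearly. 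By \cref{prop:reward_Markov} this equals $\E_{\ba\sim\bpi_{h}(s)}[\rno_{i,h}(s,\ba)+[\pno_{h}V_{i,h+1}^{\bpi}](s,\ba)]-\sigma_{\rset_{i,s,h}}(-\pi_{i,h}(s)\bpi_{-i,h}^{\top}(s))$, so the only thing to do is to identify, for each $(i,s,h)$, the support-function correction with the stated regularizer---which is exactly \cref{thm:example_matrix} applied to a single stage.

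For part~1, I would fix $(i,s,h)$. Since $\rset_{i,s,h}=\{R_{i}\in\R^{\cA_{i}\times\cA_{-i}}:\norm{R_{i}}_{q^{*}\to p}\le\alpha_{i,s,h}\}$ is a scaled operator-norm ball, \cref{lem:dual_norm_matrix_norm} gives $\sigma_{\rset_{i,s,h}}(-\mu_{i}\mu_{-i}^{\top})=\alpha_{i,s,h}\norm{\mu_{i}}_{p}\norm{\mu_{-i}}_{q}=\Omega_{i,h}(s,\mu)$ for all $\mu\in\Delta(\cA)$, matching case~1 of \cref{thm:example_matrix}. This set is closed, convex, bounded and policy-independent, so \cref{assu:reward} holds and \cref{thm:equiv_reward_Markov} applies verbatim, yielding the claimed equivalence of MPNE and MPRNE.

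For part~2, I would fix $(i,s,h)$ and use that $\rset_{i,s,h}(\bpi)=\times_{\ba\in\cA}\rset_{i,s,\ba,h}(\bpi)$ is a product of intervals, so its support function separates over $\ba$. Evaluating at $-\pi_{i,h}(s)\bpi_{-i,h}^{\top}(s)$ and collecting terms---using $\sum_{a_{i}}\pi_{i,h}(a_{i}\mid s)=1$ and $\sum_{\ba_{-i}}\bpi_{-i,h}(\ba_{-i}\mid s)=1$---reproduces the matrix-game calculation in the proof of \cref{thm:example_matrix} (case~2) and gives
\[
\sigma_{\rset_{i,s,h}(\bpi)}\!\big(-\pi_{i,h}(s)\bpi_{-i,h}^{\top}(s)\big)=\tau_{i,s,h}\!\sum_{a_{i}}\pi_{i,h}(a_{i}\mid s)\,\omega_{i,s,h}\!\big(\pi_{i,h}(a_{i}\mid s)\big)+\!\sum_{\ba_{-i}}g_{i,s,h}\!\big(\bpi_{-i,h}(\ba_{-i}\mid s)\big)\bpi_{-i,h}(\ba_{-i}\mid s).
\]
The second summand depends only on $\bpi_{-i,h}(s)$, hence is irrelevant to $\argmax_{\mu_{i}\in\Delta(\cA_{i})}$ in player $i$'s best-response problem and may be dropped, leaving precisely $\Omega_{i,h}(s,\mu)=\tau_{i,s,h}\sum_{a_{i}}\mu_{i}(a_{i})\omega_{i,s,h}(\mu_{i}(a_{i}))$. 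Choosing the continuous kernels $\omega_{i,s,h},\wover_{i,s,h},g_{i,s,h},\gover_{i,s,h}$ so that $0\in\rset_{i,s,\ba,h}(\bpi)$ (hence $\brno\in\uset(\bpi)$) and so that \cref{assu:reward} holds, \cref{thm:Existence_RNE} guarantees an MPRNE and the per-stage best-response problems of the robust and regularized games coincide.

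Finally I would chain the stages by backward induction from $h=H$ down to $h=1$, exactly as in the proof of \cref{thm:equiv_reward_Markov}: the base case is the matrix-game equivalence of \cref{thm:example_matrix}, and for the inductive step, assuming $V_{i,h+1}^{\bpi}=\VR_{i,h+1}^{\bpi}$ for all $i$, the displayed stage identity together with the support-function computation gives $V_{i,h}^{\bpi}(s)=\E_{\ba\sim\bpi_{h}(s)}[\rno_{i,h}(s,\ba)+[\pno_{h}\VR_{i,h+1}^{\bpi}](s,\ba)]-\Omega_{i,h}(s,\bpi_{h}(s))=\VR_{i,h}^{\bpi}(s)$; since the per-stage best-response problems agree for every $(s,h)$, a joint policy is an MPRNE for $\fG$ iff it is an MPNE for $\fG'$. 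The main obstacle---the part hidden behind ``follows easily''---is the policy-dependent case~2: one must check that the $\bpi_{-i}$-dependent summand genuinely drops out of \emph{every} player's optimization at \emph{every} stage (it does, since player $i$ optimizes only over its own $\mu_{i}$) and that the policy-dependent $(s,a)$-rectangular family meets \cref{assu:reward} for the chosen kernels, so that both the existence argument of \cref{thm:Existence_RNE} and the induction of \cref{thm:equiv_reward_Markov} transfer without change.
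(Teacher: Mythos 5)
Your proposal is correct and follows exactly the route the paper intends: the paper's one-line proof ("follows easily from Theorem~\ref{thm:example_matrix} using backward induction") is precisely your argument of reducing each stage game at $(s,h)$ to the matrix-game case via Propositions~\ref{prop:Bellman} and~\ref{prop:reward_Markov}, identifying the support function with the stated regularizer as in Theorem~\ref{thm:example_matrix}, and chaining the stages by backward induction as in the proof of Theorem~\ref{thm:equiv_reward_Markov}. Your explicit treatment of the policy-dependent case (checking that the $\bpi_{-i}$-dependent summand drops out of each player's best response and that Assumption~\ref{assu:reward} holds) fills in details the paper leaves implicit, but introduces nothing different in substance.
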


\subsubsection{Examples of Markov Game Regularization} \label{sec:example_reward_matrix_markov}

We can apply Theorem \ref{thm:example_Markov} above to popular regularization schemes
in Markov games. Below we provide four examples with the corresponding  $\{\omega_{i}\}$ functions for the reward function
uncertainty set. 
\begin{itemize}
\item The negative Shannon entropy: $\Omega_{i,h}(s,\mu)=  \sum_{a_{i}\in\cA_{i}}\mu_{i}(a_{i})\log\mu_{i}(a_{i}).$
Thus we can define $\omega_{i,s,h}(\pi_{i,h}(a_{i}|s)):=\log\pi_{i,h}(a_{i}|s)$.
\item The KL divergence regularizer: $\Omega_{i,h}(s,\mu)=\sum_{a_{i}\in\cA_{i}}\mu_{i}(a_{i})\log\frac{\mu_{i}(a_{i})}{\nu_{i}(a_{i})}=d_{\KL}(\mu_{i},\nu_{i})$,
where $\nu_{i}\in\Delta(\cA_{i})$ is a given distribution. 
We can let $\omega_{i,s,h}(\pi_{i,h}(a_{i}|s)):=\log\frac{\pi_{i,h}(a_{i}|s)}{\nu_{i}(a_{i})}.$
\item The Tsallis entropy regularizer $\Omega_{i,h}(s,\mu) = \frac{1}{2} \sum_{a_i \in \cA_{i}} (\mu_i(a_i)^2 - \mu_i(a_i)).$
Thus we can define $\omega_{i,s,h}(\pi_{i,h}(a_i|s)):=\frac{1}{2}(\pi_{i,h}(a_i|s) - 1).$
\item The Renyi (negative) entropy regularizer: $\Omega_{i,h}(s,\mu)=-(1-q)^{-1}\log\left(\sum_{a_{i}\in\cA_{i}}\mu_{i}(a_{i})^{q}\right)$
for a given $q\in(0,1).$ Thus we can let (with a slight abuse of notation) $\omega_{i,s,h}(\pi_{i,h}(\cdot|s),a_{i}):=-(1-q)^{-1}\log\left(\sum_{a'_{i}\in\cA_{i}}\pi_{i,h}(a'_{i}|s)^{q}\right).$

\end{itemize}

\section{Analysis of Robust Zero-Sum Markov Games} \label{sec:proof_section_efficient}

\subsection{Proof of \cref{thm: zero-sum-hardness}} \label{sec:proof_zero-sum-hardness}
\begin{proof}
    We present a poly-time reduction from the problem of computing an NE for a general-sum game to the problem of computing an RNE for a two-player zero-sum robust matrix game with $(s,a)$-rectangular reward uncertainty. Since computing an NE of a general-sum game is PPAD-hard, it then follows that computing an RNE for the aforementioned class of robust matrix games is also PPAD-hard. Let $(A,B)$ be an arbitrary general sum matrix game. WLOG we can further assume that $A, B \leq 0$. To construct the robust matrix game instance, we first define $\overline{r} = - \frac{A + B}{2}$ and $\underline{r} = \frac{A+B}{2}$. Then, we map $(A,B)$ to the robust matrix game $\fG$ defined by $\rno = A - \underline{r}$, $\rset_1 = \{r \in \mathbb{R}^{n_1 \times n_2} \mid \underline{r} \leq r \leq \overline{r}\}$, and $\rset_2 = - \rset_1$. 
    
    To prove the reduction is correct, we show that $\pi$ is an NE for $(A,B)$ if and only if $\pi$ is an RNE for $\fG$. First, we observe that for player $1$,
    \begin{align*}
        -\sigma_{\rset_1}(-\pi_1 \pi_2^{\top}) &= -\sup_{\underline{r} \leq R \leq \overline{r}} \langle R, -\pi_1 \pi_2^{\top} \rangle \\
        &= -\sup_{\underline{r} \leq R \leq \overline{r}} -\pi_1^{\top} R \pi_2 \\
        &= \inf_{\underline{r} \leq R \leq \overline{r}} \pi_1^{\top} R \pi_2 \\
        &= \pi_1^{\top} \underline{r} \pi_2.
    \end{align*}
    Using \cref{prop:reward_matrix}, we see the robust suboptimality gap for player $1$ under $\pi$ is exactly,
    \begin{equation*}
        D_1(\pi) = \max_{\pi_1'\in\Delta(\cA_1)}\big\{ \pi_1'^\top (\rno + \underline{r})\pi_2\big\} -(\pi_1^\top (\rno + \underline{r})\pi_2).
    \end{equation*}
    Similarly, we observe that $-\sigma_{-\rset_1}(-\pi_1 \pi_2^{\top}) = \pi_1^{\top} (-\overline{r}) \pi_2$. Thus, the robust suboptimality gap for player $2$ under $\pi$ is exactly,
    \begin{equation*}
        D_2(\pi) = \max_{\pi_2'\in\Delta(\cA_2)}\big\{\pi_1^\top(-\rno -\overline{r})\pi_2'\big\} - (\pi_1^\top (-\rno-\overline{r})\pi_2).
    \end{equation*}

    Putting these together, the RNE gap for $\pi$ is,
   \begin{align*}
        \rgap(\pi) &= \max_{\pi_1'\in\Delta(\cA_1)}\big\{ \pi_1'^\top (\rno+\underline{r})\pi_2\big\} +\max_{\pi_2'\in\Delta(\cA_2)}\big\{ \pi_1^\top (-\rno-\overline{r})\pi_2'\big\} -\pi_1^\top (\rno+\underline{r})\pi_2 - \pi_1^\top (-\rno-\overline{r}) \pi_2 \\
        &= \max_{\pi_1'\in\Delta(\cA_1)}\big\{ \pi_1'^\top A\pi_2\big\} +\max_{\pi_2'\in\Delta(\cA_2)}\big\{ \pi_1^\top B\pi_2'\big\} -\pi_1^\top A \pi_2 - \pi_1^\top B \pi_2.
    \end{align*}
    Observe this last term is exactly the NE gap for $(A,B)$. Thus, minimizing the optimality gaps for both games is equivalent which implies the set of RNEs for $\fG$ is exactly the set of NEs for $(A,B)$. In particular, this means that $\pi$ is an NE for $(A,B)$ if and only if $\pi$ is an RNE for $\fG$. Thus, the reduction is correct.

    Lastly, we note that $\fG$ can easily be computed in linear time in the size of $(A, B)$ just by computing the average and difference of matrices. Thus, the reduction can be done in polynomial time.

    By simply defining $\rset_1(\ba) = [\underline{r}(\ba), \overline{r}(\ba)]$ and $\rset_2(\ba) = [-\overline{r}(\ba), -\underline{r}(\ba)]$ the same proof applies to $(s,a)$-rectangularity.
\end{proof}

\subsection{Proof of \cref{lem:efficient-decomp}}\label{sec:proof_efficient-decomp}

\cref{prop:reward_matrix} implies that,
the robust value for player $i$ satisfies,

\begin{equation*}
V_{i}^{\bpi}=\pi_{i}^{\top}\rno_{i}\bpi_{-i}-\sigma_{\rset_{i}}(-\pi_{i}\bpi_{-i}^{\top})
\end{equation*}

Now, suppose that the characteristic function could be decomposed into $\sigma_{\rset_{i}}(-\pi_{i}\bpi_{-i}^{\top}) = \Omega_{i,i}(\pi_i) + \Omega_{i,-i}(\bpi_{-i})$. Then the robust suboptimality gap for player $i$ takes the form,

\begin{align*}
    V_{i}^{\dagger,\bpi_{-i}}(s)-V_{i}^{\bpi}(s)
    &= \max_{\pi_i' \in\Delta(\cA_1)} \left\{ {\pi_{i}'}^{\top}\rno_{i}\bpi_{-i}-\sigma_{\rset_{i}}(-\pi_{i}'\bpi_{-i}^{\top})\right\} - \left( \pi_{i}^{\top}\rno_{i}\bpi_{-i}-\sigma_{\rset_{i}}(-\pi_{i}\bpi_{-i}^{\top}) \right) \\
    &= \max_{\pi_i' \in\Delta(\cA_1)} \left\{ {\pi_{i}'}^{\top}\rno_{i}\bpi_{-i}-\Omega_{i,i}(\pi_i') -\Omega_{i,-i}(\bpi_{-i}) \right\} - \left( \pi_{i}^{\top}\rno_{i}\bpi_{-i}-\Omega_{i,i}(\pi_i) - \Omega_{i,-i}(\bpi_{-i}) \right) \\
    &= \max_{\pi_i' \in\Delta(\cA_1)} \left\{ {\pi_{i}'}^{\top}\rno_{i}\bpi_{-i}-\Omega_{i,i}(\pi_i') \right\} - \left( \pi_{i}^{\top}\rno_{i}\bpi_{-i}-\Omega_{i,i}(\pi_i) \right).
\end{align*}

Next, we define $\Omega_i(\pi_i) := \Omega_{i,i}(\pi_i)$.
We see the RNE gap takes the form,

\begin{align*}
    \rgap &= \max_{\pi_1' \in\Delta(\cA_1)} \left\{ {\pi_{1}'}^{\top}\rno\pi_{2}-\Omega_{1}(\pi_1') \right\} - \left( \pi_{1}^{\top}\rno_{1}\pi_{2}-\Omega_{1}(\pi_1) \right) \\
    &+ \max_{\pi_2' \in\Delta(\cA_2)} \left\{ {\pi_{1}}^{\top}(-\rno)\pi_{2}'-\Omega_{2}(\pi_2') \right\} - \left( \pi_{1}^{\top}(-\rno)\pi_{2}-\Omega_{2}(\pi_2) \right) \\
    &= \max_{\pi_1' \in\Delta(\cA_1)} \left\{ {\pi_{1}'}^{\top}\rno\pi_{2}-\Omega_{1}(\pi_1') + \Omega_2(\pi_2) \right\} \\
    &- \min_{\pi_2' \in\Delta(\cA_2)} \left\{ {\pi_{1}}^{\top}\rno\pi_{2}' - \Omega_1(\pi_1) +\Omega_{2}(\pi_2') \right\}
\end{align*}

This exactly matches the RNEGap for the zero-zum regularized game with regularization functions $\Omega_1$ and $\Omega_2$. Thus, solving the original two-player zero-sum robust game is equivalent to solving the two-player zero-sum regularized game with regularization functions $\Omega_1$ and $\Omega_2$. Furthermore, since $\Omega_{1,1}$ and $\Omega_{2,2}$ are strongly convex by assumption, we know this regularized game can be solved in polynomial time~\cite{efficient-regularizer}.

\subsection{Proof of \cref{thm:efficient-mg}}\label{sec:proof_thm_efficient-mg}

\begin{proof}
    The proof is nearly immediate given the proof of \cref{thm:equiv_reward_Markov}, \cref{lem:efficient-decomp}, and the definition of regularized MGs. For completeness, we give a formal proof below.
    We follow the proof of \cref{thm:equiv_reward_Markov} and show the constructed regularized game is TPZS.
    We proceed by backward induction on $h$. For the base case, we consider $h = H$. Fix any policy $\bpi$. For any $s \in \cS$ and $i \in [N]$, we know by \eqref{eq:reward_Markov},
    \begin{equation*}
        V_{i,H}^{\bpi}(s) = \E_{\ba\sim\bpi_{H}(s)}\left[\rno_{i,H}(s,\ba)\right]-\sigma_{\rset_{i,s,H}}\left(-\pi_{i,H}(s)\bpi_{-i,H}^{\top}(s)\right).
    \end{equation*}
    Since $\sigma_{\rset_{i,s,H}}\left(-\pi_{i,H}(s)\bpi_{-i,H}^{\top}(s)\right) = \Omega_{i,i}^H(\pi_{i,H}(s)) + \Omega_{i,-i}^H(\bpi_{-i,H}^{\top}(s))$ by \cref{assu:efficient}, \cref{lem:efficient-decomp} implies an NE for the robust stage game is equivalent to an NE for the corresponding TPZS regularized game. 

    For the inductive step, suppose that $h < H$. Fix any policy $\bpi$. For any $s \in \cS$ and $i \in [N]$, again we know by \eqref{eq:reward_Markov},
    \begin{equation*}
        V_{i,H}^{\bpi}(s) = \E_{\ba\sim\bpi_{H}(s)}\left[\rno_{i,h}(s,\ba) + [\pno_{h}V_{i,h+1}^{\bpi}](s,\ba)\right] -\sigma_{\rset_{i,s,h}}\left(-\pi_{i,h}(s)\bpi_{-i,h}^{\top}(s)\right).
    \end{equation*}
    From the proof of \cref{thm:equiv_reward_Markov}, we know that $[\pno_{h}V_{i,h+1}^{\bpi}](s,\ba) = [\pno_{h}\tilde{V}_{i,h+1}^{\bpi}](s,\ba)$. Furthermore, the induction hypothesis implies the future game is TPZS and so admits a unique NE value $\tilde{V}_{i,h+1}^{\bpi}(s') = \tilde{V}^*_{i,h+1}(s')$ for each $s' \in \cS$. Define $r'_{i,h}(s,\ba) := \rno_{i,h}(s,\ba) + [\pno_{h}\tilde{V}_{i,h+1}^{*}](s,\ba)$. Since $\sigma_{\rset_{i,s,h}}\left(-\pi_{i,h}(s)\bpi_{-i,h}^{\top}(s)\right) = \Omega_{i,i}^h(\pi_{i,h}(s)) + \Omega_{i,-i}^h(\bpi_{-i,h}^{\top}(s))$ by \cref{assu:efficient}, we see that solutions to the robust stage game correspond to a TPZS regularized game by \cref{lem:efficient-decomp}. Furthermore, this regularized game is exactly the regularized stage game of $\fG'$. This completes the proof.
\end{proof}

\subsection{Proof of \cref{thm:efficient-examples}}\label{sec:proof_thm_efficient-examples}

\begin{proof}
    The two cases of efficiently decomposable structures are
    \begin{enumerate}
        \item For the ball constrained uncertainty case $\rset_{1,s,h}=\big\{ R_{1}\in\R^{\cA_{1}\times\cA_{2}}:\norm{R_{1}}_{\infty\to p}\leq\alpha_{1,s,h}\big\}$ and $\rset_{2,s,h}=\big\{ R_{2}\in\R^{\cA_{1}\times\cA_{2}}:\norm{R_{2}^{\top}}_{\infty\to p}\leq\alpha_{2,s,h}\big\}$.

        We can show the decomposability for each player. 
        For player 1, we have
        \begin{align*}
            \sigma_{\rset_{1,s,h}}(-\pi_{1,h}(s)\pi_{2,h}^{\top}(s)) & =\sup_{R_{1}\in\rset_{1,s,h}} \dotp{R_{1},-\pi_{1,h}(s)\pi_{2,h}^{\top}(s)} \\
            & =\sup_{R_{1}:\norm{R_{1}}_{\infty\to p}\leq\alpha_{1,s,h}} \dotp{R_{1},-\pi_{1,h}(s)\pi_{2,h}^{\top}(s)} \\
            & =\alpha_{1,s,h} \sup_{R_{1}:\norm{R_{1}}_{\infty\to p}\leq 1} \dotp{R_{1},-\pi_{1,h}(s)\pi_{2,h}^{\top}(s)} \\
            & =\alpha_{1,s,h} \norm{-\pi_{1,h}(s)}_{p}
        \end{align*}
    
        Similarly, for player 2, we have
        \begin{align*}
            \sigma_{\rset_{2,s,h}}(-\pi_{1,h}(s)\pi_{2,h}^{\top}(s)) & =\sup_{R_{2}\in\rset_{2,s,h}} \dotp{R_{2},-\pi_{1,h}(s)\pi_{2,h}^{\top}(s)} \\
            & =\sup_{R_{2}:\norm{R_{2}^{\top}}_{\infty\to p}\leq\alpha_{2,s,h}} \dotp{R_{2},-\pi_{1,h}(s)\pi_{2,h}^{\top}(s)} \\
            & =\alpha_{2,s,h} \sup_{R_{2}:\norm{R_{2}^{\top}}_{\infty\to p}\leq 1} \dotp{R_{2},-\pi_{1,h}(s)\pi_{2,h}^{\top}(s)} \\
            & =\alpha_{2,s,h} \norm{-\pi_{2,h}(s)}_{p}
        \end{align*}
        where the last equality for both player views follows from Lemma \ref{lem:dual_norm_matrix_norm} on the dual norm of matrix operator norm. Thus we obtain decombosable $\Omega_{i,h}(s,\pi_{i,h}(s)) = \alpha_{i,s,h} \norm{\pi_{i,h}(s)}_p$.

        \item For the decomposable kernel case, 
        \begin{align*}
            \rset_{i,s,\ba,h}({\bpi}) =& \Big[\tau_{i,s,h}\omega_{i,s,h}\left(\pi_{i,h}(a_{i}|s)\right)+g_{i,s,h}\left(\bpi_{-i,h}(\ba_{-i}|s)\right), \wover_{i,s,h}\left(\pi_{i,h}(a_{i}|s)\right)+\gover_{i,s,h}(\bpi_{-i,h}(\ba_{-i}|s))\Big]\subset \R,
        \end{align*}
        with parameters $\tau_{i,s,h}\geq0$, functions $\omega_{i,s,h},\wover_{i,s,h}: [0,1]\rightarrow\R$ and $g_{i,s,h},\gover_{i,s,h}: [0,1]\rightarrow\R$.

        This is the same structure as in Theorem \ref{thm:example_Markov}, and thus we know the decomposable regularizer is $\Omega_{i,h}(s,\bpi)=\tau_{i,s,h}\sum_{a_{i}}\pi_{i,h}(a_i|s)\omega_{i,s,h}(\pi_{i,h}(a_{i}|s))$.
        
    \end{enumerate}

    With each case of $\Omega_{i,h}$ shown to be decomposable, we can proceed as in Theorem \ref{thm:example_Markov}. The MPRNE of the RMG, and MPNE of the corresponding regularized MG, are found by solving for the respective NE for each $s \in\cS$ at each $h \in [H]$. For either case, for each $s\in\cS$, we have from Theorem \ref{thm:example_matrix} that at each step $h = H$ we can efficiently solve the RNE through the corresponding regularized NE. Then we can proceed via backward induction to solve the NE for all $h$, as done in Theorem \ref{thm:equiv_reward_Markov}.
\end{proof}

\section{Analysis of Markov Games with Transition Uncertainty} \label{sec:proof_section_transition}

\subsection{Proof of Proposition \ref{prop:transition_Markov}} \label{sec:proof_prop_transition_Markov}

\begin{proof}
We will prove the proposition via an induction from step $h=H$ to
1. 

For the base case $h=H$, as there is no transition, \eqref{eq:transition_V_RBE}
holds by the definition of the robust value function. 

Now suppose that \eqref{eq:transition_V_RBE} holds for all $t\geq h+1$.
By Proposition \ref{prop:Bellman}, we have 
\begin{align*}
V_{i,h}^{\bpi}(s) & =\E_{\ba\sim\bpi_{h}(s)}\left[\rno_{i,h}(s,\ba)\right]+\inf_{P_{h}\in\pset_{s,h}}\E_{\ba\sim\bpi_{h}(s)}\left[[P_{h}V_{i,h+1}^{\bpi}](s,\ba)\right]\\
 & =\E_{\ba\sim\bpi_{h}(s)}\left[\rno_{i,h}(s,\ba)\right]+\inf_{P_{h}\in\pset_{s,h}}\E_{\ba\sim\bpi_{h}(s),s'\sim P_{h}(\cdot|s,\ba)}\left[V_{i,h+1}^{\bpi}(s')\right]\\
 & =\E_{\ba\sim\bpi_{h}(s)}\left[\rno_{i,h}(s,\ba)\right]+\inf_{P_{h}\in\pset_{s,h}}\dotp{P_{h},V_{i,h+1}^{\bpi}\bpi_{h}(s)^{\top}}\\
 & =\E_{\ba\sim\bpi_{h}(s)}\left[\rno_{i,h}(s,\ba)\right]-\sup_{P_{h}\in\pset_{s,h}}\dotp{P_{h},-V_{i,h+1}^{\bpi}\bpi_{h}(s)^{\top}}\\
 & =\E_{\ba\sim\bpi_{h}(s)}\left[\rno_{i,h}(s,\ba)\right]-\sigma_{\pset_{s,h}}\left(-V_{i,h+1}^{\bpi}\bpi_{h}(s)^{\top}\right).
\end{align*}
\end{proof}

\subsection{Proof of Theorem \ref{thm:equiv_transition_Markov}}\label{sec:proof_thm_equiv_transition}

\begin{proof}
For the robust MG, by Proposition \ref{prop:transition_Markov}, for each
$\bpi\in\Pi$, the robust value functions of player $i$ satisfy:
\[
V_{i,h}^{\bpi}(s)=\E_{\ba\sim\bpi_{h}(s)}\left[\rno_{i,h}(s,\ba)\right]-\sigma_{\pset_{s,h}}\left(-V_{i,h+1}^{\bpi}\bpi_{h}(s)^{\top}\right),\quad\forall s\in\cS,\forall h\in[H].
\]
We can use backward induction from $h=H$ to 1 to show that 
\[
V_{i,h}^{\bpi}(s)=\VRT_{i,h}^{\bpi}(s,\gno),\quad\forall s\in\cS,\forall h\in[H].
\]

Consider any RNE $\bpine$ of the robust MG. By definition, we have
\[
\pine_{i}\in\argmax_{\pi_{i}\in\Delta(\cA_{i})}V_{i,1}^{\pi_{i}\times\bpine_{-i}}(s_{1})=\argmax_{\pi_{i}\in\Delta(\cA_{i})}\VRT_{i,1}^{\pi_{i}\times\bpine_{-i}}(s_{1},\gno),
\]
which implies that $\bpine$ is an NE of the regularized MG. Following
similar argument, we can conclude that any NE of the regularized MG
is an RNE of the robust MG.
\end{proof}

\subsection{Examples of Transition Uncertainty Sets} \label{sec:examples_transition}

\subsubsection{Proof of Corollary \ref{cor:ball_transition}}\label{sec:proof_cor_ball_transiiton}

\begin{proof}
For ball constrained uncertainty set $\pset_{s,h}=\Big\{ P\in\R^{\cS\times\cA}:\norm{P-\pno_{s,h}}_{q^{*}\rightarrow p}\leq\beta_{s,h}\Big\},$
we have 
\begin{align*}
\sigma_{\pset_{s,h}}\left(-v\mu^{\top}\right) & =\sup_{P_{h}\in\pset_{s,h}}\dotp{P_{h},-v\mu^{\top}}\\
 & =\dotp{\pno_{h},-v\mu^{\top}}+\sup_{\norm{P'_{h}}_{p\rightarrow q^{*}}\leq\beta_{s,h}}\dotp{P'_{h},-v\mu^{\top}}\\
 & =-\E_{\ba\sim\mu}\left[[\pno_{h}v](s,\ba)\right]+\beta_{s,h}\sup_{\norm{P'_{h}}_{p\rightarrow q^{*}}\leq1}\dotp{P'_{h},-v\mu^{\top}}\\
 & \overset{(I)}{=}-\E_{\ba\sim\mu}\left[[\pno_{h}v](s,\ba)\right]+\beta_{s,h}\left\Vert -v\right\Vert _{p}\left\Vert \mu^{\top}\right\Vert _{q}\\
 & =\Omega_{s,h}\left(\pno_{h},-v,\mu\right)
\end{align*}
where equality $(I)$ follows from Lemma \ref{lem:dual_norm_matrix_norm}
on the dual norm of matrix operator norm. The equivalence between
the robust game and the regularized game immediately follows from
Theorem \ref{thm:equiv_transition_Markov}. 
\end{proof}

\subsubsection{Proof of Corollary \ref{cor:SA_transition}}\label{sec:proof_cor_SA_transition}

\begin{proof}
By Theorem \ref{thm:equiv_transition_Markov}, it is sufficient to
verify that the support function $\sigma_{\pset_{s,h}}(\cdot)$ is
equivalent to the specified regularizer function $\Omega_{s,h}$ for
each uncertainty set. When the transition uncertainty set is $(s,a)$-rectangular
with the form $\pset=\times_{(s,\ba,h)\in\cS\times\cA\times [H]}\pset_{s,\ba,h}$,
for each $s\in\cS,h\in[H]$, the support function $\sigma_{\pset_{s,h}}$
can be simplified. In particular, for each $v\in\R^{\cS},\mu\in\Delta(\cA)$,
we have 
\begin{align}
\sigma_{\pset_{s,h}}(-v\mu^{\top}) & =\sup_{P\in\pset_{s,h}}\dotp{P,-v\mu^{\top}}\nonumber \\
 & =\sup_{P=\{P_{\ba'}\}_{\ba'\in\cA}:P_{\ba'}\in\pset_{s,\ba',h}}\E_{\ba\sim\mu}\left[\dotp{P_{\ba},-v}\right]\nonumber \\
 & \overset{(I)}{=}\E_{\ba\sim\mu}\left[\sup_{P_{\ba}\in\pset_{s,\ba,h}}\dotp{P_{\ba},-v}\right]\nonumber \\
 & \overset{(II)}{=}\E_{\ba\sim\mu}\left[\sigma_{\pset_{s,\ba,h}}(-v)\right],\label{eq:support_product}
\end{align}
where $(I)$ holds due to the independence of uncertainty sets $\{\pset_{s,\ba,h}\}_{\ba\in\cA}$
and $(II)$ follows from the definition of support function. 
\end{proof}

\subsubsection{Examples of $\cS\times\cA$-rectangular Transition Uncertainty Sets}\label{sec:example_SA_transition}

We first formally introduce various distance metric for distributions.
\begin{enumerate}
\item Total variation distance: for any $\eta,\nu\in\Delta(\cS),$ $d_{\TV}(\eta,\nu)=\frac{1}{2}\norm{\eta-\nu}_{1}=\frac{1}{2}\sum_{s}|\eta(s)-\nu(s)|.$
\item Kullback-Leibler (KL) distance: for any $\eta,\nu\in\Delta(\cS),$
$d_{\KL}(\eta,\nu)=\sum_{s\in\cS}\eta(s)\log\frac{\eta(s)}{\nu(s)}$.
\item Chi-square distance: for any $\eta,\nu\in\Delta(\cS),$ $d_{\chi^2}(\eta,\nu)=\sum_{s}\frac{\left(\eta(s)-\nu(s)\right)^{2}}{\nu(s)}.$
\item Wasserstein distance: consider $p$-Wasserstein metric w.r.t. to a
metric $\rho(\cdot)$, i.e., for any $\eta,\nu\in\Delta(\cS),$ $d_{W_{p}}(\eta,\nu):=\left(\inf_{\gamma\sim\Gamma(\eta,\nu)}\E_{(x,y)\sim\gamma}\left[\rho(x,y)^{p}\right]\right)^{1/p},$
where $\Gamma(\eta,\nu)$ denotes the set of all couplings of $\eta$
and $\nu$. 
\end{enumerate}

\begin{corollary} \label{cor:example_SA_transition}
Consider a robust MG $(\cS,\{\cA_{i}\}_{i\in[N]},\pno,\brno,H,\uset)$
with uncertainty set $\uset=\pset\times\{\brno\}$, where $\pset$
is $(s,a)$-rectangular.
\begin{enumerate}
\item If $\pset$ is a KL uncertainty set given by $\pset=\times_{(s,\ba,h)\in\cS\times\cA\times [H]}\pset_{s,\ba,h}^{\KL}$
with $\pset_{s,\ba,h}^{\KL}=\big\{ P\in\Delta(\cS):d_{\KL}(P,\pno_{s,\ba,h})\leq\beta_{s,\ba,h}\big\},$
then the equivalent policy-value regularized MG $(\cS,\{\cA_{i}\}_{i\in[N]},\pno,\brno,H,\Omega)$
is associated with the following convex regularizer: $\forall s\in\cS,v\in\R^{\cS},\mu\in\Delta(\cA),$
\begin{equation}
\Omega_{s,h}\left(\pno_{h},-v,\mu\right)=\E_{\ba\sim\mu}\left[\min_{\lambda\geq0}\bigg\{\beta_{s,\ba,h}\lambda+\lambda\log\Big((\pno_{s,\ba,h})^{\top}\exp\big(-\frac{v}{\lambda}\big)\Big)\bigg\}\right].\label{eq:regularizer_P_KL}
\end{equation}
\item If $\pset$ is a Chi-square uncertainty set given by $\pset=\times_{(s,\ba,h)\in\cS\times\cA\times [H]}\pset_{s,\ba,h}^{\chi^{2}}$
with $\pset_{s,\ba,h}^{\chi^{2}}=\big\{ P\in\Delta(\cS):d_{\chi^{2}}(P,\pno_{s,\ba,h})\leq\beta_{s,\ba,h}\big\},$
then the equivalent policy-value regularized MG $(\cS,\{\cA_{i}\}_{i\in[N]},\pno,\brno,H,\Omega)$
is associated with the following convex regularizer: $\forall s\in\cS,v\in\R^{\cS},\mu\in\Delta(\cA),$
\begin{equation}
\Omega_{s,h}\left(\pno_{h},-v,\mu\right)=\E_{\ba\sim\mu}\left[\max_{u\geq\boldsymbol{0}}\Big\{(\pno_{s,\ba,h})^{\top}(v-u)-\sqrt{\beta_{s,\ba,h}\Var_{\pno_{s,\ba,h}}(v-u)}\Big\}\right];\label{eq:regularizer_P_Chi}
\end{equation}
\item If $\pset$ is a total variation uncertainty set given by $\pset=\times_{(s,\ba,h)\in\cS\times\cA\times [H]}\pset_{s,\ba,h}^{\TV}$
with $\pset_{s,\ba,h}^{\TV}=\big\{ P\in\Delta(\cS):d_{\TV}(P,\pno_{s,\ba,h})\leq\beta_{s,\ba,h}\},$
then the equivalent policy-value regularized MG $(\cS,\{\cA_{i}\}_{i\in[N]},\pno,\brno,H,\Omega)$
is associated with the following convex regularizer: $\forall s\in\cS,v\in\R^{\cS},\mu\in\Delta(\cA),$
\begin{equation}
\Omega_{s,h}\left(\pno_{h},-v,\mu\right)=\E_{\ba\sim\mu}\left[-[\pno_{h}v](s,\ba)+\frac{\beta_{s,\ba,h}}{2}\min_{u\geq\boldsymbol{0}}\bigg\{\max_{s'}\big(v(s')-u(s')\big)-\max_{s}\big(v(s)-u(s)\big)\bigg\}\right];\label{eq:regularizer_P_TV}
\end{equation}
\item Wasserstein uncertainty set: If the uncertainty set $\pset$ is $\cS\times\cA$-rectangular
given by $\pset=\times_{(s,\ba,h)\in\cS\times\cA\times [H]}\pset_{s,\ba,h}^{W_{p}}$
with $\pset_{s,\ba,h}^{W_{p}}=\big\{ P\in\Delta(\cS):d_{W_{p}}(P,\pno_{s,\ba,h})\leq\beta_{s,\ba,h}\big\},$
\begin{equation}
\Omega_{s,h}\left(\pno_{h},-v,\mu\right)=\E_{\ba\sim\mu}\left[\inf_{\lambda\geq0}\bigg\{\lambda\beta_{s,\ba,h}+\E_{\tilde{s}\sim\pno_{s,\ba,h}}\bigg[\sup_{s'\in\cS}\big\{-v(s')-\lambda\rho(\tilde{s},s')\big\}\bigg]\bigg\}\right].\label{eq:regularizer_P_Wasser}
\end{equation}
\end{enumerate}
\end{corollary}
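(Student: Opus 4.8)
The plan is to reduce each of the four regularizer formulas to a single-distribution distributionally-robust expectation, and then obtain the stated closed forms by standard Lagrangian duality. The first step is to invoke the two reductions already in hand. By Theorem~\ref{thm:equiv_transition_Markov}, the RMG with $s$-rectangular transition uncertainty $\uset=\pset\times\{\brno\}$ is equivalent to the policy-value regularized MG whose regularizer is $\Omega_{s,h}(\pno_h,-v,\mu)=\sigma_{\pset_{s,h}}(-v\mu^{\top})$. Since in each case of the corollary $\pset$ is $(s,a)$-rectangular, Corollary~\ref{cor:SA_transition} further gives $\sigma_{\pset_{s,h}}(-v\mu^{\top})=\E_{\ba\sim\mu}\bigl[\sigma_{\pset_{s,\ba,h}}(-v)\bigr]$. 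Hence it suffices to show that for each distance metric $d$, the per-$(s,\ba,h)$ support function
$\sigma_{\pset_{s,\ba,h}}(-v)=\sup\bigl\{\dotp{P,-v}:P\in\Delta(\cS),\ d(P,\pno_{s,\ba,h})\le\beta_{s,\ba,h}\bigr\}$
equals the inner optimization expression appearing inside the $\E_{\ba\sim\mu}[\cdot]$ in \eqref{eq:regularizer_P_KL}, \eqref{eq:regularizer_P_Chi}, \eqref{eq:regularizer_P_TV}, and \eqref{eq:regularizer_P_Wasser}, respectively; plugging these back into $\E_{\ba\sim\mu}[\cdot]$ then yields the stated formulas.

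The second step is to compute this support function for each metric by Lagrangian/Fenchel duality (Theorem~\ref{thm:Fenchel_duality}), introducing a multiplier $\lambda\ge0$ for the divergence constraint $d(P,\pno_{s,\ba,h})\le\beta_{s,\ba,h}$ and, where the simplex constraint $P\ge\boldsymbol{0}$ may be active, a slack multiplier $u\ge\boldsymbol{0}$. For the KL ball, finiteness of the divergence already forces absolute continuity of $P$ with respect to $\pno_{s,\ba,h}$, so no extra slack is needed and dualizing the divergence constraint gives the one-dimensional convex program in $\lambda$ of \eqref{eq:regularizer_P_KL}. For the $\chi^2$ ball, dualizing both the divergence constraint and the nonnegativity constraints and then minimizing explicitly over $\lambda$ produces the $\sqrt{\beta_{s,\ba,h}\,\Var_{\pno_{s,\ba,h}}(\cdot)}$ term of \eqref{eq:regularizer_P_Chi}. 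For the total-variation ball, dualizing the $\ell_1$ constraint together with nonnegativity yields the piecewise-linear minimization over $u\ge\boldsymbol{0}$ of \eqref{eq:regularizer_P_TV}. For the Wasserstein ball, Kantorovich duality on the transport polytope, combined with a multiplier $\lambda\ge0$ on the Wasserstein-ball constraint, gives \eqref{eq:regularizer_P_Wasser}. In every case strong duality holds because $\beta_{s,\ba,h}>0$ makes the feasible set have nonempty relative interior, so Slater's condition (equivalently the regularity condition $0\in\core(\dom(\cdot)-\dots)$ of Theorem~\ref{thm:Fenchel_duality}) is satisfied. Convexity of $\Omega_{s,h}$ in $\mu$ is then immediate, since $\Omega_{s,h}(\pno_h,-v,\mu)=\E_{\ba\sim\mu}[\,\cdot\,]$ is linear in $\mu$; and efficient computability follows because each inner program (over $\lambda$, or over $u\in\R^{\cS}_{\ge0}$) is a convex program of dimension $1$ or $|\cS|$.

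The main obstacle is bookkeeping around the probability-simplex constraint $P\in\Delta(\cS)$. The textbook distributionally-robust dualities are cleanest over signed measures or when the nonnegativity constraint is slack; on the simplex that constraint can bind, which is exactly why the slack variable $u\ge\boldsymbol{0}$ appears in the $\chi^2$ and TV formulas. The delicate part is therefore tracking the KKT/complementary-slackness conditions for the active components of $P\ge\boldsymbol{0}$ and verifying that the resulting dual collapses to precisely the stated expression; the divergence-multiplier part (the $\lambda$ in each formula) is comparatively routine, and the KL and Wasserstein cases are cleaner since absolute continuity (for KL) and the coupling polytope (for Wasserstein) absorb the simplex constraint automatically. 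Since all four computations are standard in the distributionally-robust optimization and robust-MDP literature, I would carry out the $\chi^2$ (or TV) case in full detail as the representative example and note that the remaining cases follow the same dualization template.
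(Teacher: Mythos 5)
Your proposal follows the paper's proof exactly: reduce via Theorem~\ref{thm:equiv_transition_Markov} and Corollary~\ref{cor:SA_transition} to the per-$(s,\ba,h)$ support function $\sigma_{\pset_{s,\ba,h}}(-v)$, then obtain each closed form by strong duality for the corresponding divergence ball. The only difference is that the paper cites existing duality lemmas from the robust-MDP and distributionally-robust-optimization literature for each of the four cases rather than re-deriving them, which is exactly the role your Lagrangian computations would play.
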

\begin{proof} By Corollary \ref{cor:SA_transition}, it is sufficient to analyze the  support function $\sigma_{\pset_{s,\ba,h}}$ for each
case.
\begin{enumerate}
\item KL uncertainty set: From the strong duality result on KL constrained
set \citep[Lemma 4.1]{RMDP-RDP}
the optimization problem in the support function $\sigma_{\pset_{s,\ba,h}}(-v)$
is equivalent to 
\begin{equation}
\min_{\lambda\geq0}\bigg\{\beta_{s,\ba,h}\lambda+\lambda\log\Big((\pno_{s,\ba,h})^{\top}\exp\big(-\frac{v}{\lambda}\big)\Big)\bigg\},\label{eq:KL_dual}
\end{equation}
which is convex in $\lambda$ and can be solved efficiently. By \eqref{eq:support_product},
we have $\sigma_{\pset_{s,h}}(-v\mu^{\top})=\Omega_{s,h}\left(\pno_{h},-v,\mu\right)$
with $\Omega_{s,h}$ defined in \eqref{eq:regularizer_P_KL}.
\item Chi-square uncertainty set: From the strong duality result on $\chi^{2}$-distance
constrained set \citep[Lemma 4.2]{RMDP-RDP}, the optimization
problem in the support function $\sigma_{\pset_{s,\ba,h}}(-v)$ is
equivalent to 
\begin{equation}
\min_{u\geq\boldsymbol{0}}\left\{ -(\pno_{s,\ba,h})^{\top}(v-u)+\sqrt{\beta_{s,\ba,h}\Var_{\pno_{s,\ba,h}}\left(v-u\right)}\right\} ,\label{eq:Chi_dual}
\end{equation}
where $\Var_{\pno_{s,\ba,h}}\left(v-u\right)=(\pno_{s,\ba,h})^{\top}(v-u)^{2}-\left((\pno_{s,\ba,h})^{\top}(v-\mu)\right)^{2}$
and the convex optimization problem \eqref{eq:Chi_dual} can be solved
with complexity $\bigO(|\cS|\log|\cS|)$. By \eqref{eq:support_product},
we have $\sigma_{\pset_{s,h}}(-v\mu^{\top})=\Omega_{s,h}\left(\pno_{h},-v,\mu\right)$
with $\Omega_{s,h}$ defined in \eqref{eq:regularizer_P_Chi}.
\item TV uncertainty set: From the strong duality result on TV-constrained
set \citep[Lemma 4.3]{RMDP-RDP}, the optimization problem in
the support function $\sigma_{\pset_{s,\ba,h}}(-v)$ is equivalent
to 
\begin{equation}
-(\pno_{s,\ba,h})^{\top}v+\frac{\beta_{s,\ba,h}}{2}\min_{u\geq\boldsymbol{0}}\left\{ \max_{s'}\left(v(s')-u(s')\right)-\max_{s'}\left(v(s')-u(s')\right)\right\} ,\label{eq:TV_dual}
\end{equation}
which can be solved with complexity $\bigO(|\cS|\log|\cS|)$. By \eqref{eq:support_product},
we have $\sigma_{\pset_{s,h}}(-v\mu^{\top})=\Omega_{s,h}\left(\pno_{h},-v,\mu\right)$
with $\Omega_{s,h}$ defined in \eqref{eq:regularizer_P_TV}.
\item Wasserstein uncertainty set: From the strong duality result \citep[Theorem 1]{blanchet2019quantifying}, it holds that 
\[
\sigma_{\pset_{s,\ba,h}}(-v)=\sup_{P\in\pset_{s,\ba,h}^{W_{p}}}\E_{P}[-v]=\inf_{\lambda\geq0}\left\{ \lambda\beta_{s,\ba,h}+\E_{\tilde{s}\sim\pno_{s,\ba,h}}\left[\sup_{s'\in\cS}\left\{ -v(s')-\lambda\rho(\tilde{s},s')\right\} \right]\right\} ,
\]
which yields the equivalent regularizer function $\Omega_{s,h}$ defined
in \eqref{eq:regularizer_P_TV}.
\end{enumerate}
\end{proof}

\subsection{Proof of \cref{thm:transition-hardness}}\label{sec:proof_thm_transition_hardness}
\begin{proof}

Given a general-sum game $(A, B)$ we construct a transition-uncertain RMG that recovers the same properties as the reward-uncertain RMG from the proof of \cref{thm: zero-sum-hardness}. Then, the proof of hardness follows exactly as before. 

We define $\underline{r}$ and $\overline{r}$ exactly as before. Then, we define the nominal reward model by
$\rno_{1}(s_1,\ba) = A(\ba) - \underline{r}(\ba)$, $\rno_{2}(s_1,\ba) = \underline{r}(\ba)$, and $\rno_{2}(s_2, \ba) = \overline{r}(\ba)$. Here, we use the subscript to denote the time step, not the player number since the game is zero-sum. Lastly, we define $\pset_{s, a} = \Delta(\cS)$ to allow all possible transitions. Note, that we only need to define the transition uncertainty for the first step since $H = 2$. We assume the start state is $s_1$ so let $\rno_{2}(s_1,\ba)$ be arbitrary.

It is then clear that the worst model for the first player deterministically sends it to state $s_1$, which yields a reward of $\pi_1^{\top}(\rno + \underline{r})\pi_2$ as before. Similarly, the worst model for the second player deterministically sends it to state $s_2$ which yields a reward of $\pi_1^{\top}(-\rno -\underline{r})\pi_2$ as before. Thus, the earlier proof then applies and shows the problem is PPAD-hard to solve.

\end{proof}

\end{document}